\keywords{Presburger arithmetic, quantifier elimination, counting
  quantifiers, non-unary quantifiers, decision procedure}
\newcommand{\COEFF}{\textsc{Coeff}}
\newcommand{\Prod}{\mathbf{P}}
\newcommand{\CONST}{\textsc{Const}}
\newcommand{\MOD}{\textsc{Mod}}
\newcommand{\bN}{{\mathbb{N}}}
\newcommand{\bZ}{{\mathbb{Z}}}
\newcommand{\cZ}{{\mathcal{Z}}}
\newcommand{\false}{\mathrm{f\!f}}
\newcommand{\lcm}{\mathop{\mathrm{lcm}}}
\newcommand{\true}{\mathrm{t\!t}}
\renewcommand{\ge}{\ensuremath{\geqslant}}
\renewcommand{\le}{\ensuremath{\leqslant}}
\newcommand{\FOMOD}{\mathrm{FO}\bigl[\exists^{(q,p)}x]}
\newcommand{\FOMODmany}{\mathrm{FO}\bigl[\exists^{(t,p)}\bar{x}]}
\newcommand{\FO}{\mathrm{FO}}
\newcommand{\CMOD}{\mathrm{C{+}MOD}}
\newcommand{\C}{\mathrm{FO}[\exists^{\ge c}\bar x,\exists^{=c}\bar x]}
\newcommand{\Cone}{\mathrm{FO}[\exists^{\ge c}x,\exists^{=c}x]}
\newcommand{\qd}{\mathrm{qd}}
\newcommand{\bd}{\mathrm{bd}}
\newcommand{\BD}{\mathrm{BD}}
\newcommand{\fullLogic}{\mathrm{FO}\bigl[\exists^{(t,p)}\bar{x},\exists^{\ge
    c}\bar{x},\exists^{=c}\bar{x}\bigr]}
\renewcommand{\bar}{\overline}
\newcommand{\links}{\bar z_{\mathrm{left}}}
\newcommand{\mitte}{\bar z_{\mathrm{middle}}}
\newcommand{\rechts}{\bar z_{\mathrm{right}}}
\newcommand{\eins}{\bar {z_1}}
\newcommand{\zwei}{\bar {z_2}}
\newcommand{\drei}{\bar {z_3}}
\newcommand{\lex}{\mathrm{lex}}
\begin{document}

\title[Non-unary counting quantifiers for Presburger arithmetic]{On Presburger arithmetic extended\texorpdfstring{\\}{  }with non-unary counting
  quantifiers\rsuper*}

\thanks{The authors thank Christian Schwarz from TU Ilmenau for
  proofreading the paper and for correcting two mistakes. They also
  thank the reviewers of this paper for pointing to some inaccuracies.}
  
\titlecomment{{\lsuper*}This work was partially supported by
    EGIDE/DAAD-Procope TAMTV. It is a considerably extended version of
    the first part of the extended abstract \cite{HabK15} from FoSSaCS
    2015}

\author[P.~Habermehl]{Peter Habermehl\lmcsorcid{0000-0002-7982-0946}}[a]
\author[D.~Kuske]{Dietrich Kuske}[b]

\address{{IRIF, Universit\'e Paris Cit\'e}, France}
\email{Peter.Habermehl@irif.fr}
\address{TU Ilmenau, Germany}
\email{Dietrich.Kuske@tu-ilmenau.de}

\begin{abstract}
  \noindent
  We consider a first-order logic for the integers with addition. This
  logic extends classical first-order logic by modulo-counting,
  threshold-counting and exact-counting quantifiers, all applied to
  tuples of variables (here, residues are given as terms while moduli
  and thresholds are given explicitly). Our main result shows that
  satisfaction for this logic is decidable in two-fold exponential
  space. If only threshold- and exact-counting quantifiers are
  allowed, we prove an upper bound of alternating two-fold exponential
  time with linearly many alternations. This latter result almost
  matches Berman's exact complexity of first-order logic without
  counting quantifiers. 

  To obtain these results, we first translate threshold- and
  exact-counting quantifiers into classical first-order logic in
  polynomial time (which already proves the second result). To handle
  the remaining modulo-counting quantifiers for tuples, we first
  reduce them in doubly exponential time to modulo-counting
  quantifiers for single elements. For these quantifiers, we provide a
  quantifier elimination procedure similar to Reddy and Loveland's
  procedure for first-order logic
  and analyse the growth
  of coefficients, constants, and moduli appearing in this
  process. The bounds obtained this way allow to restrict
  quantification in the original formula to integers of bounded size
  which then implies the first result mentioned above.

  Our logic is incomparable with the logic considered by
  Chistikov et al.\ in 2022.  They allow more general counting operations in
  quantifiers, but only unary quantifiers. The move from unary to
  non-unary quantifiers is non-trivial, since, e.g., the non-unary
  version of the H\"artig quantifier results in an undecidable theory.
\end{abstract}

\maketitle

\section{Introduction}

Presburger arithmetic is the first-order theory of the structure
$\cZ$, i.e., the integers with addition, comparison, binary relations
$\equiv_k$ (standing for equality modulo $k$) for all $k\ge2$, and all
constants $c\in\bZ$. Presburger~\cite{Pre30} developed a quantifier
elimination procedure for this theory and therefore showed its
decidability. The upper complexity bounds of three-fold exponential
time \cite{Opp78} and of two-fold exponential space \cite{FerR79} have
been shown before Berman~\cite{Ber80} proved the exact complexity to
be two-fold exponential alternating time with linearly many
alternations.  Further results in this direction concern the
complexity of fragments of Presburger arithmetic
\cite{RL78,Gra88,Sch97,Haa14}.

Classical first-order logic can be extended by allowing further
quantifiers besides $\exists$ and $\forall$. One such quantifier was
introduced by H\"artig in \cite{Hae62} and is therefore known as
H\"artig quantifier (usually denoted $I$, cf.\ \cite{HerKPV91} for a
survey on this H\"artig quantifier). The formula
$I x\colon\bigl(\varphi(x),\psi(x)\bigr)$ expresses the equality of
the number of witnesses $x$ for $\varphi(x)$ and for $\psi(x)$,
resp. Apelt, in \cite{Ape66}, considered this extension $\FO[I x]$ of
classical first-order logic for the structure of integers with
addition. He provides a system of axioms and derivation rules whose
completeness he proves using a quantifier elimination. Since the
system of axioms and the derivation relation are decidable, he infers
that the $\FO[I x]$-theory of the integers with addition is
decidable. Alternatively, this decidability follows since Apelt's
quantifier elimination is effective and the truth of quantifier free
statements is decidable.

Another possibility of extending classical first-order logic was
considered by Schweikardt \cite{Sch05} who added the
threshold-counting quantifier $\exists^{\ge t}x$ (here, $t$ is a term, $x$
a variable, and the formula $\exists^{\ge t}x\,\varphi$ says ``there
are at least $t$ witnesses $x$ for the formula $\varphi$''); it is not
difficult to see that this extension $\FO[\exists^{\ge t}x]$ is
equally expressive as Apelt's extension $\FO[Ix]$. She provided an
effective quantifier elimination procedure for the quantifier
$\exists^{\ge t}x$ implying the decidability. An alternative
quantifier elimination for $\FO[\exists^{\ge t}x]$ was given by
Chistikov et al.~\cite{ChiHM21,ChiHM22}.

It should be noted that we do not know any elementary upper bounds for
the quantifier elimination procedures from
\cite{Ape66,Sch05,ChiHM21,ChiHM22} for the logics $\FO[Ix]$ and
$\FO[\exists^{\ge t}x]$. Consequently, no elementary upper bounds for
the respective theories of the integers are known.

In our earlier conference paper \cite{HabK15}, we obtained such an
elementary upper bound for the logic $\FO[\exists^{(q,p)}x]$ (here,
$q$ and $p$ stand for natural numbers, $x$ for a variable, and a
formula of the form $\exists^{(q,p)}x\,\varphi$ expresses ``the number
of witnesses $x$ for $\varphi$ is congruent to $q$ modulo $p$''). More
precisely, we presented a quantifier elimination procedure for this
logic, analysed the size of coefficients, constants, and moduli
appearing in the resulting formula, and inferred that quantification
can be bounded to integers of at most triply-exponential absolute
value; as a result, the theory can be decided in doubly exponential
space which matches the best known upper bound for Presburger
arithmetic using deterministic Turing machines. Extending Klaedtke's
automata-based decision procedure for Presburger arithmetic
\cite{Kla08}, our conference paper also contains an automata-based
decision procedure for this logic that runs in triply exponential time
(which is the optimal time bound known for deterministic Turing
machines~\cite{Opp78}).

In \cite{ChiHM22}, Chistikov et al.\ analysed their quantifier
elimination procedure for the logic $\FO[\exists^{\ge t}x]$. For two
fragments, called ``$F$'' and ``monadically guarded PAC'',
respectively, they obtained elementary upper bounds for the decision
problems. The following results follow since the two logics are
contained in the two named fragments.
\begin{itemize}
\item The
  $\FO[\exists^{(t,p)}x,\exists^{\ge c}x,\exists^{=c}x]$-theory of the
  integers is decidable in doubly exponential space (here, $t$ stands
  for a term, $p$ and $c$ for natural numbers, and $x$ for a
  variable).
\item The $\FO[\exists^{\ge c}x,\exists^{=c}x]$-theory of the integers
  can be decided by an alternating Turing machine using doubly
  exponential time and linearly many alternations (in this logic, no
  modulo-counting quantifiers are allowed and the thresholds are given
  explicitly). More precisely, the number of alternations is not only
  bounded by the \emph{length}, but even by the \emph{depth} of the
  formula.
\end{itemize}
These two upper bounds coincide with the best known upper bounds for
Presburger arithmetic wrt.\ deterministic and alternating Turing
machines, resp.

It should be noted that all logics considered so far extend classical
first order logic by \emph{unary} quantifiers, i.e., the quantifiers
$I$, $\exists^{\ge t}$, and $\exists^{(t,p)}$ bind a single
variable. They can easily be extended to bind tuples of variables,
e.g., the formula
\[
  I(x',y')\colon\bigl((x'=0\land 0\le y'<z),(0\le x'<x\land 0\le
  y'<y)\bigr)
\]
expresses that the number $z$ of pairs $(0,y')$ satisfying $0\le y'<z$
equals the number $x\cdot y$ of pairs $(x',y')$ satisfying $0\le x'<x$
and $0\le y'<y$, i.e., $z=x\cdot y$. Hence, allowing this non-unary
H\"artig quantifier $I\bar{x}$ leads to an undecidable theory, the
resulting logic $\FO[I\bar{x}]$ does not possess effective quantifier
elimination, and the same applies for the non-unary version of the
threshold-counting quantifier $\exists^{\ge t}\bar{x}$. Chistikov et
al.\ ask in the introduction of \cite{ChiHM21} whether non-unary
counting quantifiers $\exists^{\ge c}\bar{x}$ and
$\exists^{=c}\bar{x}$ lead to (efficiently) decidable theories.  In
this paper, we answer this question in the affirmative proving that
the non-unary versions of the quantifiers $\exists^{(t,p)}\bar{x}$,
$\exists^{\ge c}\bar{x}$, and $\exists^{\ge c}\bar{x}$ (where the
threshold is given explicitly) behave much better than H\"artig's
quantifier $I$. Namely, we prove the two complexity bounds that follow
from the work by Chistikov et al.\ on the fragments ``$F$'' and
``monadically guarded PAC'' also for the non-unary quantifiers:
\begin{itemize}
\item The
  $\FO[\exists^{(t,p)}\bar{x},\exists^{\ge
    c}\bar{x},\exists^{=c}\bar{x}]$-theory of the integers is
  decidable in doubly exponential space (here, $t$ stands for a term,
  $p$ and $c$ for natural numbers, and $\bar{x}$ for a tuple of
  variables).
\item The $\FO[\exists^{\ge c}\bar{x},\exists^{=c}\bar{x}]$-theory of
  the integers can be decided by an alternating Turing machine using
  doubly exponential time and linearly many alternations. As opposed
  to the above mentioned result on the unary versions of these
  quantifiers, we cannot prove that the number of alternations is 
  bounded by the depth of the formula.
\end{itemize}
Despite the similarity of results, we cannot follow the route of proof
used by Chistikov et al.\ since they start from their handling of the
unary H\"artig quantifier which cannot be extended to its non-unary
version. Differently, we proceed as follows.
\begin{enumerate}
\item In polynomial time, we compute from a formula in the full
  logic $\fullLogic$ an equivalent formula in the fragment
  $\FOMODmany$, that is, non-unary threshold- and exact-counting
  quantifiers can be eliminated in polynomial time. This procedure
  does not intro\-duce new modulo-counting quantifiers; consequently,
  from a formula from $\C$, it computes an equivalent formula from
  classical first-order logic $\FO$. Since the ``block depth'' (a
  notion defined later, it is bounded by the length of the formula) of
  the resulting formula is linear in the size of the original one, we
  obtain that the satisfaction relation for $\C$ is decidable in
  two-fold exponential alternating time with $O(n)$ many
  alternations. Note that this is very close to Berman's optimal
  result for $\FO$ where only $n$ alternations are necessary
  \cite{Ber80}.
\item We provide a quantifier elimination procedure for the logic
  $\FOMODmany$ and therefore, by the first result, for
  the full logic $\fullLogic$.  It follows that this full logic agrees
  in expressive power with classical first-order logic $\FO$.
\item Analysing the size of constants, coefficients, and moduli
  appearing in this procedure, we can restrict quantification to
  integers of bounded size. As a result, we get a decision procedure
  in two-fold exponential space for the full logic $\fullLogic$. Note
  that this equals the best known upper bound using Turing machines
  for classical first-order logic $\FO$ from~\cite{FerR79}.
  
\end{enumerate}

\section{Preliminaries}
We consider $0$ a natural number.

\paragraph{The structure}
The universe of the structure $\cZ$ is the set of integers $\bZ$. On
this set, we consider the constants $c\in\bZ$, the binary
function~$+$, the binary relation~$<$, and the binary
relations~$\equiv_k$ for $k\ge2$ (with $m\equiv_k n$ iff $k$ divides $m-n$).

\paragraph{Terms and assignments}
We will use the countable set $\{x_i\mid i\in\bN\}$ of variables.
\emph{Terms} are defined by induction: $x_i$ and $c$ are terms for
$i\in\bN$ and $c\in\bZ$, and $as$ and $s+t$ are terms whenever
$a\in\bZ$ and $s$ and $t$ are terms (we write $-s$ for the term
$(-1)\cdot s$ and $s-t$ for $s+(-1)\cdot t$).

An \emph{assignment} is a function $f\colon\{x_i\mid i\in\bN\}\to\bZ$
that assigns integers to variables. In a natural way, an assignment
$f$ is extended to a function (also denoted $f$) that maps terms to
integers. Two terms $s$ and $t$ are \emph{equivalent} if $f(s)=f(t)$
holds for all assignments~$f$; we write $s\Leftrightarrow t$ to denote
that $s$ and $t$ are equivalent.\footnote{Usually, one writes
  $s\equiv t$ for the equivalence of terms and formulas, but this
  might lead to confusion in this paper because of the central role of
  the relations $\equiv_k$ for $k\ge2$.}

A term is in \emph{normal form} if it is of the form
$t'=\bigl(\cdots(a_1x_{i_1}+a_2x_{i_2})+\cdots a_n x_{i_n}\bigr)+c$
with $i_1<i_2<\cdots <i_n$, $a_1,\dots,a_n\neq 0$, and $c\in\bZ$. Note
that, for any term $t$, there exists a unique equivalent term in
normal form. For a term $t$ with normal form $t'$, we call $a_j$ the
coefficient of $x_{i_j}$ and $c$ the constant; note that coefficients
are non-zero, but the constant can be zero (in which case we call the
term $t$ \emph{constant-free}).

If the normal form of a term $t$ does not contain the variable $x_i$,
then we call $t$ an \emph{$x_i$-free term}.

\paragraph{Atomic formulas}

Expressions of the form $s<t$ (also written $t>s$) and $s\equiv_k t$
for terms $s$ and $t$ and a natural number $k\ge1$ are called
\emph{atomic formulas}. We extend an assignment $f$ to a function
(also denoted $f$) that maps atomic formulas to the truth values
$\true$ and $\false$: $f(s<t)=\true$ iff $f(s)<f(t)$ and
$f(s\equiv_k t)=\true$ iff $k$ divides $f(s)-f(t)=f(s-t)$.  Two atomic
formulas $\alpha$ and $\beta$ are \emph{equivalent} if
$f(\alpha)=f(\beta)$ holds for all assignments $f$; we write
$\alpha\Leftrightarrow\beta$ for this fact.

Let $x$ be a variable. An atomic formula $\varphi$ is
\emph{$x$-separated} if there are an $x$-free term~$t$ and a
non-negative integer $a\in\bN$ such that $\varphi$ is of the form
$ax < t$, $t < ax$, or $ax \equiv_k t$. If $t$ is an $x$-free term,
then, e.g., the formula $0x\equiv_k t$ is $x$-separated. Since $0$ is
the normal form of $0x$, also the formulas $0\equiv_kt$, $0<t$, and
$t<0$ are considered to be $x$-separated (despite the fact that it
does not mention $x$ at all).  It follows that, for any atomic formula
$\alpha$ and any variable $x$, there exists an equivalent
$x$-separated atomic formula.

An atomic formula is \emph{constant separated} if it is of the form
$c < s$, $s<c$, or $s \equiv_k c$ where $s$ is a constant-free term
and $c\in\bZ$ a constant. Again, for any atomic formula $\alpha$,
there exists an equivalent constant separated atomic formula.

\paragraph{Formulas}

Formulas of classical first-order logic are built from atomic formulas
using the quantifier~$\exists$ (applied to single variables) and the
Boolean combinators negation, conjunction, implication, and
equivalence. We extend this classical logic by 
quantifiers that allow \emph{threshold-} ($\exists^{\geq c}$)
and \emph{exact-counting} ($\exists^{=c}$) as
well as \emph{modulo counting} ($\exists^{(t,p)}$), all applied to \emph{tuples} of
variables.
\begin{defi}
  \emph{Formulas} of the logic $\fullLogic$ are defined by induction:
  \begin{enumerate}[(1)]
  \item Any atomic formula is a formula.
  \item If $\varphi$ and $\psi$ are formulas, then so are
    $\lnot\varphi$, $\varphi\land\psi$, $\varphi\lor\psi$,
    $\varphi \to \psi$ and
    $\varphi\leftrightarrow\psi$.
  \item If $\varphi$ is a formula and $y$ a variable, then
    $\exists y\colon\varphi$ is a formula.
  \item If $\varphi$ is a formula, $t$ a term, $y_1,\dots,y_\ell$ (with
    $\ell\ge1$) distinct variables, and $p\ge 2$ a natural number, then
    $\exists^{(t,p)}(y_1,\dots,y_\ell)\colon\varphi$ is a formula.
  \item If $\varphi$ is a formula, $y_1,\dots,y_\ell$ (with
    $\ell\ge1$) are distinct variables, and
    $c\ge1$ is a  natural number, then
    $\exists^{\ge c} (y_1,\dots,y_\ell)\colon\varphi$ and
    $\exists^{=c} (y_1,\dots,y_\ell)\colon\varphi$ are formulas.
  \end{enumerate}
  The size $|\varphi|$ of a formula $\varphi$ is the amount of space needed to
  write it down (we assume integers to be written in binary and
  variables to have size one).
\end{defi}

For certain fragments of the logic
$\fullLogic$
we use the following naming scheme.
\begin{itemize}
\item $\FO[\cdots,\exists^{(t,p)}\bar{x}\cdots]$ denotes that item (4)
  can be used in the construction of formulas without any
  restriction. $\FO[\cdots,\exists^{(q,p)}\bar{x}\cdots]$ limits the
  use of item (4) to the case that $t$ is a constant from $\bN$ (and
  not an arbitrary term), and $\FO[\cdots,\exists^{(q,p)}x\cdots]$
  requires, in addition, that (4) is only used with $\ell=1$, i.e., we
  can use the unary modulo-counting quantifiers with
  constant residue, only.
\item $\FO[\cdots,\exists^{\ge c}\bar{x},\exists^{=c}\bar{x}\cdots]$
  denotes that item (5) can be used in the construction of formulas
  without any restriction. Similarly to the above,
  $\FO[\cdots,\exists^{\ge c}x,\exists^{=c}x\cdots]$ restricts the
  use of item (5) to the case $\ell=1$, i.e., we can use the unary
  threshold- and exact-counting quantifiers, only.
\end{itemize}

\begin{rem}
  The logics $\Cone$, $\FOMOD$, and
  $\FO[\exists^{(q,p)}x,\exists^{\ge c}x,\exists^{=c}x]$ are often
  denoted $\mathrm{C}$, $\mathrm{FO}{+}\mathrm{MOD}$ and $\CMOD$,
  respectively.
\end{rem}
We  can further extend an assignment $f$ in the
standard way to a function (also denoted~$f$) that maps formulas to the
truth values $\true$ and $\false$. 

Before we define the semantics of quantified formulas, we need the
following definitions. For $\ell\ge1$, $\bar y=(y_1,\dots,y_\ell)$ an
$\ell$-tuple of distinct variables, and
$\bar a=(a_1,\dots,a_\ell)\in\bZ^\ell$, we let $f_{\bar y/\bar a}$ be
the assignment that maps the variable $y_i$ to the value $a_i$ (for
all $1\le i\le\ell$) and, apart from this, coincides with the
assignment~$f$. In other words, $f_{\bar y/\bar a}(y_i)=a_i$ for all
$1\le i\le\ell$ and $f_{\bar y/\bar a}(x)=f(x)$ for all variables
$x\notin\{y_1,\dots,y_\ell\}$.

To define the semantics of the quantifiers, let $\varphi$ be a
formula, $t$ a term, $ y_1,\dots,y_\ell$ distinct variables, $p\ge2$,
and $c\ge1$. With $\bar y=(y_1,\dots,y_\ell)$, we then define the
following:
\begin{itemize}
\item $f\bigl(\exists y_1\colon \varphi\bigr)=\true$ iff
   there exists $a\in\bZ$ such that
  $f_{y_1/a}(\varphi)=\true$.
\item $f\bigl(\exists^{(t,p)} \bar y\colon \varphi\bigr)=\true$ iff
  the set $\{\bar a\in\bZ^\ell\mid f_{\bar y/\bar a}(\varphi)=\true\}$
  is finite and
    \[
      \Bigl|
       \bigl\{\bar a\in\bZ^\ell\colon f_{\bar y/\bar a}(\varphi)=\true\bigr\}
      \Bigr|
      \equiv_p f(t)\,.
    \]
  In other words, the formula $\exists^{(t,p)}\bar y\colon\varphi$
  expresses that the number of witnessing tuples $\bar{y}$ for
  $\varphi$ is (modulo~$p$) congruent to the value of the term $t$.
\item $f\bigl(\exists^{\ge c} \bar y\colon \varphi\bigr)=\true$ iff
    \[
      \Bigl|
       \bigl\{\bar a\in\bZ^\ell\colon f_{\bar y/\bar a}(\varphi)=\true\bigr\}
      \Bigr|
      \ge c\,.
    \]
  In other words, the formula $\exists^{\ge c}\bar y\colon\varphi$
  expresses that the number of witnessing tuples $\bar{y}$ for
  $\varphi$ is at least~$c$ (and possibly infinite). With $\ell=1$, $\exists^{\ge1}$ is the
  usual existential quantifier $\exists$. This easy observation allows
  us to consider $\exists$ as an abbreviation and therefore to skip
  item (3) in the definition of fragments of the full logic
  $\fullLogic$, provided item (5) is allowed with $\ell=1$.
\item $f\bigl(\exists^{=c} \bar y\colon \varphi\bigr)=\true$ iff
  $\bigl|\{\bar a\in\bZ^\ell\mid f_{\bar y/\bar
    a}(\varphi)=\true\}\bigr|=c$.  
\end{itemize}

Two formulas $\alpha$ and $\beta$ are \emph{equivalent} if
$f(\alpha)=f(\beta)$ holds for all assignments $f$; we write
$\alpha\Leftrightarrow\beta$ for this fact.

Clearly, the formula $\exists^{=c}\bar{y}\colon\varphi$ is equivalent
to
$\exists^{\ge c}\bar{y}\colon\varphi\land\lnot\exists^{\ge
  c+1}\bar{y}\colon\varphi$, i.e., we can eliminate any occurrence of
$\exists^{=c}$ without changing the semantics of a formula. But this
elimination may increase the size of the formula exponentially.

Note that $f(s<t\lor s>t)=\true$ iff $f(s)\neq f(t)$ since $<$ is a
strict linear order on the set $\bZ$. Therefore, we will write $s=t$
as abbreviation of the formula $\lnot(s<t\lor s>t)$. Similarly,
$s\le t$ stands for $\lnot s>t$ and sequences of comparisons like
$s_1\le s_2\le s_3$ denote the conjunction
$s_1\le s_2\land s_2\le s_3$. Similarly, we write $\forall x\,\varphi$
as abbreviation for $\lnot\exists x\,\lnot\varphi$.

We define the quantifier-depth $\qd(\varphi)$ of formulas
$\varphi\in\fullLogic$ by induction:
\begin{itemize}
\item If $\varphi$ is an atomic formula, then $\qd(\varphi)=0$.
\item If $\varphi=\lnot\alpha$, then $\qd(\varphi)=\qd(\alpha)$.
\item If
  $\varphi\in\{\alpha\land\beta, \alpha\lor\beta, \alpha\to\beta,
  \alpha\leftrightarrow\beta\}$, then
  $\qd(\varphi)=\max\bigl\{\qd(\alpha),\qd(\beta)\bigr\}$.
\item If $\varphi=\exists x\colon\alpha$, then
  $\qd(\varphi)=1+\qd(\alpha)$.
\item If $\varphi$ is any of the formulas
  $\exists^{\ge c}(y_1,\dots,y_\ell)\colon\alpha$,
  $\exists^{=c}(y_1,\dots,y_\ell)\colon\alpha$, or
 $\exists^{(t,p)}(y_1,\dots,y_\ell)\colon\alpha$,
  then $\qd(\varphi)=\ell+\qd(\alpha)$.
\end{itemize}
Note that the quantifier depth depends on the length of tuples of
variables that follow a quantifier, i.e., it increases by $\ell$
whenever we prepend a quantifier $\exists^{\dots}(y_1,\dots,y_\ell)$ to
a formula.

The overall goal of this paper is to obtain an elementary decision procedure for
the full logic $\fullLogic$. As a first step, we will transform a
formula $\alpha$ from $\fullLogic$ into an equivalent formula $\beta$
from $\FOMOD$, that will later be transformed into an
equivalent quantifier-free formula~$\gamma$. To control the form of
the resulting formulas $\beta$ and $\gamma$, we define the following
sets.

\begin{defi}
  Let $\varphi \in \fullLogic$ be a formula. Then
  $\COEFF(\varphi)\subseteq\bZ$ is the set of integers $0,\pm1,\pm2$
  and $\pm a$ where $a$ is a coefficient in the term $s_1-s_2$ for some
  atomic formula $s_1<s_2$ from~$\varphi$. Similarly,
  $\CONST(\varphi)\subseteq\bZ$ is the set of integers $0,\pm1$, $\pm2$, and
  $\pm c$ where $c$ is the constant term in $s_1-s_2$ for some atomic
  formula $s_1<s_2$ from $\varphi$.

  The set $\MOD(\varphi)\subseteq\bN$ contains $1$ and all integers
  $k\ge1$ such that an atomic formula of the form $s_1\equiv_k s_2$ or
  some quantifier $\exists^{(t,k)}$ appears in~$\varphi$. Finally,
  $\Prod(\varphi)=\COEFF(\varphi)\cup\MOD(\varphi)$.
\end{defi}

\begin{exa}
  Consider the following formula $\varphi$:
    \begin{align*}
       & \exists^{(17x+25,23)}(y_1,y_2)\colon 2y_1<3y_2\land 4y_2<56\\
      \land
       & \exists^{\ge343}y\colon {-}13x+2<3x+y-2\land 57x\equiv_{13}2y+27
    \end{align*}
  Then we have
    \begin{align*}
      \COEFF(\varphi)&=\{0,\pm1,\pm2,\pm3,\pm4,\pm16\}\,,\\
      \CONST(\varphi)&=\{0,\pm1,\pm2,\pm56,\pm4\}\,\\
      \MOD(\varphi)&=\{1,13,23\}\,,\text{ and}\\
      \Prod(\varphi)&=\{0,\pm1,\pm2,\pm3,\pm4,\pm16,23,13\}\,.                 
    \end{align*}
\end{exa}
Note that $\COEFF(\varphi)$ and $\CONST(\varphi)$ depend on
subformulas of the form $s<t$, but not on subformulas of the form
$s\equiv_k t$. On the other hand, $\MOD(\varphi)$ only depends on
subformulas of the form $s\equiv_k t$ and on the moduli $k$ of
modulo-counting quantifiers $\exists^{(t,k)}$ appearing in $\varphi$.

\subsection{An excursion into Presburger arithmetic}

Berman proved in \cite{Ber80} that Presburger arithmetic is complete
for the class STA$(*,2^{2^{O(n)}},n)$ of all problems that can be
solved by an alternating Turing machine in doubly exponential time
with $n$ alternations. Here, we are mainly interested in the proof of
the upper bound. He presents this proof in a very sketchy way
essentially saying that Ferrante and Rackoff have shown in
\cite{FerR79} that quantification can be reduced to integers of at
most triply exponential size (which can be represented in doubly
exponential space). It should be noted that this latter result holds
for any formula, no matter whether it is in prenex normal form or it
contains the Boolean connective $\leftrightarrow$. Berman's result
actually means that the algorithm by Ferrante and Rackoff can be
implemented on an alternating Turing machine with the above time and
alternation bound. Looking into the algorithm from \cite{FerR79}, one
sees that the formula is first transformed into prenex normal form and
that then, the alternation of the Turing machine equals the quantifier
alternation depth of the resulting formula. Note that turning a
formula into prenex normal form is possible in polynomial time
whenever the Boolean connectives are restricted to $\lnot$, $\lor$,
$\land$, and $\to$. Differently here, we also allow the connective
$\leftrightarrow$ which gives a convenient way to write certain
formulas succinctly. But in the presence of this connective, we do not
know how to compute equivalent formulas in prenex normal form in
polynomial time.

For later reference, we now sketch a proof that, also in the presence
of $\leftrightarrow$, Berman's upper bound holds. Since the
computation of prenex normal forms is too costly, we need another
bound for the alternation. To this aim, we define the \emph{block
  depth} of a formula. Intuitively, the block depth $\bd^{\FO}(\alpha)$ of
the formula $\alpha\in\FO$ bounds the number of blocks of
existential quantifiers along any path in the syntax tree of $\alpha$.

\begin{defi}\hfill
  \begin{itemize}
  \item $\BD^{\FO}_0$ is the set of atomic formulas.
  \item For $n\ge1$, the set $\BD^{\FO}_n$ contains the formulas of the form
    $\exists x_1\,\exists x_2\,\dots\,\exists x_m\colon\beta$ where
    $m\ge0$ and $\beta$ is a Boolean combination (possibly using
    $\lnot$, $\land$, $\lor$, $\to$, and $\leftrightarrow$) of
    formulas from $\BD^{\FO}_{n-1}$.
  \item The \emph{block depth} $\bd^{\FO}(\alpha)$ of a formula $\alpha \in \FO$ is
    the minimal natural number $n$ with $\alpha\in\BD^{\FO}_n$.
  \end{itemize}
\end{defi}
Note that the block depth of any formula is at most half of its depth
(which is the maximal length of a branch in the syntax tree) and
therefore half of its length.

With this definition in place, we can now formulate Berman's upper
bound for first-order logic in presence of the Boolean connective
$\leftrightarrow$.

\begin{thm}\label{T-Berman}
  There is an alternating Turing machine that, on input of a closed
  formula $\varphi\in \FO$, decides in time doubly exponential in
  $|\varphi|$ with $2\,\bd^{\FO}(\varphi)\le|\varphi|$ alternations whether
  $\varphi$ holds or not.
\end{thm}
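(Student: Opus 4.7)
The plan is to combine Ferrante--Rackoff's small-witness bound with an alternating Turing machine that descends the syntax tree of $\varphi$ using a ``promise and verify'' strategy for Boolean combinations; this keeps the alternation count linear in the block depth even in the presence of $\leftrightarrow$, while avoiding a costly conversion to prenex normal form.

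First I would recall from \cite{FerR79} that any closed $\FO$-formula has witnesses of triply exponential absolute value (hence doubly exponential bit-length), so an ATM can write any such witness in doubly exponential time. Next I would describe a recursive ATM procedure that, for each subformula $\psi$ with current assignment $f$ and requested verdict $b\in\{\true,\false\}$, decides whether $f(\psi)=b$. Structurally, $\psi\in\BD^{\FO}_n$ has the form $\exists x_1\dots\exists x_m\colon\beta$ where $\beta$ is a Boolean combination (possibly using $\leftrightarrow$) of subformulas $\psi_1,\dots,\psi_k$ from $\BD^{\FO}_{n-1}$. To verify $\psi$, the machine enters an $\exists$ state, guesses bounded values $a_1,\dots,a_m$ and a truth value $v_j$ for each occurrence of some $\psi_j$ in $\beta$, and deterministically checks that $\beta$ evaluates to $\true$ under the $v_j$'s. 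It then switches to a $\forall$ state and universally branches over the pairs $(j,v_j)$, recursing on $\psi_j$ with requested verdict $v_j$. To refute $\psi$, the dual procedure is applied: in a $\forall$ state receive adversarial $\bar a$ and adversarial $\bar v$ that make $\beta$ true, then in an $\exists$ state pick an index $j$ and ask the recursive call to contradict the promise $v_j$ on $\psi_j$.

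The key alternation accounting is that a recursive call requesting $\true$ on a subformula $\psi_j$ starts in $\exists$ state, while a call requesting $\false$ starts in $\forall$ state; at each level of block depth the ATM uses exactly one state swap (from $\exists$ guessing to $\forall$ checking, or the dual), and at most one additional swap when entering the next level. Hence the total number of alternations is at most $2\,\bd^{\FO}(\varphi)$, which is bounded by $|\varphi|$ by the remark following the definition of block depth. Per level the ATM guesses witnesses and truth-value vectors of at most doubly exponential bit-length and then performs a deterministic Boolean evaluation of size $O(|\varphi|)$; since the recursion depth is at most $\bd^{\FO}(\varphi)$, the total time is doubly exponential in $|\varphi|$.

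The main obstacle is handling $\leftrightarrow$ cheaply. A direct rewriting such as $\alpha\leftrightarrow\beta\;\Leftrightarrow\;(\alpha\land\beta)\lor(\lnot\alpha\land\lnot\beta)$ would duplicate subformulas and blow up the formula size when iterated, destroying the polynomial handling needed before the ATM is invoked. The promise-and-verify design side-steps this: at every block-depth level the Boolean skeleton is evaluated once, deterministically, on the guessed truth-value vector, so $\leftrightarrow$ contributes no more than any other binary connective to the running time and to the alternation count.
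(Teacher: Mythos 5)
Your proposal is correct and follows essentially the same approach as the paper's proof sketch: both invoke the Ferrante--Rackoff small-witness bound, guess bounded integers for the leading existential block together with a truth-value assignment (the set $X$) for the block-depth-$(n-1)$ subformulas of the Boolean combination, and then branch universally to verify each promised truth value, giving $2\,\bd^{\FO}(\varphi)$ alternations. The paper's sketch only spells out the ``verify'' direction, whereas you make the dual ``refute'' branch explicit; that is exactly the completion the sketch implicitly relies on, not a different route.
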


\begin{proof}[Proof sketch]
  The alternating algorithm runs as follows:
  \begin{itemize}
  \item If $\varphi$ is atomic, then validity of the closed formula
    $\varphi$ is checked deterministically.
  \item Now let $\varphi=\exists x_1\dots \exists x_m\colon\psi$ where
    $\psi$ is a Boolean combination of formulas
    $\sigma_1,\dots,\sigma_\ell$ of block depth at most $n$. Then the
    alternating algorithm first guesses $m$ integers $k_1,\dots,k_m$
    of bounded size (which suffices by \cite{FerR79}) as well as a set
    $X\subseteq\{1,2,\dots,\ell\}$. Then, it branches universally
    checking that
    \begin{enumerate}
    \item the Boolean combination $\psi$ holds while assuming $X$ is the set
      of indices $i$ such that $\sigma_i(k_1,\dots,k_m)$ holds,
    \item for all $i\in X$, the closed formula
      $\sigma_i(k_1,\dots,k_m)$ holds, and
    \item for all $j\in \{1,2,\dots,\ell\}\setminus X$, the closed formula
      $\sigma_j(k_1,\dots,k_m)$ does not hold.
    \end{enumerate}
    Thus, the algorithm first branches existentially and then
    universally before checking whether the corresponding formulas
    $\sigma_i$ of block depth $\le n$ hold or not. \qedhere
  \end{itemize}
\end{proof}

\section{Existential and unary modulo-counting quantifiers suffice}

In this section, we will transform a formula from $\fullLogic$ into an
equivalent one from $\FOMOD$. Note that the logic $\FOMODmany$ is an
intermediate logic between these two logics:
\begin{itemize}
\item In the logic $\fullLogic$, we can use the non-unary
  threshold- and exact-counting quantifiers
  $\exists^{\ge c}(x_1,\dots,x_\ell)$ and
  $\exists^{=c}(x_1,\dots,x_\ell)$ while $\FOMODmany$ does not
  allow threshold- and exact-counting quantification.
\item $\FOMODmany$ allows non-unary modulo-counting
  quantifiers $\exists^{(t,p)}(x_1,\dots,x_\ell)$ with $t$ an
  arbitrary term while $\FOMOD$ allows only unary
  modulo-counting quantification of the form $\exists^{(q,p)}x$ with
  $q\in\bN$.
\end{itemize}
We will transform a formula from $\fullLogic$ first into an equivalent
formula from $\FOMODmany$, i.e., we will eliminate threshold counting
quantifiers. In a second step, the resulting formula from $\FOMODmany$
will be translated into an equivalent one from $\FOMOD$, i.e., we will
eliminate non-unary modulo-counting quantifiers as well as terms as
residue. Both these transformations will leave the sets of
coefficients, constants, and moduli unchanged; the first
transformation will be done in polynomial time while the second one
uses doubly exponential time.

\subsection{Elimination of threshold- and exact-counting quantifiers}
  
Here, we give the transformation from $\fullLogic$ to $\FOMODmany$.
We will provide a polynomial-time transformation that does not change
the sets $\COEFF$, $\CONST$, and $\MOD$. In addition, this
transformation will not introduce new modulo-counting quantifiers so
that formulas from $\C$ get translated into equivalent formulas
$\varphi$ from first-order logic\footnote{Stefan G\"oller (private
  communication) explained to us a polynomial translation of formulas
  from $\FO[\exists^{\ge c}x]$ to $\FO$. The work in this section is
  an extension and elaboration of his idea.} whose validity can then
be checked using Theorem~\ref{T-Berman}.

We now come to the translation, i.e., to the elimination of threshold-
and exact-counting quantifiers for tuples. First note that the
formulas $\exists^{=c}\bar y\colon\varphi$ and
$\exists^{\ge c}\bar y\colon\varphi\land\lnot\exists^{\ge c+1}\bar
y\colon\varphi$ are clearly equivalent, i.e., semantically, there is
no need for the exact-counting quantifier $\exists^{=c}$. But applying
this replacement to all exact-counting quantifiers in a formula
increases the size of the formula exponentially. Similarly,
$\exists^{\ge c}\bar y\colon\varphi$ is equivalent to
  \begin{align*}
    \exists \bar{y_1}\,\exists \bar{y_2}\,\dots\,\exists
    \bar{y_c}\colon & \bigwedge_{1\le i<j\le c} \lnot \bar{y_i}=\bar{y_j}\\
                    &\land \forall \bar y\colon
                      \biggl(\Bigl(\bigvee_{1\le i\le c}\bar y=\bar{y_i}\Bigr)\to
  \varphi\biggr)
  \end{align*}
(where $(y^1,y^2,\dots,y^\ell)=(y_i^1,\dots,y_i^\ell)$ abbreviates
$\bigwedge_{1\le j\le \ell}y^j=y^j_i$). Since the constant $c$ is
written in binary, already the prefix of existential quantifiers is of
exponential length, i.e., also this transformation incurs an
exponential blow-up in formula size. Finally note that the non-unary
quantifiers $\exists \bar y$ and $\forall \bar y$ are equivalent to
$\exists y^1\,\exists y^2\cdots\exists y^{\ell}$ and
$\forall y^1\,\forall y^2\cdots\forall y^{\ell}$, respectively.

Thus, we saw that any formula from $\fullLogic$ can be transformed
into an equivalent one from $\FOMODmany$ (and similarly for $\C$ and
$\FO$), but at the cost of an exponential size increase.  Our first
result shows that this size increase can be avoided. 

The crucial part in this construction is the elimination of a
threshold- or exact-counting quantifier in front of a formula from
$\FOMODmany$ or from $\FO$, respectively. This construction adapts a
binary search strategy. For instance, the formula
$\exists^{=2c} y\colon y_0\le y<y_1\land\varphi(y)$ expresses
that the interval\footnote{All intervals in this paper are
  considered as sets of integers or of tuples of integers.}
$[y_0,y_1)$ contains precisely $2c$ many numbers $y$ satisfying
$\varphi$. This is equivalent to saying that there exists some number
$y_{\frac{1}{2}}$ in the said interval such that both intervals
$[y_0,y_{\frac{1}{2}})$ and $[y_{\frac{1}{2}},y_1)$ contain precisely $c$
numbers satisfying $\varphi$. The constructed formula then contains
the conjunction of the two formulas
$\exists^{= c}y\colon \bigl(y_0\le y<y_{\frac{1}{2}}\land
\varphi(y)\bigr)$ and
$\exists^{=c}y\colon \bigl(y_{\frac{1}{2}}\le y<y_1\land
\varphi(y)\bigr)$. Therefore, using this binary-search idea alone does
not prevent an exponential blow-up.  The solution is to replace the
conjunction of these two formulas by an expression of the form
\[
   \forall
  a,b\colon \Bigl((a,b)\in\{(y_0,y_{\frac{1}{2}}),(y_{\frac{1}{2}},y_1)\}\to \exists^{=c} y\colon
  \bigl(a\le y<b\land \varphi(y)\bigr)\Bigr)\,.
\]
This idea (known as Fischer-Rabin-trick) goes back to \cite{FisR74}
where it is attributed to earlier work by Fischer and Meyer as well
as by Strassen without specifying concrete publications.

A similar idea transforms the formula
$\exists^{=2c+1}y\colon y_0\le y<y_1\land\varphi(y)$ into 
\[
  \exists y_{\frac{1}{2}}\colon
  \begin{array}[t]{cl}
    & y_0<y_{\frac{1}{2}}<y_1\\
    \land & \varphi(y_{\frac{1}{2}})\\
    \land & \forall a,b\colon
            \Bigl((a,b)\in\{(y_0,y_{\frac{1}{2}}),(y_{\frac{1}{2}}+1,y_1)\}\to
            \exists^{=c} y\colon \bigl(a\le y<b\land
            \varphi(y)\bigr)\Bigr)\,.
  \end{array}
\]
Note that this results in an exponential increase in formula size
since the formula $\varphi$ is mentioned twice. To avoid this size
increase, we ``postpone'' the evaluation of the formula
$\varphi(y_{\frac{1}{2}})$. Slightly more precisely, the above
construction proceeds recursively since in both cases, we have the
subformula
$\exists^{=c}y\colon \bigl(a\le y<b\land\varphi(y)\bigr)$. Along
this recursion, we collect in some set $V$ all the variables
$y_{\frac{1}{2}}$ seen in between that are required to satisfy
$\varphi$. At the very end of the recursion, we write down the formula
\[
  \forall y\colon\bigl((\bigvee_{x\in V} x=y) \to \varphi(y)\bigr)
\]
expressing all the ``postponed'' requirements at once.

The above idea is based on the linear order on the integers. If we
consider the non-unary quantifier $\exists^{=c}\bar{y}$, the role
of this linear order $\le$ is played by the lexicographic order on
tuples~$\bar{y}$.

The proof of the following lemma formalises the above ideas.  The
crucial requirement is that the formula and its block depth shall grow
only by a small summand (the latter makes sense only in case the
formula $\varphi$ does not contain any modulo-counting quantifiers,
i.e., belongs to $\FO$).

\begin{lem}\label{L-CMOD-to-FOMOD}
  Let $\alpha=\exists^{\ge c}\bar{y}\colon\varphi$ or
  $\alpha=\exists^{=c}\bar{y}\colon\varphi$ with
  $\varphi\in\FOMODmany$. There exists a formula $\psi\in\FOMODmany$
  with $\psi\iff\alpha$, $\CONST(\psi)=\CONST(\alpha)$,
  $\COEFF(\psi)=\COEFF(\alpha)$, and $\MOD(\psi)=\MOD(\alpha)$.

  Furthermore, $|\psi|\le|\varphi|+O(\ell\cdot\log c)$ where $\ell$ is
  the length of the tuple of variables $\bar{y}$ and the formula
  $\psi$ can be computed from $\alpha$ in time
  $|\varphi|+O(\ell\cdot\log c)$.

  If $\varphi$ belongs to $\FO$, then also $\psi\in\FO$ and the
  block depth of $\psi$ is at most
  $\bd^{\FO}(\varphi)+2\lceil\log(c)\rceil+2$.
\end{lem}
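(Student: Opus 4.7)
The plan is to combine the binary-search idea sketched in the excerpt with the Fischer-Rabin trick and with the \emph{postponement} of $\varphi$-checks, so that in the final formula $\psi$ the subformula $\varphi$ appears only once. As a preparation I would lift the order on $\bZ$ used in the excerpt to the lexicographic order $\le_{\lex}$ on $\bZ^\ell$ and introduce the lex-successor $\bar{y}^+:=(y_1,\dots,y_{\ell-1},y_\ell+1)$; both are definable from $<$ and $=$ by quantifier-free formulas of size $O(\ell)$ using only coefficients $\pm 1$ and constants $0,\pm 1$, which already belong to every $\COEFF(\alpha)$ and $\CONST(\alpha)$, so the transformation will introduce no new coefficient, constant, or modulus. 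The decomposition $[\bar{a},\bar{b})_{\lex}=[\bar{a},\bar{m})_{\lex}\cup\{\bar{m}\}\cup[\bar{m}^+,\bar{b})_{\lex}$ whenever $\bar{a}\le_{\lex}\bar{m}<_{\lex}\bar{b}$ is the lex-analogue of the split used in the excerpt and will drive the recursion.

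For $\alpha=\exists^{\ge c}\bar{y}\colon\varphi$ I would define by induction on $c$ a formula $\chi^{\ge}_c(\bar{u},\bar{v};L)$ parameterised by a lex-interval $[\bar{u},\bar{v})$ and a finite list $L$ of postponed $\ell$-tuples of variables. At the base, $\chi^{\ge}_1(\bar{u},\bar{v};L)$ is $\exists\bar{y}\colon\bar{u}\le_{\lex}\bar{y}<_{\lex}\bar{v}\land\forall\bar{z}\colon(\bar{z}=\bar{y}\lor\bigvee_{\bar{m}\in L}\bar{z}=\bar{m})\to\varphi(\bar{z})$, which is the only textual occurrence of $\varphi$. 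For $c=2k$ I existentially pick a splitter $\bar{m}\in[\bar{u},\bar{v}]_{\lex}$ and apply Fischer-Rabin, $\forall\bar{u}',\bar{v}'\colon(\bar{u}',\bar{v}')\in\{(\bar{u},\bar{m}),(\bar{m},\bar{v})\}\to\chi^{\ge}_k(\bar{u}',\bar{v}';L)$, so that the recursive call $\chi^{\ge}_k$ appears only once textually. For $c=2k+1$ the clauses are $(\bar{u},\bar{m})$ and $(\bar{m}^+,\bar{v})$ and the recursion is with $L\cup\{\bar{m}\}$, thereby registering $\bar{m}$ as a later $\varphi$-check. I then set $\psi:=\exists\bar{u},\bar{v}\colon\chi^{\ge}_c(\bar{u},\bar{v};\emptyset)$. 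An induction on $c$ shows that the $c$ tuples assembled along the tree (the leaf-$\bar{y}$'s together with the odd-split $\bar{m}$'s) live in pairwise disjoint intervals and are forced to be $\varphi$-witnesses by the base implication, yielding $\psi\Leftrightarrow\exists^{\ge c}\bar{y}\colon\varphi$.

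For $\alpha=\exists^{=c}\bar{y}\colon\varphi$ I would use the same skeleton but carry the outer bounds $\bar{a},\bar{b}$ as two additional parameters $\chi^{=}_c(\bar{a},\bar{b};\bar{u},\bar{v},L)$ and replace the base implication by the restricted biconditional
\[
  \forall\bar{z}\colon\bigl[\bar{u}\le_{\lex}\bar{z}<_{\lex}\bar{v}\lor\bigvee_{\bar{m}\in L}\bar{z}=\bar{m}\lor\bar{z}<_{\lex}\bar{a}\lor\bar{z}\ge_{\lex}\bar{b}\bigr]\to\bigl(\varphi(\bar{z})\leftrightarrow(\bar{z}=\bar{y}\lor\bigvee_{\bar{m}\in L}\bar{z}=\bar{m})\bigr),
\]
taking $\psi:=\exists\bar{a},\bar{b}\colon\chi^{=}_c(\bar{a},\bar{b};\bar{a},\bar{b},\emptyset)$. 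The main delicate point, which I expect to be the hardest part of the proof, is to check by induction that the instances of this biconditional produced at different leaves are \emph{mutually consistent}: every odd-split middle $\bar{m}$ lies outside each sibling subinterval but belongs to the list $L$ of every descendant (so all leaves agree on $\varphi(\bar{m})$), and every $\bar{z}$ outside $[\bar{a},\bar{b})_{\lex}$ meets the antecedent while failing the consequent (so it is forced to be a non-witness). Together these conditions force the $\varphi$-witnesses to be exactly the $c$ chosen tuples. The remaining quantitative bounds are then routine: because Fischer-Rabin makes the recursion linear, each of the $\lceil\log c\rceil$ levels contributes $O(\ell)$ symbols and $|L|$ stays below $\lceil\log c\rceil$, while the single base case contributes $|\varphi|+O(\ell\log c)$, giving $|\psi|\le|\varphi|+O(\ell\log c)$ with the same bound on the construction time; the block-depth bound follows by counting one $\exists$-block ($\exists\bar{m}$) and one $\forall$-block (from $\lnot\exists\lnot$) per recursion level plus one of each at the base, yielding $\bd^{\FO}(\psi)\le\bd^{\FO}(\varphi)+2\lceil\log c\rceil+2$.
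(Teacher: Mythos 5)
Your proposal is correct and follows the same central strategy as the paper's proof (lex-order on $\bZ^\ell$, binary search on $c$, Fischer--Rabin trick so the recursive subformula appears only once textually, and postponement of the $\varphi$-checks into a set $L/V$ of tuples), but you package the two quantifiers differently. The paper constructs a \emph{single} recursive family $\psi_{n,V}$, whose base cases $\psi_{0,V},\psi_{1,V}$ already carry the biconditional $\varphi\leftrightarrow\bar x\in V\cup\{\cdot\}$, and then reuses it for both quantifiers via the top-level formulas ``some interval $[\links,\rechts)$ contains exactly $c$ witnesses'' (for $\exists^{\ge c}$) and ``there is a core interval $[\eins,\zwei]$ such that \emph{every} superinterval $[\links,\rechts)$ contains exactly $c$ witnesses'' (for $\exists^{=c}$). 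You instead build two families: $\chi^{\ge}_c$, whose base case only has a one-directional implication and therefore directly encodes a lower bound, and $\chi^{=}_c$, which carries the outer bounds $\bar a,\bar b$ as extra parameters and hard-codes ``no witnesses outside $[\bar a,\bar b)$'' into the antecedent of the base biconditional. Both routes are sound: your one-sided $\chi^{\ge}_c$ avoids the biconditional entirely for the threshold case, and your ``outside'' clause in $\chi^{=}_c$ replaces the paper's $\forall$-over-superintervals trick; the trade-off is that the paper's single family is more economical and lets it absorb $\exists^{\ge c}$ and $\exists^{=c}$ with two tiny wrappers. The consistency concern you flag for $\chi^{=}_c$ (that different leaves' biconditionals must agree on the postponed middles $\bar m\in L$ and on the complement of $[\bar a,\bar b)$) is real but is exactly what your construction enforces, since every leaf's antecedent covers $L$ and the complement and every leaf's consequent evaluates the same way there. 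The quantitative bounds you obtain (size $|\varphi|+O(\ell\log c)$, block depth $\bd^{\FO}(\varphi)+2\lceil\log c\rceil+2$, no change to $\COEFF$, $\CONST$, $\MOD$) match the paper's.

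Two minor points worth tightening. First, for the odd split you should state explicitly that $\bar m$ is chosen with $\bar u\le_\lex\bar m<_\lex\bar v$ (strictly below $\bar v$), so that $\{\bar m\}$, $[\bar u,\bar m)$, and $[\bar m^+,\bar v)$ are genuinely pairwise disjoint and $\bar m\in[\bar a,\bar b)$. Second, your closed-form lex-successor $\bar y^+=(y_1,\dots,y_{\ell-1},y_\ell+1)$ is a nice simplification over the paper's quantified formula $S(\bar y,\bar z)$, but note that it is only the successor because $\bZ$ has no maximal element; spelling this out would make the argument self-contained.
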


\begin{proof}
  Before formalising the above idea, we need some notational
  preparation. For an $\ell$-tuple of variables
  $\bar x=(x_1,\dots,x_\ell)$ and an assignment $f$, we write
  $f(\bar x)$ for the tuple
  $\bigl(f(x_1),f(x_2),\dots,f(x_\ell)\bigr)\in\bZ^\ell$. Furthermore
  (being a bit pedantic), we write $\exists\bar x$ for
  $\exists x_1\,\exists x_2\,\cdots\,\exists x_\ell$ and similarly for
  the universal quantifier.
  
  For $\ell\ge1$, let $\le^\ell_\lex$ denote the lexicographic order on
  $\bZ^\ell$. By induction on $\ell$, we construct formulas
  $(y_1,\dots,y_\ell)<^\ell_\lex(z_1,\dots,z_\ell)$ as follows:
  \begin{itemize}
  \item $y_1<^1_\lex z_1$ stands for $y_1<z_1$.
  \item $(y_1,\dots,y_\ell) <^\ell_\lex (z_1,\dots,z_\ell)$ stands for
    $y_1<z_1\lor
    y_1=z_1\land(y_2,\dots,y_\ell) <^{\ell-1}_\lex (z_2,\dots,z_\ell)
    $.
  \end{itemize}
  Then, for any assignment $f$, we have
  $f(\bar y <^\ell_\lex \bar z)=\true$ iff
  $f(\bar y) <^\ell_\lex f(\bar z)$. Similarly, the
  formulas
    \begin{align*}
      (\bar y \le^\ell_\lex \bar z) &= \lnot(\bar z <^\ell_\lex \bar y)
      \intertext{and}
      S(\bar y,\bar z)&=\bigl(\bar y<^\ell_\lex \bar z\land
                        \lnot\exists\bar x\colon
                         (\bar y <^\ell_\lex \bar x <^\ell_\lex\bar z)\bigr)
    \end{align*}
  hold under the assignment $f$ iff
  $f(\bar y) \le^\ell_\lex f(\bar z)$ and $f(\bar y)$ is the immediate
  predecessor of $f(\bar z)$ in $(\bZ^\ell,\le^\ell_\lex)$,
  respectively. Later, we will need that all these formulas $\beta$
  are of size $O(\ell)$ and satisfy
  $\CONST(\beta)=\COEFF(\beta)=\{0,\pm1,\pm2\}$ and
  $\MOD(\beta)=\{1\}$.

  We fix fresh $\ell$-tuples of variables $\links$, $\mitte$, $\rechts$,
  $\eins$, $\zwei$, and $\drei$ that have no
  variable in common.

  By induction on $n\ge0$, we will now construct for any finite set
  $V$ of $\ell$-tuples of variables a formula $\psi_{n,V}$ with the
  following property: Let $f$ be an assignment such that
  \begin{itemize}
  \item $f(\links)<^\ell_\lex f(\rechts)$ and
  \item no tuple $\bar v$ from $V$ satisfies
    $f(\links) \le^\ell_\lex f(\bar v) <^\ell_\lex f(\rechts)$.
  \end{itemize}
  In other words, the interval
  $\bigl[f(\links),f(\rechts)\bigr)\subseteq(\bZ^\ell,\le^\ell_\lex)$ is
  not empty, but contains none of the values $f(\bar v)$ for
  $\bar v\in V$.  Our construction of the formula $\psi_{n,V}$ will
  ensure that it holds under such an assignment~$f$, i.e.,
  $f(\psi_{n,V})=\true$, iff
  \begin{itemize}
  \item for all tuples $\bar v$ from $V$, we have
    $f\bigr(\varphi(\bar v)\bigr)=\true$ (more precisely:
    $f\bigl(\forall\bar x\colon (\bar x=\bar v\to\varphi)\bigr)=\true$)
    and
  \item there are precisely $n$ tuples $\bar m\in\bZ^\ell$ such that
    $f(\links) \le^\ell_\lex \bar m <^\ell_\lex f(\rechts)$ and
   $f_{\bar x/\bar m}(\varphi)=\true$.    
  \end{itemize}
  In this construction, it will be convenient to write $\bar w\in V$
  for $\bigvee_{\bar v\in V}\bar v=\bar w$, i.e., for the semantical
  property that $f(\bar w)$ is one of the tuples of integers
  $f(\bar v)$ with $\bar v\in V$.
  
  We start with $n=0$ and  $n=1$:
    \begin{align*}
      \psi_{0,V} &=\forall \bar x\colon
                   \bigl((\links \le^\ell_\lex \bar x <^\ell_\lex \rechts \lor \bar x\in V)\to
                   (\varphi\leftrightarrow\bar x\in V)\bigr)\\
      \psi_{1,V} &=\exists \mitte\colon
                   \begin{array}[t]{cl}
                     &\links \le^\ell_\lex \mitte <^\ell_\lex \rechts\\
                     \land
                     & \forall \bar x\colon\bigl((\links \le^\ell_\lex \bar x <^\ell_\lex \rechts \lor \bar x\in V)\to
                    (\varphi\leftrightarrow\bar x\in V\cup\{\mitte\})\bigr)
                   \end{array}
    \end{align*}

  For the induction step,
  we now construct $\psi_{2n,V}$ and $\psi_{2n+1,V}$ with $n\ge1$. The
  former is the simpler case:
    \[
      \psi_{2n,V}=\exists \eins,\zwei,\drei\colon
      \begin{array}[t]{cl}
        & \links=\eins <^\ell_\lex \zwei <^\ell_\lex \drei=\rechts\\
        \land
        &\forall \links,\rechts\colon
          \Bigl((\links,\rechts)\in\bigl\{(\eins,\zwei),(\zwei,\drei)\bigr\}
          \to \psi_{n,V}\Bigr)
      \end{array}
    \]
  Note that $(\links,\rechts)$ is a $2\ell$-tuple of variables so
  that $(\links,\rechts)\in\bigl\{(\eins,\zwei),(\zwei,\drei)\bigr\}$ is
  shorthand for the formula
    \[
      (\links=\eins\land \rechts=\zwei)\lor
      (\links=\zwei\land\rechts=\drei)\,.
    \]  
  The idea of the formula $\psi_{2n,V}$ is to divide the interval
  $\bigl[f(\links),f(\rechts)\bigr)=\bigl[f(\eins),f(\drei)\bigr)$
  into two subintervals $\bigl[f(\eins),f(\zwei)\bigr)$ and
  $\bigl[f(\zwei),f(\drei)\bigr)$ and to verify that both these
  intervals satisfy the formula~$\psi_{n,V}$, i.e., contain in particular
  precisely $n$ witnesses for $\varphi$.
  
  To also construct $\psi_{2n+1,V}$, we need another $\ell$-tuple
  $\zwei'$ of fresh variables and set
    \[
      \psi_{2n+1,V}=\exists \eins,\zwei',\zwei,\drei\colon\!\!\!\!\!\!\!\!
      \begin{array}[t]{cl}
        & \links=\eins <^\ell_\lex \zwei' <^\ell_\lex \zwei<^\ell_\lex \drei=\rechts\land S(\zwei',\zwei)\\
        \land
        &\forall \links,\rechts\colon
          \Bigl((\links,\rechts)\in\bigl\{(\eins,\zwei'),(\zwei,\drei)\bigr\}
          \to \psi_{n,V\cup\{\zwei'\}}\Bigr)\,.
      \end{array}
    \]
  Here, the idea is to divide the interval
  $I=\bigl[f(\links),f(\rechts)\bigr)=\bigl[f(\eins),f(\drei)\bigr)$
  into the half-open interval $I_1=\bigl[f(\eins),f(\zwei')\bigr)$ and
  the open interval $I_2=\bigl(f(\zwei'),f(\drei)\bigr)$ and to verify
  that both these intervals satisfy the
  formula~$\psi_{n,V\cup\{\zwei'\}}$, i.e., contain in particular
  precisely $n$ witnesses for~$\varphi$, and that $f(\zwei')$ satisfies
  $\varphi$. Since $I$ is the disjoint union of the intervals $I_1$,
  $\{f(\zwei')\}$, and $I_2$, this ensures that the interval $I$
  contains precisely $2n+1$ witnesses for $\varphi$.

  Then the formula
    \[
      \exists\links,\rechts\colon
          \bigl((\links <^\ell_\lex \rechts\land\psi_{c,\emptyset}\bigr)
    \]
  is equivalent to $\exists^{\ge c}\bar x\colon\varphi$ since it
  expresses that some interval contains precisely $c$ witnesses for
  $\varphi$.  Furthermore, the formula
    \[
      \exists\eins,\zwei\, \forall \links,\rechts\colon
          \bigl((\links\le^\ell_\lex \eins\land \zwei\le^\ell_\lex \rechts)
          \to \psi_{c,\emptyset}\bigr)
    \]
  is equivalent to $\exists^{=c}\bar x\colon\varphi$ since it
  expresses that for some interval, any superinterval contains
  precisely $c$ witnesses for $\varphi$.

  It remains to analyse the size of the resulting formula as well as
  the block depth in case $\varphi\in\FO$.
  
  To estimate the size of $\psi_{c,\emptyset}$, note the following:
  \begin{itemize}
  \item The size of the formulas $\psi_{0,V}$ and $\psi_{1,V}$ is of
    the form $|\varphi|+O\bigl(\ell\cdot\log(c)\bigr)$ since we allow
    the Boolean connective $\leftrightarrow$ in our formulas and since
    the size of $V$ is bounded by $\lceil\log(c)\rceil$ (the formula
    size doubles if we consider $\leftrightarrow$ as abbreviation).
  \item The size increase when moving from $\psi_{n,V}$ to
    $\psi_{2n,V}$ is bounded by a summand of size $O(\ell)$ and the
    same applies to the construction of $\psi_{2n+1,V}$ from
    $\psi_{n,V\cup\{\zwei'\}}$.
  \end{itemize}
  It follows that
  $|\psi_{c,\emptyset}|\le |\varphi|+\varkappa\cdot\ell\cdot\log(c)$
  for some constant $\varkappa$. One sees easily that the same holds
  for the formula $\psi$ and that it can be constructed in time
  $|\varphi|+O\bigl(\ell\cdot\log(c)\bigr)$.

  Now suppose $\varphi\in\FO$. Since in the construction, we only
  introduce classical existential quantifiers, we obtain
  $\psi_{c,\emptyset}\in\FO$.  We want to analyse the block depth of
  $\psi_{c,\emptyset}$. First note that 
    \begin{align*}
      \bd^{\FO}(\psi_{0,V}),\bd^{\FO}(\psi_{1,V}) &\le \bd^{\FO}(\varphi)+2\,,\\
      \bd^{\FO}(\psi_{2n,V}) &\le \bd^{\FO}(\psi_{n,V})+2\,,\text{ and}\\
      \bd^{\FO}(\psi_{2n+1,V}) &\le \bd^{\FO}(\psi_{n,V\cup\{\zwei'\}})+2\,.
    \end{align*}
    It follows that
    $\bd^{\FO}(\psi_{c,V})\le\bd^{\FO}(\varphi)+2\cdot\lceil\log(c)\rceil$.  In
    the final step, the block depth increases by at most $2$. Hence we
    obtain $\bd^{\FO}(\psi)\le\bd^{\FO}(\varphi)+2\lceil\log(c)\rceil+2$.
\end{proof}

The above lemma can be applied iteratively to all threshold- and
exact-counting quantifiers. Hence, from a formula from $\fullLogic$,
we obtain an equivalent formula in $\FOMODmany$, and from a formula
from $\C$, we obtain a formula from $\FO$. In order to bound the block
depth of this formula from $\FO$, we extend its definition to formulas
from $\C$ as follows:
\begin{defi}\hfill
  \begin{itemize}
  \item $\BD_0$ is the set of atomic formulas.
  \item For $n\ge1$, the set $\BD_n$ contains the formulas of the
    following forms:
    \begin{itemize}
    \item $\exists x_1\,\exists x_2\,\dots\,\exists x_m\colon\beta$
      where $m\ge0$ and $\beta$ is a Boolean combination (possibly
      using $\lnot$, $\land$, $\lor$, $\to$, and $\leftrightarrow$) of
      formulas from $\BD_{n-1}$
    \item $\exists^{\ge c}\overline{x}\colon\beta$ or
      $\exists^{=c}\overline{x}\colon\beta$ where $\beta$ is a Boolean
      combination of formulas from $\BD_{n-2\lceil\log_2 c\rceil-2}$
  \end{itemize}
  \item The \emph{block depth} $\bd(\alpha)$ of a formula $\alpha \in \C$ is
    the minimal natural number $n$ with $\alpha\in\BD_n$.
  \end{itemize}
\end{defi}

Note that the block depth of a formula from 
$\C$ is at most twice the length of the formula (since the constants
$c$ in $\exists^{\ge c}$ and $\exists^{=c}$ are written in
binary). Furthermore, if $\alpha\in\FO$, then
$\bd^{\FO}(\alpha)=\bd(\alpha)$.

\begin{prop}\label{P-CMOD-to-FOMOD}
  From a formula $\varphi\in\fullLogic$, one can construct in time
  polynomial in $|\varphi|$ an equivalent formula
  $\psi\in\FOMODmany$.

  In addition, we have $\COEFF(\psi)\subseteq\COEFF(\varphi)$,
  $\CONST(\psi)\subseteq\CONST(\varphi)$, and
  $\MOD(\psi)\subseteq\MOD(\varphi)$.

  Furthermore, if $\varphi\in\C$, then the resulting formula $\psi$
  belongs to $\FO$ and the block depth $\bd(\psi)$ of $\psi$ equals
  that of $\varphi$.
\end{prop}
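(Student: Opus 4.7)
The plan is to induct on the structure of $\varphi$, processing quantifiers from innermost to outermost and using Lemma~\ref{L-CMOD-to-FOMOD} precisely at the point where a threshold- or exact-counting quantifier becomes outermost in its remaining subformula. Concretely, I would define the translation $T\colon\fullLogic\to\FOMODmany$ by recursion: $T(\alpha)=\alpha$ for atomic $\alpha$; $T$ commutes with $\lnot,\land,\lor,\to,\leftrightarrow$; $T(\exists y\colon\varphi)=\exists y\colon T(\varphi)$; $T(\exists^{(t,p)}\bar y\colon\varphi)=\exists^{(t,p)}\bar y\colon T(\varphi)$; and for $\alpha\in\{\exists^{\ge c}\bar y\colon\varphi,\exists^{=c}\bar y\colon\varphi\}$, first replace $\varphi$ by $T(\varphi)\in\FOMODmany$ and then invoke the Lemma on the resulting formula to obtain an equivalent $\FOMODmany$-formula. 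Equivalence of $T(\varphi)$ and $\varphi$ is immediate from the inductive hypothesis and the guarantee of the Lemma. The invariants on $\COEFF$, $\CONST$, and $\MOD$ propagate through the Boolean and quantifier cases trivially, and are preserved by the Lemma; since $\COEFF(\varphi)\supseteq\{0,\pm1,\pm2\}$ and $1\in\MOD(\varphi)$ always, the auxiliary formulas $<^\ell_\lex$ and $S$ used inside the Lemma contribute nothing new.

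For the time bound, I would argue by a single accounting: each application of the Lemma to the subformula $\exists^{\ge c}\bar y\colon T(\varphi)$ (resp.\ the exact-counting variant) produces output of size at most $|T(\varphi)|+O(\ell\cdot\log c)$ in time of the same order. Summing the contributions $O(\ell_i\log c_i)$ over all counting quantifiers in the original formula, each is bounded by $|\varphi|$ and their number is at most $|\varphi|$, so $|\psi|=O(|\varphi|^2)$ and the total running time is polynomial. The bookkeeping for the Boolean and existential cases is linear.

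For the last statement, assume $\varphi\in\C$. Since the Lemma does not introduce any modulo-counting quantifier when applied to an $\FO$-formula, a straightforward induction shows $T(\varphi)\in\FO$. The block-depth equality $\bd(\psi)=\bd(\varphi)$ is exactly what the definition of $\BD_n$ for $\C$ was tailored to yield: if $\exists^{\ge c}\bar y\colon\beta\in\BD_n$ then $\beta$ is a Boolean combination of formulas of block depth $\le n-2\lceil\log_2 c\rceil-2$, so by the Lemma's bound $\bd^{\FO}(\psi_{\text{new}})\le\bd^{\FO}(T(\beta))+2\lceil\log c\rceil+2\le n$. An inductive argument through the Boolean and existential cases, together with the fact that $\bd^{\FO}=\bd$ on $\FO$, gives $\bd(\psi)\le\bd(\varphi)$; the reverse inequality is trivial since $\psi$ is logically equivalent to $\varphi$ and block depth of an $\FO$-formula equivalent to $\varphi\in\C$ is at least $\bd(\varphi)$ (by the inductive clause choosing the minimal $n$).

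The main obstacle I anticipate is a clean accounting of the polynomial blow-up: the Lemma is additive per invocation, so the key is to argue that the translation $T$, when unrolled, invokes the Lemma once per counting quantifier and that the outer-most $\log c$ factor is dominated by the length of the original formula; care is needed because $T(\varphi)$ is larger than $\varphi$, but the size increase from the Lemma at a given node depends only on the counting constant and tuple length at that node, not on the size of $T(\varphi)$, so the amortised bound $O(|\varphi|^2)$ holds. The block-depth argument is delicate but essentially algebraic, matching the constants in the $\BD_n$ definition to the Lemma's bound.
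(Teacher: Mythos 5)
Your proposal follows the same route as the paper: iterate Lemma~\ref{L-CMOD-to-FOMOD} once per threshold- or exact-counting quantifier, processing innermost first so that the Lemma's hypothesis ($\varphi\in\FOMODmany$) is satisfied at each invocation, and then sum the additive $O(\ell_i\log c_i)$ size increments over all counting quantifiers to get a polynomial bound. The observations that the auxiliary lexicographic formulas contribute only the always-present coefficients $\{0,\pm1,\pm2\}$ and modulus $1$, and that the per-node increment depends only on $\ell_i$ and $c_i$ and not on the size of the already-translated subformula, are both correct and match the paper's accounting; your $O(|\varphi|^2)$ output-size bound and polynomial running time are right.

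The one genuine gap is in your ``reverse inequality is trivial'' step for the block-depth claim. You assert that any $\FO$-formula equivalent to $\varphi\in\C$ has block depth at least $\bd(\varphi)$, and attribute this to the minimality in the definition of $\BD_n$. That is false: block depth is a purely syntactic measure, and logically equivalent formulas can have very different block depths. For instance, $\exists^{\ge 1}x\colon x<0$ is equivalent to the atomic formula $0<1$, which has block depth $0$, while the $\C$-definition assigns the former block depth $\geq 2$. Minimality in ``the minimal $n$ with $\alpha\in\BD_n$'' quantifies over decompositions of the fixed syntactic object $\alpha$, not over all equivalent formulas. The paper's own proof establishes only that block depth does not increase ($\bd(\psi)\le\bd(\varphi)$), which is all that is needed for Corollary~\ref{C-complexity-of-C}; the proposition's wording ``equals'' is not supported by a lower bound in the paper either, and your attempt to supply one is the part that does not hold up. Your forward direction, matching the $2\lceil\log_2 c\rceil+2$ offset in the definition of $\BD_n$ against the Lemma's block-depth bound, is sound and is exactly what the paper (tersely) does.
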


\begin{proof}
  Let $\varphi_0=\varphi$ contain $n$ threshold- or exact-counting
  quantifiers. We construct, inductively, formulas $\varphi_{i+1}$
  from $\varphi_i$ using Lemma~\ref{L-CMOD-to-FOMOD} that contain one
  threshold- or exact-counting quantifier less. When constructing
  $\varphi_{i+1}$, suppose we eliminate a quantifier of the form
  $\exists^{\ge c_i}(y_1,\dots,y_{\ell_i})$ or
  $\exists^{=c_i}(y_1,\dots,y_{\ell_i})$. Then $\varphi_{i+1}$ can be
  constructed from $\varphi_i$ in time
  $|\varphi_i|+O\bigl(\ell_i\cdot\log(c_i)\bigr)$. Since
  $\sum_{0\le i<n}\ell_i \le |\varphi|$ and
  $\sum_{0\le i<n}\log(c_i) \le |\varphi|$, the construction of
  $\varphi_n$ can be carried out in time polynomial in $|\varphi|$.

  Now suppose $\varphi\in\C$. Then the formula $\varphi_n$ belongs to
  $\FO$. Furthermore, when moving from $\varphi_i$ to $\varphi_{i+1}$,
  the block depth does not increase.
\end{proof}

From Berman's upper bound for Presburger arithmetic, we get
immediately the following for the logic $\C$, i.e., the fragment of
$\fullLogic$ without modulo-counting quantifiers.
\begin{cor}\label{C-complexity-of-C}
  Satisfaction of a closed formula $\varphi\in\C$ can be decided in
  doubly exponential alternating time with linearly many alternations.
\end{cor}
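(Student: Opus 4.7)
The plan is to chain Proposition~\ref{P-CMOD-to-FOMOD} and Theorem~\ref{T-Berman} together. Given a closed formula $\varphi\in\C$, I would first apply Proposition~\ref{P-CMOD-to-FOMOD} to obtain, in time polynomial in $|\varphi|$, an equivalent closed formula $\psi\in\FO$ with $\bd(\psi)=\bd(\varphi)$. Validity of $\varphi$ then reduces to validity of $\psi$, and, since $\psi\in\FO$, we have $\bd^{\FO}(\psi)=\bd(\psi)=\bd(\varphi)$.

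Next I would use the remark immediately after the definition of $\bd$, which gives $\bd(\varphi)\le 2|\varphi|$ for every formula $\varphi\in\C$ (the factor $2$ accounts for the binary encoding of thresholds $c$ inside $\exists^{\ge c}$ and $\exists^{=c}$). Combined with the previous step, this yields $\bd^{\FO}(\psi)\le 2|\varphi|$. At the same time, the polynomial-time construction of Proposition~\ref{P-CMOD-to-FOMOD} guarantees $|\psi|\le p(|\varphi|)$ for some fixed polynomial $p$.

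Finally, I would feed $\psi$ to the alternating Turing machine provided by Theorem~\ref{T-Berman}. That machine decides validity of $\psi$ in time doubly exponential in $|\psi|$, hence doubly exponential in $|\varphi|$, while using at most $2\,\bd^{\FO}(\psi)\le 4|\varphi|$ alternations, i.e., $O(|\varphi|)$ many. Together with the initial polynomial-time translation, we obtain the claimed alternating algorithm for $\C$.

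The real work has already been done in Lemma~\ref{L-CMOD-to-FOMOD} and Theorem~\ref{T-Berman}, so there is no substantial obstacle to overcome; the only point that needs care is verifying that the block-depth bookkeeping of Proposition~\ref{P-CMOD-to-FOMOD} (where eliminating a quantifier $\exists^{\ge c}\bar{x}$ or $\exists^{=c}\bar{x}$ via the Fischer--Rabin construction raises the $\FO$-block depth by $2\lceil\log_2 c\rceil+2$) exactly matches the discount built into the definition of $\bd$ on $\C$-formulas, so that one really obtains $\bd^{\FO}(\psi)=\bd(\varphi)$ and not a larger quantity. Once this is observed, the corollary follows.
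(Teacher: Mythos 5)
Your proof is correct and follows essentially the same route as the paper: translate $\varphi\in\C$ into an equivalent $\psi\in\FO$ via Proposition~\ref{P-CMOD-to-FOMOD}, observe that $\bd^{\FO}(\psi)=\bd(\psi)=\bd(\varphi)\le 2|\varphi|$, and then invoke Theorem~\ref{T-Berman}. The paper's own proof states exactly this chain, just more tersely; your version makes the block-depth bookkeeping explicit, which is the only point requiring care.
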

\begin{proof}
  The transformation of $\varphi$ into an equivalent closed formula
  $\psi$ from $\FO$ increases the size of the formula only
  polynomially and the resulting block depth belongs to
  $O(|\varphi|)$. Hence the claim follows from Berman's
  Theorem~\ref{T-Berman}.
\end{proof}

Somewhat surprisingly, the above result says that adding the
quantifiers $\exists^{\ge c}\bar{x}$ and $\exists^{=c}\bar{x}$ does
not increase the complexity of the decision procedure; for the unary
version of the above logic, i.e., for $\Cone$, this was already
observed in \cite{ChiHM22}.

\subsection{Elimination of non-unary modulo-counting quantifiers}

Here, we give the transformation from $\FOMODmany$ to $\FOMOD$.
We will provide a transformation that can be computed in doubly
exponential time and does not change the sets $\COEFF$, $\CONST$, and
$\MOD$, nor the quantifier depth.

The crucial task in this section is to express a non-unary
quantification $\exists^{(t,p)}(y_1,\dots,y_\ell)$ (where the
remainder is given as a term $t$) using only unary modulo-counting
quantifications where the remainder is given as a constant. The first
step is obvious: Using a case distinction, we replace
$\exists^{(t,p)}\bar{y}\colon\varphi$ by the disjunction of all
formulas $t\equiv_p r\land\exists^{(r,p)}\bar{y}\colon\varphi$ for
$0\le r<p$. As a second step, one has to eliminate the quantification
over a tuple $\bar{y}$.  We explain the basic idea using the formula
$\exists^{(0,2)}(y_1,y_2)\colon \rho(y_1,y_2)$ where $\rho$ is some
formula and $R$ is the set of pairs of integers satisfying $\rho$. We
have to express that $R$ is finite and its number of elements is even.

Assuming $R$ to be finite, its size is even iff the number of elements
$y_1$ with
\[
  \Bigl|\bigl\{y_2\bigm| (y_1,y_2)\in R\bigr\}\Bigr|\text{ odd}
\]
is even. This can be expressed by the formula
\[
  \exists^{(0,2)}y_1\,\exists^{(1,2)}y_2\colon \rho(y_1,y_2)\,.
\]

Further, $R$ is finite iff its number of elements is even or odd. But
this would not eliminate the non-unary quantification. Alternatively,
$R$ is finite iff it is bounded, i.e., if
\[
  \exists z\,\forall y_1\,\forall y_2\,\bigl(\rho(y_1,y_2)\to |y_1|,|y_2|\le z\bigr)
\]
holds. Although being a simple formula, its quantifier rank is larger
than that of the formula we started with.  Yet another
characterisation of finiteness of $R$ is ``only finitely many elements
can be extended to a tuple from $R$ and no element can be extended in
infinitely many ways''. The following formula expresses precisely
this:
\begin{align*}
  & \exists^{(0,2)}y_1\,\exists y_2\colon \rho(y_1,y_2)
    \lor
    \exists^{(1,2)}y_1\,\exists y_2\colon \rho(y_1,y_2)\\
  \land
  & \forall y_1\,
    \Bigl(
    \exists^{(0,2)}y_2\colon \rho(y_1,y_2)
    \lor
    \exists^{(1,2)}y_2\colon \rho(y_1,y_2)
    \Bigr)
\end{align*}
The proof of the following lemma formalises this idea (and extends it
to other moduli and remainder given as terms). In other words, it
shows how to eliminate a single non-unary modulo-counting quantifier.

\begin{lem}\label{L-FOMOD-to-FOMOD-1}
  Let $\alpha=\exists^{(t,p)}\bar y\colon \varphi$ with
  $\varphi\in\FOMOD$. There exists a formula $\psi\in\FOMOD$
  with $\psi\iff\alpha$, $\CONST(\psi)=\CONST(\alpha)$,
  $\COEFF(\psi)=\COEFF(\alpha)$, $\MOD(\psi)=\MOD(\alpha)$, and
  $\qd(\psi)=\qd(\alpha)$.

  Furthermore, $\psi$ can be constructed from $\alpha$ in time
  $O\bigl(p^{p\cdot\ell}\cdot|\alpha|\bigr)$
  where $\ell$ is the length
  of the tuple $\bar y$.
\end{lem}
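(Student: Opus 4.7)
The plan is to proceed in two phases: first replace the term $t$ with a constant residue, and then reduce the arity $\ell$ by induction. For the first phase, I case-split on the value of $t$ modulo $p$:
\[
  \alpha \iff \bigvee_{r=0}^{p-1}\bigl(t\equiv_p r \,\wedge\, \exists^{(r,p)}\bar{y}\colon\varphi\bigr).
\]
The new atomic formulas $t\equiv_p r$ introduce no comparisons (so $\COEFF$ and $\CONST$ are unchanged) and only reuse the modulus $p\in\MOD(\alpha)$. It thus remains to eliminate $\exists^{(r,p)}\bar y\colon\varphi$ for each fixed $r\in\{0,\dots,p-1\}$.

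For the second phase, I induct on $\ell\ge 1$; the base case $\ell=1$ is already in $\FOMOD$. For $\ell\ge 2$, write $\bar y=(y_1,\bar y')$ with $\bar y'=(y_2,\dots,y_\ell)$, let $R\subseteq\bZ^\ell$ be the witness set of $\varphi$ with fibers $R_{y_1}\subseteq\bZ^{\ell-1}$, and set $M_i=\{y_1 : R_{y_1}\text{ is finite and }|R_{y_1}|\equiv_p i\}$ for $0\le i<p$. The key observation is that $R$ is finite with $|R|\equiv_p r$ iff (i) every fiber $R_{y_1}$ is finite, (ii) only finitely many $y_1$ have non-empty fiber, and (iii) there exist $j_1,\dots,j_{p-1}\in\{0,\dots,p-1\}$ with $\sum_{i=1}^{p-1} i\,j_i\equiv_p r$ such that $|M_i|$ is finite and $\equiv_p j_i$ for every $i\ge 1$; note that $M_0$ may be infinite, since $y_1$'s with $|R_{y_1}|$ a multiple of $p$ contribute $0$ to the sum modulo~$p$. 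These three items are captured, respectively, by $\forall y_1\bigvee_{j=0}^{p-1}\exists^{(j,p)}\bar y'\colon\varphi$, by $\bigvee_{k=0}^{p-1}\exists^{(k,p)}y_1\colon\exists y_2\cdots\exists y_\ell\colon\varphi$, and by the disjunction over admissible tuples $(j_1,\dots,j_{p-1})$ of the conjunction $\bigwedge_{i=1}^{p-1}\exists^{(j_i,p)}y_1\colon\exists^{(i,p)}\bar y'\colon\varphi$; the semantics of the outer unary modulo-counting quantifiers over $y_1$ automatically enforces the required finiteness of each relevant set of $y_1$-values. Every remaining non-unary modulo-counting subformula has arity $\ell-1$ and acts on a copy of $\varphi$, so the inductive hypothesis applies.

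It remains to verify the invariants and the complexity bound. No atomic formula $s<t$ is introduced, so $\COEFF$ and $\CONST$ are preserved; the only modulus used is $p\in\MOD(\alpha)$; and along each branch a single unary quantifier over $y_1$ sits on top of an arity-$(\ell-1)$ modulo-counting quantifier over $\varphi$, giving depth $1+(\ell-1)+\qd(\varphi)=\qd(\alpha)$. For size, each recursive step replaces one arity-$\ell$ quantifier over $\varphi$ by a Boolean combination of $O(p^p)$ arity-$(\ell-1)$ such subformulas (the combinatorial factor arising from the $p^{p-1}$ admissible tuples times $p-1$ conjuncts), so unfolding $\ell$ times yields $O(p^{p\ell})$ copies of $\varphi$ and the claimed time bound $O(p^{p\ell}\cdot|\alpha|)$. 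The main technical subtlety is the correct handling of finiteness of $R$: it must be split into \emph{fiber} finiteness and \emph{column} finiteness, because a single counting congruence cannot detect an infinite column all of whose fibers have size divisible by $p$.
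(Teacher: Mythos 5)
Your proposal is correct and follows essentially the same route as the paper: you split finiteness of the witness set into fiber finiteness and ``column'' finiteness (finitely many $y_1$ with non-empty fiber), and you count modulo $p$ by a disjunction over admissible residue tuples $(j_1,\dots,j_{p-1})$ with $\sum i\,j_i\equiv_p r$, exactly the paper's $\eta$-/$\delta$-construction. The only difference is cosmetic — you phrase it as a single induction on the arity $\ell$, peeling off $y_1$, whereas the paper peels off the last variable and builds the families $\eta_n$ and $\delta_n^d$ explicitly before conjoining them — and your analysis of $\COEFF$, $\CONST$, $\MOD$, $\qd$, and the $O(p^{p\ell}\cdot|\alpha|)$ time bound matches.
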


Note that the modulus $p$ is given in binary. Hence the time bound
is doubly exponential in the size of the formula
$\alpha$.

\begin{proof}
  First suppose $\ell=1$ and consider the formula 
  \[
    \psi:=\bigvee_{0\le r<p}
       (r\equiv_p t\land\exists^{(r,p)}y_1\colon\varphi)
  \]
  which is clearly equivalent to $\alpha$ and has all the properties
  required by the claim of the lemma.

  So suppose $\ell>1$. First, we construct inductively a formula from
  $\FOMOD$ expressing that there are only finitely many tuples
  $\overline{y}$ satisfying $\varphi$ (for $0\le n<\ell-1$):
  \begin{align*}
    \eta_{\ell-1}(y_1,\dots,y_{\ell-1}) = 
    &\bigvee_{0\le i<p}
       \exists^{(i,p)}y_\ell\colon \varphi\\
    \eta_n(y_1,\dots,y_n) = 
    &\bigvee_{0\le i<p}
       \exists^{(i,p)}y_{n+1}\,\exists(y_{n+2},\dots,y_\ell)\colon \varphi\\
    &\land \forall y_{n+1}\colon\eta_{n+1}(y_1,\dots,y_{n+1})
  \end{align*}
  Let $(y_1,\dots,y_{\ell-1})$ be any tuple of integers. The formula
  $\eta_{\ell-1}$ expresses that the tuple $(y_1,\dots,y_{\ell-1})$
  can be extended to a tuple satisfying $\varphi$ in only finitely
  many ways.

  Now let $(y_1,\dots,y_{\ell-2})$ be any tuple of integers. The first
  line of the formula $\eta_{\ell-2}$ expresses that the tuple
  $(y_1,\dots,y_{\ell-2})$ can be extended to a tuple
  $(y_1,\dots,y_{\ell-1})$ that allows a further extension to a tuple
  satisfying $\varphi$ in only finitely many ways.  The second line
  expresses that, for any integer $y_{\ell-1}$, there are only
  finitely many extensions of the tuple $(y_1,\dots, y_{\ell-1})$ to a
  tuple $(y_1,\dots,y_\ell)$ satisfying $\varphi$. Hence,
  $\eta_{\ell-2}$ expresses that there are only finitely many
  extensions of $(y_1,\dots,y_{\ell-2})$ satisfying $\varphi$.

  Arguing inductively, we obtain that $\eta_0$ expresses that there
  are only finitely many tuples $(y_1,\dots,y_\ell)$ satisfying
  $\varphi$.

  Now consider the formula
  \[
    \beta = \bigvee_{0\le r<p}\bigl(r\equiv_p t\land\eta_0\land\exists^{(r,p)}\overline{y}\colon\varphi\bigr)
  \]
  that is equivalent with $\alpha$. It remains to rewrite
  $\exists^{(r,p)}\overline{y}\colon\varphi$ into a formula from
  $\FOMOD$. In this construction, we can assume that $\eta_0$
  holds, i.e., that there are only finitely many tuples
  $(y_1,\dots,y_\ell)$ satisfying $\varphi$. To this aim, consider the
   $\FOMOD$-formulas (for $0\le n<\ell-1$ and $0\le d<p$)
  \begin{align*}
    \delta_{\ell-1}^d(y_1,\dots,y_{\ell-1})  &=
        \exists^{(d,p)}y_\ell\colon \varphi\text{ and}\\
    \delta_n^d(y_1,\dots,y_n)&=
    \bigvee_{(*)}
      \bigwedge_{0< i<p}
        \exists^{(d_i,p)}y_{n+1}\colon
          \delta_{n+1}^i(y_1,\dots,y_{n+1})
  \end{align*}
  where the disjunction $(*)$ extends over all tuples
  $(d_1,\dots,d_{p-1})$ over $\{0,1,\dots,p-1\}$ such that
  \begin{equation}
    \label{eq:sum}
    \sum_{0< i<p}d_i\cdot i\equiv_p d\,.
  \end{equation}
  Let $(y_1,\dots,y_{\ell-1})$ be a tuple of integers. Then the
  formula $\delta_{\ell-1}^d$ expresses that there are $d$ many ways
  to extend the tuple $(y_1,\dots,y_{\ell-1})$ to a tuple satisfying
  $\varphi$ (all counts in this paragraph are understood modulo $p$).
  Next let $(y_1,\dots,y_{\ell-2})$ be a tuple of integers. Then the
  conjunction in the formula $\delta_{\ell-2}^d$ expresses that, for
  all $i\in\{1,2,\dots,p-1\}$, there are $d_i$ many values for
  $y_{\ell-1}$ that satisfy
  $\delta^i_{\ell-1}(y_1,\dots,y_{\ell-1})$, i.e., that can be
  extended in $i$ many ways to a tuple satisfying $\varphi$. Thus, the
  formula $\delta_{\ell-2}^d$ expresses that the tuple
  $(y_1,\dots,y_{\ell-2})$ can be extended in $d$ many ways to a tuple
  satisfying $\varphi$.
  Arguing inductively, the formula $\delta_0^d$ expresses that there
  are $d$ many tuples satisfying $\varphi$.

  Setting
  \[
    \psi:=\eta_0\land\bigvee_{0\le r<p}\bigl(r\equiv_p t\land\delta_0^r\bigr)
  \]
  we consequently get $\alpha\Leftrightarrow\beta\Leftrightarrow\psi\in\FOMOD$.

  Note that the construction of $\eta_0$ and $\delta_0$ leaves the
  sets $\COEFF(.)$, $\MOD(.)$, and $\CONST(.)$ and the
  quantifier-depth unchanged.

  It remains to bound the time needed to construct the formula $\psi$.
  First, $\eta_0$ can be constructed in time $O(\ell\cdot p\cdot|\alpha|)$ since
  the formula $\eta_{n+1}$ appears only once in $\eta_n$. Next, any of
  the formulas $\delta_{\ell-1}^d$ can be constructed in time
  $O(|\alpha|)$.  We now consider the construction of $\delta_n^d$
  from the formulas $\delta_{n+1}^i$. Note that the tuple
  $(d_1,\dots,d_{p-2})$ together with equation~\eqref{eq:sum}
  completely determines the value of
  $d_{p-1}\in\{0,\dots,p-1\}$. Hence the disjunction $(*)$ extends
  over at most $p^{p-2}$ tuples. Consequently, the formula
  $\delta_n^d$ contains at most $p^{p-2}\cdot(p-1)\le p^{p-1}$ many
  subformulas $\delta_{n+1}^i$. By induction, we obtain that
  $\delta_0^r$ can be constructed in time
  $O\bigl(p^{(p-1)\cdot\ell}\cdot|\alpha|\bigr)$. Since the
  construction of $\psi$ requires this to be done for all
  $r\in\{0,1,\dots,p-1\}$ and furthermore
  $r\equiv_pt$ has to be added, the formula $\psi$ can be constructed in time 
  $O\bigl(p\cdot \log(p) \cdot p^{(p-1)\cdot\ell}\cdot|\alpha|\bigr)$
  which is in 
  $O\bigl(p^{p\cdot\ell}\cdot|\alpha|\bigr)$ as $\ell > 1$.
\end{proof}

The above lemma allows to reduce the number of non-unary
modulo-counting quantifiers by one, hence an inductive application
eliminates all of them. The algorithmic cost and the form of the
resulting formula is analysed in the following proof.

\begin{prop}\label{P-FOMOD-to-FOMOD-}
  From a formula $\varphi\in\FOMODmany$, one can construct in time
  doubly exponential in $|\varphi|$ an equivalent formula
  $\gamma\in\FOMOD$.

  In addition, we have $\COEFF(\gamma)\subseteq\COEFF(\varphi)$,
  $\CONST(\gamma)\subseteq\CONST(\varphi)$,
  $\MOD(\gamma)\subseteq\MOD(\varphi)$, and $\qd(\gamma)\le\qd(\varphi)$.
\end{prop}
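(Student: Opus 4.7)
The plan is to proceed by structural induction on $\varphi$, applying Lemma~\ref{L-FOMOD-to-FOMOD-1} at every non-unary modulo-counting quantifier once its body has already been put into $\FOMOD$. The base case (atomic formulas) and the Boolean and classical existential cases are immediate: by induction one translates all immediate subformulas into $\FOMOD$ and reassembles them. None of these steps touches $\COEFF$, $\CONST$, $\MOD$, or the quantifier depth, and each incurs only linear overhead.

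The only interesting case is $\varphi = \exists^{(t,p)}\bar{y}\colon\varphi'$. I would first recursively translate $\varphi'$ into an equivalent $\varphi'' \in \FOMOD$ satisfying the four containment/depth properties with respect to $\varphi'$. Then $\alpha := \exists^{(t,p)}\bar{y}\colon\varphi''$ meets the hypothesis of Lemma~\ref{L-FOMOD-to-FOMOD-1}, which yields $\gamma \in \FOMOD$ equivalent to $\alpha$ (and hence to $\varphi$) with $\COEFF$, $\CONST$, $\MOD$, and $\qd$ all preserved. Chaining these (in)equalities through the induction immediately yields the stated containments $\COEFF(\gamma)\subseteq\COEFF(\varphi)$, $\CONST(\gamma)\subseteq\CONST(\varphi)$, $\MOD(\gamma)\subseteq\MOD(\varphi)$ and $\qd(\gamma)\le\qd(\varphi)$.

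The delicate point is the time bound. Applying Lemma~\ref{L-FOMOD-to-FOMOD-1} to an occurrence with modulus $p$ and arity $\ell$ costs $O(p^{p\cdot\ell}\cdot|\alpha|)$. The crucial observation is that the lemma leaves $\MOD(\cdot)$ unchanged and introduces no new non-unary modulo-counting quantifiers; consequently, the moduli $p$ and arities $\ell$ we encounter throughout the induction are exactly those already present in the original $\varphi$. Both are therefore bounded by $2^{|\varphi|}$ (moduli are written in binary, arities by the formula size), so every single lemma call multiplies the current formula size by a factor doubly exponential in $|\varphi|$.

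Since the number of such calls is bounded by $|\varphi|$ (the number of non-unary modulo-counting quantifier occurrences), and multiplying an initial size of $|\varphi|$ by a doubly exponential factor $|\varphi|$ many times still yields something doubly exponential in $|\varphi|$, both the final formula $\gamma$ and the total running time meet the required bound. The main obstacle is exactly this calibration: a naive analysis that allowed $p$ or $\ell$ to grow during the iteration would produce a three-fold exponential estimate, and it is only because Lemma~\ref{L-FOMOD-to-FOMOD-1} preserves $\MOD$ and creates no further non-unary modulo-counting binders that $p$ and $\ell$ stay bounded by quantities from the input throughout the construction.
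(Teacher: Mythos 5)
Your proof is correct and takes essentially the same route as the paper: both proceed by eliminating innermost non-unary modulo-counting quantifiers one at a time via Lemma~\ref{L-FOMOD-to-FOMOD-1}, and both derive the doubly exponential time bound from the observation that the moduli $p$ and arities $\ell$ encountered along the way never exceed those in the original $\varphi$ (bounded by $2^{|\varphi|}$ and $|\varphi|$ respectively), so that each of the at most $|\varphi|$ lemma applications multiplies the size by a factor doubly exponential in $|\varphi|$. The only cosmetic difference is that you phrase the iteration as a structural induction while the paper phrases it as an $n$-step replacement loop on $\varphi_0,\dots,\varphi_n$.
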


\begin{proof}
  Let $P$ be the maximal value such that some modulo-counting
  quantifier $\exists^{(t,P)}$ appears in the formula~$\varphi$ and
  let $L$ be the maximal arity of any modulo-counting quantifier in
  $\varphi$. Finally, let $n$ be the number of non-unary
  modulo-counting quantifiers in $\varphi$.

  Let $\varphi_0=\varphi$. To inductively construct $\varphi_{i+1}$
  from $\varphi_i$, we chose some subformula
  $\exists^{(t,p)}(y_1,\dots,y_\ell)\colon\alpha$ with $\ell>1$ and
  $\alpha\in\FOMOD$. This subformula is replaced by an equivalent
  formula from $\FOMOD$ that we obtain from
  Lemma~\ref{L-FOMOD-to-FOMOD-1}. This reduces the number of non-unary
  modulo-counting quantifiers by one so that $\gamma:=\varphi_n$ is
  a formula from $\FOMOD$.

  From Lemma~\ref{L-FOMOD-to-FOMOD-1}, we get $\gamma\iff\varphi$,
  $\CONST(\gamma)=\CONST(\varphi)$, $\COEFF(\gamma)=\COEFF(\varphi)$,
  $\MOD(\gamma)=\MOD(\varphi)$, and $\qd(\gamma)=\qd(\varphi)$.

  Also from Lemma~\ref{L-FOMOD-to-FOMOD-1}, we get that
  $\varphi_{i+1}$ can be constructed from $\varphi_i$ in time
  $O(P^{P\cdot L}\cdot|\varphi_i|)$ and is therefore of size at most
  $O(P^{P\cdot L}\cdot|\varphi_i|)$. Consequently, $\gamma$ can be
  constructed from $\varphi_0$ in time
  $O(\bigl(P^{P\cdot L}\bigr)^n\cdot|\varphi|)$. Since the binary
  encoding of $P$ appears in $\varphi$, we get
  $P\le2^{|\varphi|}$. Furthermore, $L,n\le|\varphi|$. Consequently, the
  construction of $\gamma$ from $\varphi$ can be carried out in doubly
  exponential time.
\end{proof}

The above two Propositions~\ref{P-CMOD-to-FOMOD} and
\ref{P-FOMOD-to-FOMOD-} imply the following.
\begin{thm}\label{P-CMOD-to-FOMOD-}
  From a formula $\varphi\in\fullLogic$, one can construct in time
  doubly exponential in $|\varphi|$ an equivalent formula
  $\gamma\in\FOMOD$.

  In addition, we have $\COEFF(\gamma)\subseteq\COEFF(\varphi)$,
  $\CONST(\gamma)\subseteq\CONST(\varphi)$, and
  $\MOD(\gamma)\subseteq\MOD(\varphi)$.

  In addition, the quantifier depth $\qd(\gamma)$ is polynomial in the
  size of $\varphi$.
\end{thm}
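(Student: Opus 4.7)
The plan is to chain the two preceding propositions. First I would apply Proposition~\ref{P-CMOD-to-FOMOD} to transform $\varphi\in\fullLogic$ into an equivalent formula $\psi\in\FOMODmany$. This produces $\psi$ in time polynomial in $|\varphi|$, satisfying $\COEFF(\psi)\subseteq\COEFF(\varphi)$, $\CONST(\psi)\subseteq\CONST(\varphi)$, and $\MOD(\psi)\subseteq\MOD(\varphi)$. In particular, since the construction runs in polynomial time, the size $|\psi|$ is bounded by some polynomial in $|\varphi|$.

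Next I would apply Proposition~\ref{P-FOMOD-to-FOMOD-} to $\psi$, obtaining an equivalent formula $\gamma\in\FOMOD$ in time doubly exponential in $|\psi|$. Since $|\psi|$ is polynomial in $|\varphi|$, a doubly exponential function in $|\psi|$ remains doubly exponential in $|\varphi|$, which gives the required time bound for the overall construction $\varphi\mapsto\gamma$. This second step again preserves the containments of $\COEFF$, $\CONST$, and $\MOD$, so by transitivity the corresponding inclusions hold between $\gamma$ and $\varphi$.

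It remains to check that $\qd(\gamma)$ is polynomial in $|\varphi|$. Here I would use that Proposition~\ref{P-FOMOD-to-FOMOD-} guarantees $\qd(\gamma)\le\qd(\psi)$, so it suffices to bound $\qd(\psi)$. Since the quantifier depth of any formula is bounded by its size, and since $|\psi|$ is polynomial in $|\varphi|$ by the first step, we conclude $\qd(\gamma)\le\qd(\psi)\le|\psi|\le\mathrm{poly}(|\varphi|)$, as required.

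The statement is essentially a corollary of the two preceding propositions, so there is no substantial obstacle; the only point requiring care is the quantifier depth claim, which is not stated explicitly in Proposition~\ref{P-CMOD-to-FOMOD} but follows immediately from the polynomial size bound there together with the quantifier-depth preservation in Proposition~\ref{P-FOMOD-to-FOMOD-}.
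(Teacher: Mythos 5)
Your proof is correct and follows exactly the same route as the paper: chain Proposition~\ref{P-CMOD-to-FOMOD} with Proposition~\ref{P-FOMOD-to-FOMOD-}, observe that the polynomial-time first step bounds $|\psi|$ and hence $\qd(\psi)$ polynomially, and use the quantifier-depth preservation $\qd(\gamma)\le\qd(\psi)$ from the second proposition. Nothing to add.
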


\begin{proof}
  Using Proposition~\ref{P-CMOD-to-FOMOD}, one first constructs in
  polynomial time an equivalent formula $\psi$ from $\FOMODmany$. This
  formula is then, using Proposition~\ref{P-FOMOD-to-FOMOD-},
  translated into an equivalent formula $\gamma$ from $\FOMOD$.

  Since $|\psi|$ is polynomial in the size of $\varphi$, its
  quantifier depth is also polynomial in $|\varphi|$. Hence, the same
  holds for the quantifier depth of $\gamma$.
\end{proof}
\section{Quantifier elimination}

This section provides a quantifier elimination procedure for the logic $\FOMOD$
where, differently from the full logic $\fullLogic$, only unary
quantifications $\exists y$ and $\exists^{(q,p)}y$ with $q\in\bN$ are
allowed.

As usual with quantifier elimination procedures, we first demonstrate
how to eliminate a single quantifier in front of a Boolean combination
of atomic formulas. Since the classical existential quantifier and the
modulo-counting quantifier behave rather differently, we handle them
in separate Lemmas~\ref{LB} and \ref{LB'}. The main point in both
these lemmas is
\begin{enumerate}[(a)]
\item properties of the form $\exists/\exists^{(q,p)}x\colon \beta$
  where $\beta$ is quantifier-free can be expressed without
  quantification and
\item the sets of coefficients, constants, and moduli vary in this
  process, but these sets can be controlled.
\end{enumerate}
Our quantifier elimination is effective, but we do not concentrate on
this fact. We do, in particular, not aim at a fast elimination
algorithm nor at small resulting formulas. All we need for our later
decision procedure is a bound on the size of the coefficients,
constants, and moduli appearing in the resulting formula.

For this bound, suppose $\beta$ is a quantifier-free formula and $E$
is a quantifier $\exists$ or $\exists^{(q,p)}$. We will prove that
$Ex\colon\beta$ is equivalent to some quantifier-free formula $\gamma$
whose sets of coefficients etc.\ are contained in the following sets
(with $p=1$ in case $E=\exists$):
\begin{align*}
  \COEFF_p(\beta) & =
      \bigl\{a_1a_2-a_3a_4\bigm| a_1,a_2,a_3,a_4\in\COEFF(\beta)\bigr\}\\
  \CONST_p(\beta) & =
     \left\{ a_1c_1-a_2(c_2+c)
         \,\left|\,
           \begin{array}{l}
             a_1,a_2\in\COEFF(\beta), c_1,c_2\in\CONST(\beta)\\
             |c|\le\max\COEFF(\beta)\cdot p\cdot\lcm\MOD(\beta)
           \end{array}
         \right\}\right.\\
    \MOD_p(\beta) & =
      \bigl\{a_1a_2k_1k_2\bigm| a_1a_2\in\COEFF(\beta),k_1,k_2\in\MOD(\beta)\cup\{p\}\bigr\}
\end{align*}
Note that the first set does not depend on the number $p$ and that
$\CONST_p(\beta)\subseteq\CONST_{p_1}(\beta)$ for all $1\le p<p_1$.

Using these sets, we formulate the following condition on the triple
$(\beta,\gamma, p)$ where $\beta$ and $\gamma$ are formulas and
$p\ge1$ is a positive integer:
\begin{equation}
  \label{eq:bedingung}
    \COEFF(\gamma) \subseteq \COEFF_p(\beta)\,,\  
    \CONST(\gamma) \subseteq \CONST_p(\beta)\,,
    \MOD(\gamma) \subseteq \MOD_p(\beta)
\end{equation}

Let $\beta$ be a quantifier-free formula and $x=t$ an equation (with
$t$ an $x$-free term). Write $\beta'$ for the formula obtained from
$\beta$ by replacing all occurrences of $x$ by $t$ so that $\beta'$
is a Boolean combination of $x$-free atomic formulas. Then the
formulas $x=t\land\beta$ and $x=t\land\beta'$ are equivalent. The
following lemma, whose statement will be used repeatedly, demonstrates
the analogous fact for equations of the form $ax=t$ (with $a\neq0$),
i.e., constructs an $x$-free quantifier-free formula $\beta'$ so
that $ax=t\land\beta$ and $ax=t\land\beta'$ are equivalent. The main
point here is that, under a specific condition on $a$, $t$, and $c$,
the triple $(\beta,\beta',p)$ satisfies the above
Condition~\eqref{eq:bedingung}.

\begin{lem}\label{L-neu}
  Let $\beta$ be a Boolean combination of $x$-separated atomic
  formulas, $ax < t$ or $t < ax$ some atomic formula from $\beta$ with
  $a> 0$, $p\ge1$ a positive integer, and $c\in\bZ$ with
  $|c|\le a\cdot p\cdot \lcm\MOD(\beta)$. There exists a Boolean
  combination $\beta_{a,t+c}$ of $x$-free atomic formulas such that
  the triple $(\beta,\beta_{a,t+c},p)$ satisfies
  Condition~\eqref{eq:bedingung} and, for all assignments $f$,
    \[
      f(ax)=f(t+c) \text{ implies } f(\beta)=f(\beta_{a,t+c})\,.
    \]
\end{lem}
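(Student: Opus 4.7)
The plan is to build $\beta_{a,t+c}$ by syntactic replacement in the Boolean structure of $\beta$: for each $x$-separated atomic formula $\alpha$ occurring in $\beta$, I substitute an $x$-free atomic formula $\alpha'$ that agrees with $\alpha$ on every assignment $f$ with $f(ax)=f(t+c)$; then $f(\beta)=f(\beta_{a,t+c})$ for all such $f$ by structural induction. An atom with coefficient $0$ on $x$ (i.e., already $x$-free) is left unchanged. For an atom $a'x<t'$ with $a'>0$, I multiply both sides by $a>0$, which is direction-preserving, and then use $aa'x=a'(ax)=a'(t+c)$ to obtain the $x$-free atom $a'(t+c)<at'$; the case $t'<a'x$ is symmetric; and $a'x\equiv_k t'$ is turned into $a'(t+c)\equiv_{ak} at'$, exploiting that for $a>0$ one has $u\equiv_k v$ iff $au\equiv_{ak} av$. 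The semantic implication in the lemma is then immediate.

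The heart of the argument is verifying Condition~\eqref{eq:bedingung}. Untouched atoms contribute coefficients/constants/moduli already in $\COEFF(\beta), \CONST(\beta), \MOD(\beta)$, and these sit inside their $p$-counterparts via trivial choices such as $a=a\cdot 1-0\cdot 0$, $c=1\cdot c-0\cdot(0+0)$ and $k=1\cdot 1\cdot k\cdot 1$ (using $0,1\in\COEFF(\beta)$, $0\in\CONST(\beta)$, and $1\in\MOD(\beta)$). For a transformed inequality $a'(t+c)<at'$, writing $t=\sum_y \tilde a_y y + c_t$ and $t'=\sum_y a''_y y + c_{t'}$, the coefficient of $y$ in the normalised form $a'(t+c)-at'$ is $a'\tilde a_y - a a''_y$, which has the required shape $a_1a_2-a_3a_4$ with all four factors in $\COEFF(\beta)$. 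The constant is $a'c_t - a c_{t'} + a'c$, and the crucial trick is to rewrite it as $(-a)c_{t'}-(-a')(c_t+c)$: since $\COEFF(\beta)$ is closed under sign, this matches the template $a_1c_1-a_2(c_2+c)$ of $\CONST_p(\beta)$, and the size requirement is met by the hypothesis $|c|\le a\cdot p\cdot\lcm\MOD(\beta)\le \max\COEFF(\beta)\cdot p\cdot\lcm\MOD(\beta)$. For a transformed congruence, the only new datum is the modulus $ak=a\cdot 1\cdot k\cdot 1\in\MOD_p(\beta)$; the coefficients and constants inside the transformed congruence do not enter $\COEFF(\beta_{a,t+c})$ or $\CONST(\beta_{a,t+c})$, since those sets collect data from inequality atoms only.

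The main obstacle is precisely this matching of the constant $a'c_t-ac_{t'}+a'c$ to the asymmetric template $a_1c_1-a_2(c_2+c)$ of $\CONST_p(\beta)$, in which the free summand $c$ is tied to the second coefficient. The naive choice $a_1=a'$, $a_2=a$ would force $c'=-a'c/a$, which is not an integer in general; one is compelled to use the sign-flipped assignment $a_1=-a$, $a_2=-a'$, $c_1=c_{t'}$, $c_2=c_t$, which leaves $c'=c$ and thus exactly the bound supplied by the hypothesis. Everything else is routine bookkeeping around the substitution $aa'x\mapsto a'(t+c)$.
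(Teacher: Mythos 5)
Your proposal is correct and follows essentially the same path as the paper's proof: the same per-atom substitution that replaces $a'x$ by $a'(t+c)$ after scaling by $a$, and the same bookkeeping to verify Condition~\eqref{eq:bedingung}. The only cosmetic differences are that you leave genuinely $x$-free atoms untouched (the paper uniformly applies the replacement table with $a'\ge 0$, which harmlessly scales them by $a$), and that you match the constant to the template by flipping the signs of the coefficient pair $(a_1,a_2)$ whereas the paper takes the constant term with the opposite sign $\pm d$ — both moves exploit the same sign-closure already built into $\COEFF$ and $\CONST$.
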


Note that in particular
  \[
     ax=t+c\land\beta \Longleftrightarrow ax=t+c\land\beta_{a,t+c}\,.
  \]
\begin{proof}
 The formula $\beta_{a,t+c}$ is obtained from $\beta$ by the following
  replacements (where $s$ is some $x$-free term, $a'\ge0$, and $k\ge2$):
  \begin{center}
    \begin{tabular}{rcl}
      $a'x < s$ & is replaced by & $a't+a'c < as$\\
      $s < a'x$ & is replaced by & $as < a't+a'c$\\
      $a'x \equiv_k s$ & is replaced by & $a't + a'c\equiv_{ak} as$
    \end{tabular}
  \end{center}

  Let $f$ be some assignment with $f(ax)=f(t+c)$. Then we have
    \begin{align*}
      f(a'x<s) & = f(a'ax<as) &&\text{since }a>0\\
               & = f\bigl(a'(t+c)<as\bigr)&&\text{since }f(ax)=f(t+c)\\
               & = f(a't+a'c < as)
    \intertext{and similarly}
      f(s<a'x) & = f(as < a't+a'c)
    \intertext{as well as}
      f(a'x\equiv_k s) & = f(a'ax \equiv_{ak} as)\\
                       & = f\bigl(a'(t+c)\equiv_{ak}as\bigr)\\
                       & = f(a't+a'c\equiv_{ak} as)\,.
    \end{align*}
  This completes the proof that $f(ax)=f(t+c)$ implies
  $f(\beta)=f(\beta_{a,t+c})$.

  It remains to verify Condition~\eqref{eq:bedingung}. First note that
  $a\in\COEFF(\beta)$ since $ax<t$ or $ax>t$ appears in $\beta$ and
  since $t$ is $x$-free.

  Now, let $b\in\COEFF(\beta_{a,t+c)}$. If $b\in\COEFF(\beta)$, we get
  $b=1b-0b$ implying $b\in\COEFF_p(\beta)$ since
  $1,0\in\COEFF(\beta)$. So let $b\notin\COEFF(\beta)$. Then there
  exists some atomic formula $a'x<s$ or $s<a'x$ in $\beta$ such that
  $b$ is some coefficient in the term $as-a'(t+c)$. Consequently,
  there exists a variable $y$ with coefficient $a_2$ in $s$ and with
  coefficient $a_4$ in $t$ such that $b=aa_2-a'a_4$. Since $a'x<s$ or
  $s<a'x$ is an atomic formula in $\beta$ and since $s$ is $x$-free,
  we have $a'\in\COEFF(\beta)$. Hence, also in this case,
  $b\in\COEFF_p(\beta)$.

  Next let $d\in\CONST(\beta_{a,t+c})$. If $d\in\CONST(\beta)$, we
  have $d=1d-0(0+c)\in\CONST_p(\beta)$. So suppose
  $d\notin\CONST(\beta)$. Then, as above, there exists some atomic
  formula $a'x<s$ or $s<a'x$ in $\beta$ such that $\pm d$ is the
  constant term in $as-a'(t+c)$. Consequently, $\pm d=ac_1-a'(c_2+c)$
  where $c_1$ and $c_2$ are the constant terms of $s$ and $t$,
  resp. Since $a,a'\in\COEFF(\beta)$ (see above) and since
  $|c|\le a\cdot p\cdot \lcm\MOD(\beta)$, we get $d\in\CONST_p(\beta)$.

  Finally, let $\ell\in\MOD(\beta_{a,t+c})$. If $\ell\in\MOD(\beta)$,
  then $\ell=1\cdot1\cdot\ell\cdot 1\in \MOD_p(\beta)$ since
  $1\in\COEFF(\beta)$. Otherwise, there exists an atomic formula
  $a'x\equiv_k s$ in $\beta$ with $\ell=ak$. Hence, also in this case,
  $\ell=1\cdot ak\cdot 1\in\MOD_p(\beta)$.
\end{proof}

We now come to the elimination of the classical existential
quantifier. Neither the result nor its proof are new, we present them
here to be able to also verify Condition~\eqref{eq:bedingung}.

\begin{lem}\label{LB}
  Let $x$ be a variable and $\beta$ a Boolean combination of
  $x$-separated atomic formulas. Then there exists a Boolean
  combination $\gamma$ of $x$-free atomic formulas such that the
  triple $(\beta,\gamma,1)$ satisfies Condition~\eqref{eq:bedingung}
  and $(\exists x\colon\beta)\Longleftrightarrow\gamma$.
\end{lem}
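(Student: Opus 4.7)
The plan is to adapt the classical Cooper-style elimination of $\exists x$, using Lemma~\ref{L-neu} to handle witnesses pinned near a lower bound of $\beta$. Let $M:=\lcm\MOD(\beta)$. First I would form the ``minus-infinity'' approximation $\beta_{-\infty}$ by replacing each atom $ax<t$ of $\beta$ with $a>0$ by $\true$ and each atom $t<ax$ with $a>0$ by $\false$, leaving all $x$-free atoms and all divisibility atoms untouched. The remaining $x$-containing atoms in $\beta_{-\infty}$ are divisibilities $ax\equiv_k t$ with $k\mid M$, so $\beta_{-\infty}$ is $M$-periodic in $x$; hence
\[
  \exists x\colon\beta_{-\infty} \;\iff\; \bigvee_{j=0}^{M-1}\beta_{-\infty}[x\mapsto j],
\]
and each right-hand-side disjunct is a Boolean combination of $x$-free atomic formulas.

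For the case of a witness close to a lower bound, I would enumerate, for each atom $t<a'x$ in $\beta$ with $a'>0$ and each $j\in\{1,\dots,a'M\}$, the $x$-free formula $\beta_{a',t+j}$ produced by Lemma~\ref{L-neu} (applied with $p=1$ and $c=j$; the hypothesis $|c|\le a'\cdot p\cdot\lcm\MOD(\beta)$ holds precisely because $j\le a'M$). Because $\beta_{a',t+j}$ is only guaranteed to agree with $\beta$ when some integer $x$ satisfies $a'x=t+j$, I pair it with the divisibility guard $t+j\equiv_{a'}0$. The target formula is
\[
  \gamma \;:=\; \bigvee_{j=0}^{M-1}\beta_{-\infty}[x\mapsto j]
    \;\lor\;\bigvee_{\substack{(t<a'x)\in\beta\\ 1\le j\le a'M}}
    \bigl(t+j\equiv_{a'}0\,\land\,\beta_{a',t+j}\bigr).
\]

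For correctness, the direction $\gamma\Rightarrow\exists x\colon\beta$ is routine: a satisfied $-\infty$-disjunct extends by $M$-periodicity to sufficiently small $x$ where $\beta(x)=\beta_{-\infty}(x)$, and a satisfied lower-bound disjunct produces the integer witness $x=(t+j)/a'$ via Lemma~\ref{L-neu}. For the converse, given a witness $x^*$ of $\beta$, I would compare $\beta(x^*)$ with $\beta(x^*-M)$: modular atoms, $x$-free atoms, and upper-bound atoms $ax<s$ with $a>0$ are all preserved under the step $x^*\mapsto x^*-M$. So either $\beta(x^*-M)$ still holds, in which case iteration produces witnesses tending to $-\infty$, where $\beta$ collapses to $\beta_{-\infty}$ and some $-\infty$-disjunct triggers at the residue $x^*\bmod M$; or some lower bound $t<a'x$ flips to false at $x^*-M$, which forces $a'x^*-t\in\{1,\dots,a'M\}$ and supplies the required $j$. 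I expect this minimum-witness case analysis to be the main conceptual step.

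The verification of Condition~\eqref{eq:bedingung} is then essentially free. The $-\infty$-disjuncts modify only divisibility atoms, so $\COEFF$ and $\CONST$ are inherited from $\beta$ and no new moduli are introduced; the trivial identities $b=1\cdot b-0\cdot 0$, $d=1\cdot d-0\cdot(0+0)$, and $k=1\cdot 1\cdot k\cdot 1$ show $\COEFF(\beta)\subseteq\COEFF_1(\beta)$, $\CONST(\beta)\subseteq\CONST_1(\beta)$, and $\MOD(\beta)\subseteq\MOD_1(\beta)$. The lower-bound disjuncts inherit the condition on $\beta_{a',t+j}$ directly from Lemma~\ref{L-neu}, and the extra guard $t+j\equiv_{a'}0$ contributes only the modulus $a'\in\COEFF(\beta)$, which lies in $\MOD_1(\beta)$ via $a'=a'\cdot 1\cdot 1\cdot 1$.
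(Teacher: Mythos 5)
Your approach is a Cooper-style elimination by ``minus infinity,'' which is genuinely different from the paper's argument: the paper collects all pairs $(a,t)$ from atoms $ax<t$ \emph{and} $t<ax$ (with $a>0$) into a set $T$, and then does a three-way case analysis on whether a witness equals, lies just above, or lies just below the nearest division point $\frac{f(t)}{a}$, using $\beta_{a,t+c}$ with $c$ ranging symmetrically over $\{-aN,\dots,aN\}$. Your route would be a legitimate alternative, but as written it has a real gap.

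The gap is in the claim that ``modular atoms, $x$-free atoms, and upper-bound atoms $ax<s$ with $a>0$ are all preserved under the step $x^*\mapsto x^*-M$.'' An upper-bound atom $ax<s$ (with $a>0$) is \emph{not} preserved in the direction you need: it can flip from false at $x^*$ to true at $x^*-M$. Since $\beta$ is an arbitrary Boolean combination (not in negation normal form), such a flip can make $\beta$ false even though no atom of the form $t<a'x$ changes. Your correctness argument therefore does not cover all ways in which $\beta(x^*-M)$ can fail, and your $\gamma$ omits the corresponding disjuncts. Concretely, take $\beta=\neg(x<0)$. Then $\beta$ has no atom of the form $t<a'x$, so your second family of disjuncts is empty, and $\beta_{-\infty}$ replaces $x<0$ by $\true$, giving $\beta_{-\infty}=\false$; hence your $\gamma\Longleftrightarrow\false$, while $\exists x\colon\neg(x<0)$ is true. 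The fix is to treat negated upper bounds as lower bounds, i.e.\ either first convert $\beta$ to negation normal form, or (as the paper does) enumerate over \emph{all} pairs $(a',t)$ with $a'x<t$ or $t<a'x$ in $\beta$, $a'>0$, and let $c$ range over the symmetric interval $\{-a'M,\dots,a'M\}$. With that change and a corresponding adjustment to the case analysis, the rest of your argument (periodicity, use of Lemma~\ref{L-neu}, and the verification of Condition~\eqref{eq:bedingung}) would go through.
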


\begin{proof}
  Let $T$ be the set of all pairs $(a,t)$ such that $\beta$ contains
  an atomic formula of the form $ax<t$ or $t<ax$ with $a>0$. We first
  assume that this set $T$ is not empty. Let furthermore
  $N=\lcm\bigl(\MOD(\beta)\bigr)$. In particular, $N$ is a multiple of
  every integer~$k$ such that the atomic formula $ax\equiv_k t$
  appears in $\beta$ for some term $t$ and some $a\in\bZ$. Then we set
  \[
     \gamma :=  \bigvee (\beta_{a,t+c}\land 0\equiv_a t+c)
  \]
  where the disjunction extends over all triples $(a,t,c)$ with
  $(a,t)\in T$ and $-aN\le c\le aN$ (since $T\neq\emptyset$, this
  disjunction is not empty). We prove
  $(\exists x\colon \beta)\Longleftrightarrow\gamma$. So let $f$ be an
  assignment with $f(\exists x\colon\beta)=\true$. Then there is
  $b\in\bZ$ with $f_{x/b}(\beta)=\true$. Let $g=f_{x/b}$. Since the
  values $\frac{f(t)}{a}$ for $(a,t)\in T$ divide $\bZ$ into
  intervals, there exists $(a,t)\in T$ such that
  \begin{enumerate}
  \item $b=\frac{f(t)}{a}$ or
  \item $\frac{f(t)}{a}<b$ and for all $(a',t')\in T$ with
    $\frac{f(t')}{a'}<b$, we have
    $\frac{f(t')}{a'} \le \frac{f(t)}{a}$ or
  \item $b < \frac{f(t)}{a}$ and for all $(a',t')\in T$ with
    $b < \frac{f(t')}{a'}$, we have $\frac{f(t)}{a} \le \frac{f(t')}{a'}$.
  \end{enumerate}
  (The 2nd and 3rd cases are not exclusive, but if $b<\frac{f(t)}{a}$
  for all $(a,t)\in T$, then only the third case applies and
  symmetrically in case $b>\frac{f(t)}{a}$.)  Assume the first case. Then
  $g(ax)=ab=f(t)=g(t)$ where the last equality holds since $t$ is
  $x$-free. Hence, by Lemma~\ref{L-neu}, we get
  $\true=g(\beta)=g(\beta_{a,t})=f(\beta_{a,t})$ and, since
  $\frac{f(t)}{a}=b\in\bZ$, also $f(0\equiv_a t)=\true$. Hence, using
  the triple $(a,t,0)$, we have $f(\gamma)=\true$.

  Next consider the second case. There exists $k\in\bN$ with
  $0<(b-kN)-\frac{f(t)}{a} \le N$ or, equivalently, $0 < a(b-kN)-f(t)
  \le aN$. We set $c=a(b-kN)-f(t)$ so that $-aN \le c \le aN$.

  Since $N$ is a multiple of all moduli appearing in $\beta$, we get
  $f_{x/b-kN}(\beta)=\true$ from $f_{x/b}(\beta)=\true$ and the choice
  of $(a,t)$ and of $k$. Set $g'=f_{x/b-kN}$. Then
  $g'(ax) = a(b-kN) = f(t+c) = g'(t+c)$ since the term $t+c$ is
  $x$-free. Hence, by Lemma~\ref{L-neu}, we get
  $\true = f_{x/b-kN}(\beta) = g'(\beta) = g'(\beta_{a,t+c}) =
  f(\beta_{a,t+c})$. Furthermore, $f(t)+c = a(b-kN)$ is divisible by
  $a$ so that $f(0\equiv_a t+c)=\true$. Using the triple $(a,t,c)$,
  we obtain $f(\gamma)=\true$ also in the second case.

  The third case is symmetric to the second, i.e., we showed
  $f(\exists x\colon\beta)=\true \Longrightarrow f(\gamma)=\true$.

  For the converse implication, suppose $f(\gamma)=\true$. Then there is a
  triple $(a,t,c)$ with $(a,t)\in T$ and $-aN \le c \le aN$ such that
  $f(\beta_{a,t+c}\land 0\equiv_a t+c)=\true$. Because of
  $0\equiv_a f(t)+c$, there exists $b\in\bZ$ with $ab= f(t+c)$. Let
  $g=f_{x/b}$. Then $g(ax)=ab=f(t+c)=g(t+c)$ since $t$ is
  $x$-free. Hence, by Lemma~\ref{L-neu}, we have
  $g(\beta)=g(\beta_{a,t+c})=f(\beta_{a,t+c})=\true$. Since
  $g=f_{x/b}$, this implies $f(\exists x\colon\beta)=\true$ and
  therefore the remaining implication.

  Finally, we have to verify Condition~\eqref{eq:bedingung}. Recall
  that $(a,t)\in T$ means that $ax<t$ or $t<ax$ is a subformula of
  $\beta$ (or $a=1$ and $t=0$). Hence
  $\COEFF(\gamma)\subseteq\COEFF_1(\beta)$ and
  $\CONST(\gamma)\subseteq\CONST_1(\beta)$ follow immediately from
  Lemma~\ref{L-neu} since these sets only refer to atomic formulas of
  the form $a'x<s$ or $a'x>s$. Next let $\ell\in\MOD(\gamma)$. Then
  $\ell\in\MOD(\beta_{a,t+c})$ or $\ell=a$ for some $(a,t)\in T$ and
  $|c|\le aN$. In the first case, $\ell\in\MOD_1(\beta)$ follows from
  Lemma~\ref{L-neu}, in the latter case note that
  $a,1\in\COEFF(\beta)$ and $1\in\MOD(\beta)$ so that
  $\ell=a=1\cdot a\cdot 1\cdot 1\in\MOD_1(\beta)$.

  Thus, we proved the lemma in case $T\neq\emptyset$. Now assume
  $T=\emptyset$. Note that the formulas $\beta$ and
  $\beta\land(x<0\lor\lnot x<0)$ are equivalent, agree on the sets of
  coefficients etc., and that the latter contains some atomic formula
  of the form $ax<t$. Thus, by the above arguments, we find the
  Boolean combination $\gamma$ with the desired properties also in
  this case.
\end{proof}

Having shown how to eliminate a single existential quantifier, we now
come to the analogous result for modulo-counting quantifiers.

\begin{lem}\label{LB'}
  Let $x$ be a variable, $\beta$ a Boolean combination of
  $x$-separated atomic formulas, and $0\le q<p$ natural numbers. Then
  there exists a Boolean combination of $x$-free atomic formulas
  $\gamma$ such that the triple $(\beta,\gamma,p)$ satisfies
  Condition~\eqref{eq:bedingung} and
  $(\exists^{(q,p)} x\colon\beta)\Longleftrightarrow\gamma$.
\end{lem}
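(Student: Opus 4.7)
The plan is to adapt the proof of Lemma~\ref{LB} by counting witnesses modulo~$p$ rather than merely detecting their existence. Let $T$ be the set of pairs $(a,t)$ with $a>0$ such that $\beta$ contains $ax<t$ or $t<ax$, let $M=\lcm\MOD(\beta)$, and set $N:=pM$. The breakpoints $t/a$ for $(a,t)\in T$ partition $\bZ$ into at most $|T|+1$ intervals, on each of which the truth of every inequality atomic formula in $\beta$ is constant; hence within such a homogeneous interval, the truth of $\beta$ depends only on $x\bmod M$.

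The key periodicity observation is that any block of $N=pM$ consecutive integers lying inside a homogeneous interval contributes $0$ modulo~$p$ to the witness count, since each residue class modulo $M$ occurs exactly $p$ times in such a block. Consequently, the count modulo~$p$ of witnesses of $\beta$ inside a homogeneous interval $[b,b')$ is a computable function of $(b'-b)\bmod N$, of the inequality type of the interval, and of the congruence class of $b$ modulo each $k\in\MOD(\beta)$. I would build $\gamma$ as a disjunction over all admissible configurations: an ordering of the breakpoints (expressible by the $x$-free comparisons $a_jt_i<a_it_j$ and equations $a_jt_i=a_it_j$); for each pair of consecutive breakpoints, residues pinning down the inter-breakpoint length modulo $N$ (expressed by congruences of the form $a_jt_i-a_it_j\equiv_{a_ia_jkp}r$ for each $k\in\MOD(\beta)$, which together determine the length modulo $a_ia_jN$); and, around each breakpoint, residues encoding the local congruence type, obtained by the substitutions of Lemma~\ref{L-neu} applied to carefully chosen offsets. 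For each configuration, the total count modulo~$p$ is an explicit integer, and $\gamma$ keeps exactly those configurations whose total is $\equiv q\pmod p$. To enforce finiteness of the witness set, I would additionally conjoin the conditions (again expressible via Lemma~\ref{L-neu}-style substitutions, exactly as in the proof of Lemma~\ref{LB}) that $\beta$ admits no witness in either of the two unbounded outer intervals.

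To verify Condition~\eqref{eq:bedingung}, the ordering conditions contribute coefficients of the form $a_ia-a_ja'$ with $a_i,a_j,a,a'\in\COEFF(\beta)$ and constants of the form $a_ic_1-a_jc_2$ with $c_1,c_2\in\CONST(\beta)$, both fitting $\COEFF_p(\beta)$ and $\CONST_p(\beta)$. The residues $r$ used in the length-congruences can be taken in $[-a_ia_jkp/2,a_ia_jkp/2]$, hence of absolute value at most $\max\COEFF(\beta)^2\cdot p\cdot\lcm\MOD(\beta)$, which lies in $\CONST_p(\beta)$. The moduli $a_ia_j\cdot k\cdot p$ match the pattern $a_1a_2k_1k_2$ of $\MOD_p(\beta)$ with $k_1=k\in\MOD(\beta)$ and $k_2=p$. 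The further substitutions from Lemma~\ref{L-neu} preserve Condition~\eqref{eq:bedingung} by construction.

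The main obstacle I expect is correctly enumerating and summing the interval-wise contributions modulo~$p$ when the rational breakpoints $t/a$ are not integers: the effective integer endpoints are $\lceil t/a\rceil$ and $\lfloor t/a\rfloor$, and the offset between them is precisely the parameter $c$ of Lemma~\ref{L-neu}. Handling these corrections uniformly, while decomposing the mixed modulus $a_ia_jN$ into individual factors of the form $a_ia_jkp$ with $k\in\MOD(\beta)$ so as to keep every introduced modulus within the $\MOD_p$ pattern (rather than collapsing to the single modulus $\lcm\MOD(\beta)$, which need not fit), is the delicate bookkeeping that drives the construction.
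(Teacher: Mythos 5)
Your proposal follows essentially the same strategy as the paper's proof: you partition $\bZ$ by the rational breakpoints $t/a$ (accounting for their possible orderings and coincidences), exploit $\lcm\MOD(\beta)$-periodicity within homogeneous intervals so that blocks of length $p\cdot\lcm\MOD(\beta)$ contribute $0$ modulo $p$, reduce the count in each bounded interval to a short initial segment handled via Lemma~\ref{L-neu}-substitutions, enforce finiteness by demanding zero witnesses in the two unbounded outer pieces, and assemble $\gamma$ as a disjunction over configurations with residues summing to $q$ modulo $p$. The bookkeeping you flag at the end (factoring the mixed modulus into pieces $a_ia_jkp$ with $k\in\MOD(\beta)$ to stay within $\MOD_p(\beta)$, and handling non-integer breakpoints via the offset parameter $c$ of Lemma~\ref{L-neu}) is exactly what the paper's Claim~\ref{Cl-i} carries out.
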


The proof of this lemma requires several claims and definitions that
we demonstrate first, the actual proof of Lemma~\ref{LB'} can be found
on page~\pageref{Page-proof-LB'}. Its idea is to split the integers
into finitely many intervals (depending on the set of terms that
appear in $\beta$) and to express the number (modulo $p$) of witnesses
for $\beta$ in any such interval by a quantifier-free formula. The
claims below consider different types of such intervals.

Let $N=\lcm\bigl(\MOD(\beta)\bigr)$.  Let $T$ be the set of all pairs
$(a,t)$ such that $\beta$ contains an atomic formula of the form
$ax<t$ or $t<ax$ with $a>0$ (if no such formula exists, set
$T=\bigl\{(1,0)\bigr\}$).

Let $S$ be some non-empty subset of $T$ and let $\prec$ be a strict
linear order on~$S$. We call an assignment $f$ \emph{consistent with
  $(S,\prec)$} if the following hold:
\begin{itemize}
\item $\frac{f(s_1)}{a_1} < \frac{f(s_2)}{a_2} \iff
  (a_1,s_1)\prec(a_2,s_2)$ for all $(a_1,s_1),(a_2,s_2)\in S$
\item for all $(a_1,t_1)\in T$, there exists $(a_2,s_2)\in S$ with
  $\frac{f(t_1)}{a_1}=\frac{f(s_2)}{a_2}$.
\end{itemize}
In the following, let
$S=\bigl\{(a_1,s_1),(a_2,s_2),\dots,(a_n,s_n)\bigr\}$ with
$(a_1,s_1)\prec (a_2,s_2)\prec \dots \prec (a_n,s_n)$. Then any
assignment $f$ that is consistent with $(S,\prec)$ divides $\bZ$ into
the open intervals\footnote{Of course, these intervals are considered
  as sets of integers so that the terms ``open'' and ``closed'' are
  to be understood as ``excluding / including the given bounds if they
  happen to be integers''.}
$\left(-\infty,\frac{f(s_1)}{a_1}\right)$,
$\left(\frac{f(s_i)}{a_i},\frac{f(s_{i+1})}{a_{i+1}}\right)$ for
$1\le i<n$, and $\left(\frac{f(s_n)}{a_n},\infty\right)$, and the
(singleton) closed intervals
$\left[\frac{f(s_j)}{a_j},\frac{f(s_j)}{a_j}\right]$ for
$1\le j\le n$. The following formulas describe (modulo $p$) the number
of witnesses for $\beta$ in these intervals (for $0\le r<p$):
  \begin{align*}
    \beta_{0,r} &= \exists^{(r,p)}x\colon (a_1x<s_1 \land \beta) &
    \beta_{n,r} &= \exists^{(r,p)}x\colon (s_n<a_nx\land\beta)\\
    \beta_{i,r} &= \exists^{(r,p)}x\colon (s_i<a_ix \land a_{i+1} x < s_{i+1} 
                                         \land \beta)&
    \beta'_{j,r} &= \exists^{(r,p)}x\colon (a_jx=s_j\land\beta)
  \end{align*}
Now consider the formula
  \[
    \varphi^\prec =
    \bigvee
     \left(
       \bigwedge_{0\le i\le n}   \beta_{i,r_i}\\
       \land \displaystyle \bigwedge_{1\le i\le n} \beta'_{i,r_i'}
     \right)
  \]
where the disjunction extends over all tuples
$(r_0,r_1,\dots,r_n,r_1',r_2'\dots,r_n')$ of integers from the set
$\{0,1,\dots,p-1\}$ that, modulo $p$, sum up to $q$.  For any
assignment~$f$ consistent with $(S,\prec)$, we get
$f\bigl(\exists^{(q,p)}x\colon\beta\bigr) =
f\bigl(\varphi^\prec\bigr)$.  In order to construct $\gamma$ as
claimed in Lemma~\ref{LB'}, we next eliminate the counting quantifiers
from the formulas $\beta_{0,r}$, $\beta_{i,r}$, $\beta_{n,r}$, and
$\beta'_{j,r}$. In this elimination procedure (detailed in the
following claims), we will assume the assignment $f$ to be consistent
with $(S,\prec)$.

\begin{clm}\label{Cl-0,n+1}
  Let $0\le r<p$. There exists a Boolean combination
  $\gamma^\prec_{0,r}$ of $x$-free atomic formulas such that the
  triple $(\beta,\gamma^\prec_{0,r},p)$ satisfies
  Condition~\eqref{eq:bedingung} and
  $f(\beta_{0,r}) = f(\gamma^\prec_{0,r})$ for all assignments~$f$
  that are consistent with~$(S,\prec)$.
\end{clm}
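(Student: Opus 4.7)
My plan is first to observe how the truth value of $\beta$ simplifies on the open leftmost interval $I=\bigl(-\infty,f(s_1)/a_1\bigr)$. Consistency of $f$ with $(S,\prec)$ gives, for every $(a,t)\in T$, the inequality $f(s_1)/a_1\le f(t)/a$, so for every integer $x\in I$ one has $ax<f(t)$ and $f(t)\not<ax$. Hence inside $I$ every comparison atom of $\beta$ involving $x$ has a fixed truth value, and the only remaining $x$-dependence of $\beta$ comes through the modulo atoms $a'x\equiv_k t$. Since $k\mid N$ for $N=\lcm\MOD(\beta)$, the restriction of $\beta$ to $I$ is $N$-periodic in~$x$.

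The crucial consequence is that, for every residue $\rho\in\{0,\ldots,N-1\}$, infinitely many $x\in I$ satisfy $x\equiv\rho\pmod N$. Therefore the set of witnesses of $a_1x<s_1\land\beta$ is either empty or infinite: it is empty exactly when, for every residue $\rho$, the simplified formula evaluates to $\false$. Because the semantics of $\exists^{(r,p)}$ requires the witness set to be \emph{finite}, this forces $f(\beta_{0,r})=\true$ iff $r=0$ \emph{and} the set is empty.

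Concretely, I would construct $\gamma^\prec_{0,r}$ as follows. Set $\gamma^\prec_{0,r}:=\false$ whenever $r\neq0$. For $r=0$, for each $\rho\in\{0,\ldots,N-1\}$ define $\beta^{\mathrm{left}}_\rho$ from $\beta$ by replacing every atom $a'x<s$ (with $a'>0$) by $\true$, every atom $s<a'x$ by $\false$, and every atom $a'x\equiv_k s$ by the $x$-free atom $a'\rho\equiv_k s$ (equivalently, $0\equiv_k s-a'\rho$). The $x$-free atoms of $\beta$ are left untouched. Then set
\[
\gamma^\prec_{0,0}\;:=\;\bigwedge_{\rho=0}^{N-1}\lnot\,\beta^{\mathrm{left}}_\rho\,.
\]
By the periodicity argument above, for any assignment $f$ consistent with $(S,\prec)$ one has $f(\beta_{0,r})=f(\gamma^\prec_{0,r})$.

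It remains to verify Condition~\eqref{eq:bedingung}. Since $\COEFF$ and $\CONST$ only track the $<$-atoms, and the $<$-atoms of $\gamma^\prec_{0,0}$ are exactly the $x$-free $<$-atoms of $\beta$, we get $\COEFF(\gamma^\prec_{0,r})\subseteq\COEFF(\beta)\subseteq\COEFF_p(\beta)$ and $\CONST(\gamma^\prec_{0,r})\subseteq\CONST(\beta)\subseteq\CONST_p(\beta)$ (taking $a_1=1,a_3=0$ resp. $a_1=1,a_2=0,c=0$ in the definitions). The modulo atoms of $\gamma^\prec_{0,0}$ keep the original modulus~$k$, hence $\MOD(\gamma^\prec_{0,r})\subseteq\MOD(\beta)\subseteq\MOD_p(\beta)$. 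The one point worth noting is that, although the substitution $x\mapsto\rho$ can introduce a large constant $a'\rho$ inside a modulo atom, this has no effect on $\CONST$, which by definition only records constants of $<$-atoms.

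The main conceptual obstacle is thus not combinatorial but semantic: because the leftmost interval is unbounded, the $N$-periodicity of $\beta$ makes the only possible finite witness count equal to $0$, which forces the $r=0$ case to reduce to a conjunction of negated residue-evaluations rather than to a more elaborate counting formula. Once this is recognised, the construction and the bookkeeping for~\eqref{eq:bedingung} are straightforward.
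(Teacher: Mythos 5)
Your proof is correct, and it matches the paper's first step (the $N$-periodicity argument on the unbounded leftmost interval, which forces the witness set to be empty or infinite, so that $\beta_{0,r}$ can only hold when $r=0$) but then diverges in how the $r=0$ case is discharged. The paper observes that $f(\beta_{0,0})=f\bigl(\lnot\exists x\colon(a_1x<s_1\land\beta)\bigr)$ and delegates to Lemma~\ref{LB}, taking $\gamma^\prec_{0,0}=\lnot\alpha$ for the quantifier-free $\alpha$ produced there; Condition~\eqref{eq:bedingung} then comes for free from Lemma~\ref{LB} after noting that $a_1x<s_1\land\beta$ has the same $\COEFF$, $\CONST$, $\MOD$ as $\beta$. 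You instead build the quantifier-free formula directly, enumerating residues $\rho\in\{0,\dots,N-1\}$ and ``evaluating'' $\beta$ on the leftmost interval at each residue; this is self-contained and makes the bookkeeping for~\eqref{eq:bedingung} immediate, since the surviving $<$-atoms are exactly the $x$-free ones from $\beta$ and the substituted constants $a'\rho$ only land in $\equiv_k$-atoms, which $\CONST$ and $\COEFF$ ignore by definition. The trade-off is modularity versus directness: the paper's version reuses the already-established Lemma~\ref{LB} (keeping the argument short and uniform across all four claims), while yours avoids the lemma at the cost of re-deriving a special case of it. One small presentational point: you should fix representatives for $\true$ and $\false$ as atomic formulas (e.g.\ $\lnot(0<0)$ and $0<0$), as the paper's grammar has no propositional constants; their contributions to $\COEFF$, $\CONST$, $\MOD$ stay within $\{0,\pm1,\pm2\}$ and $\{1\}$, so the verification is unaffected.
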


\begin{proof}
  Let $f$ be an assignment that is consistent with $(S,\prec)$.  Let
  $b\in\bZ$ with $a_1b<f(s_1)$. For all $(a,t)\in T$, we have
  $b<\frac{f(s_1)}{a_1}\le \frac{f(t)}{a}$ and therefore
  $a(b-N)<ab<f(t)$. Consequently, $b$ and $b-N$ satisfy the same
  inequalities from $\beta$. Since $N$ is a multiple of all moduli
  appearing in $\beta$, the same holds for all modulo
  constraints. Hence we obtain
    \[
      f_{x/b}(\beta)=f_{x/b-N}(\beta)\,.
    \]
  Consequently, there are infinitely many $b\in\bZ$ satisfying
  $f_{x/b}(a_1x<s_1\land \beta)=\true$ or none. For $r\neq0$, we can
  therefore set $\gamma^\prec_{0,r}=(0<0)$ ensuring
  Condition~\eqref{eq:bedingung} for the triple
  $(\beta,\gamma_{0,r}^\prec,p)$. It remains to consider the case
  $r=0$. Note that
    \[
      f(\beta_{0,0})=f\bigl(\exists^{(0,p)}x\colon (a_1x<s_1\land\beta)\bigr)=
        f\bigl(\neg\exists x\colon(a_1x<s_1\land\beta)\bigr)
    \]
   since, if any, infinitely many integers $b<\frac{f(s_1)}{a_1}$
  satisfy $f_{x/b}(\beta)=\true$.  Let $\alpha$ be the formula
  obtained by Lemma~\ref{LB} from the formula
  $\exists x\colon(a_1x<s_1\land\beta)$ and set
  $\gamma_{0,0}^\prec=\lnot\alpha$.  Since $a_1x<s_1$ or $s_1<a_1x$ is
  an atomic formula from $\beta$, we get
  $\COEFF(\beta)=\COEFF(a_1x<s_1\land\beta)$ and similarly for
  $\CONST$ and $\MOD$. Hence the triple $(\beta,\alpha,p)$ and therefore
  $(\beta,\gamma^\prec_{0,0},p)$ satisfies
  Condition~\eqref{eq:bedingung}. 
\end{proof}

\noindent 
Symmetrically, we also get the following:

\begin{clm}\label{Cl-n+1}
  Let $0\le r<p$. There exists a Boolean combination
  $\gamma^\prec_{n,r}$ of $x$-free atomic formulas such that the
  triple $(\beta,\gamma^\prec_{n,r},p)$ satisfies
  Condition~\eqref{eq:bedingung} and
  $f(\beta_{n,r}) = f(\gamma^\prec_{n,r})$ for all assignments $f$
  that are consistent with~$(S,\prec)$.
\end{clm}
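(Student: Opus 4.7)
The plan is to mirror the proof of Claim~\ref{Cl-0,n+1}, exploiting the symmetry between the unbounded interval $\left(-\infty,\frac{f(s_1)}{a_1}\right)$ and $\left(\frac{f(s_n)}{a_n},\infty\right)$. First I would fix an assignment $f$ consistent with $(S,\prec)$ and consider any integer $b\in\bZ$ with $f(s_n)<a_nb$. For any pair $(a,t)\in T$ the consistency condition yields $\frac{f(t)}{a}\le\frac{f(s_n)}{a_n}<b<b+N$, which gives $f(t)<ab<a(b+N)$; hence $b$ and $b+N$ satisfy exactly the same inequality atoms from $\beta$. Since $N=\lcm\MOD(\beta)$ is divisible by every modulus occurring in $\beta$, both assignments also agree on all modulo atoms. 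It follows that $f_{x/b}(\beta)=f_{x/b+N}(\beta)$.

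The key consequence is that the set $\bigl\{b\in\bZ\mid f_{x/b}(s_n<a_nx\land\beta)=\true\bigr\}$ is either empty or infinite. Therefore, when $r\neq 0$, the counting condition cannot be met, and I would simply set $\gamma^\prec_{n,r}:=(0<0)$; this formula contains no non-trivial coefficients, constants, or moduli, so the triple $(\beta,\gamma^\prec_{n,r},p)$ trivially satisfies Condition~\eqref{eq:bedingung}.

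For $r=0$, the observation above gives
\[
  f(\beta_{n,0})
  =f\bigl(\exists^{(0,p)}x\colon(s_n<a_nx\land\beta)\bigr)
  =f\bigl(\lnot\exists x\colon(s_n<a_nx\land\beta)\bigr)\,,
\]
since any satisfying $b$ would yield infinitely many witnesses, and thus a residue of $0$ modulo $p$. I would then apply Lemma~\ref{LB} to the inner formula $\exists x\colon(s_n<a_nx\land\beta)$ to obtain an equivalent Boolean combination $\alpha$ of $x$-free atomic formulas with $(s_n<a_nx\land\beta,\alpha,1)$ satisfying Condition~\eqref{eq:bedingung}, and set $\gamma^\prec_{n,0}:=\lnot\alpha$.

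The only point requiring a small check is that the auxiliary conjunct $s_n<a_nx$ does not enlarge the relevant sets: since this inequality already appears as an atomic subformula of $\beta$, we have $\COEFF(s_n<a_nx\land\beta)=\COEFF(\beta)$, and analogously for $\CONST$ and $\MOD$. Consequently Condition~\eqref{eq:bedingung} transfers from $(s_n<a_nx\land\beta,\alpha,1)$ to $(\beta,\alpha,p)$ and hence to $(\beta,\gamma^\prec_{n,0},p)$, since negation preserves all three sets and $\CONST_1\subseteq\CONST_p$. I do not anticipate a real obstacle here; the entire argument is the mirror image of the previous claim with $a_1,s_1$ replaced by $a_n,s_n$ and shifts by $-N$ replaced by $+N$.
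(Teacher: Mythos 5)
Your proposal is correct and is essentially the paper's own argument: the paper proves Claim~\ref{Cl-0,n+1} in full and then states Claim~\ref{Cl-n+1} with only the word ``Symmetrically,'' so the expected proof is exactly the mirror image you carry out (replacing the leftmost unbounded interval by the rightmost, and the shift $b\mapsto b-N$ by $b\mapsto b+N$). One small imprecision worth noting: you assert that ``this inequality [$s_n<a_nx$] already appears as an atomic subformula of $\beta$,'' but by the definition of $T$ it is only guaranteed that $a_nx<s_n$ \emph{or} $s_n<a_nx$ appears; since both yield the same coefficients, constants and moduli, the conclusion $\COEFF(s_n<a_nx\land\beta)=\COEFF(\beta)$ (and analogously for $\CONST$, $\MOD$) still holds, but the phrasing should match the paper's careful ``or'' formulation.
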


We next want to eliminate the initial quantifier $\exists^{(r,p)}$
from $\beta_{i,r}$ for $1\le i< n$, i.e., we consider the integers in
the open interval
$\left(\frac{f(s_i)}{a_i},\frac{f(s_{i+1})}{a_{i+1}}\right)$. To get
the idea of the rather long proof, consider the formula
\[
  \exists^{(0,2)}x\colon y<x<z\land x\equiv_3 y+z
\]
and assume that the assignment $f$ satisfies $f(y)<f(z)$. Then the
witnesses for $\varphi:= (x\equiv_3 y+z)$ in the interval
$\bigl(f(y),f(z)\bigr)$ are $3$-periodic. Consequently, any
subinterval of length $6=3\cdot 2$ contains an even number of
witnesses for $\varphi$. It follows that we only need to count the
number of witnesses of $\varphi$ in the interval
$\bigl(f(y),f(y)+b\bigr)$ where $1\le b\le 6$ is the unique number
satisfying $6\mid f(z)-b$ (since then the length of the interval
$\bigl[f(y)+b,f(z)\bigr)$ is a multiple of $6$).

The main additional difficulty in the following proof is based on the
occurrence of subformulas of the form $ax<t$ for $a>0$.

\begin{clm}\label{Cl-i}
  Let $1\le i<n$ and $0\le r<p$. There exists a Boolean
  combination~$\gamma^\prec_{i,r}$ of $x$-free atomic formulas such
  that the triple $(\beta,\gamma^\prec_{i,r},p)$ satisfies
  Condition~\eqref{eq:bedingung} and
  $f(\beta_{i,r}) = f(\gamma^\prec_{i,r})$ for all assignments~$f$
  consistent with~$(S,\prec)$.
\end{clm}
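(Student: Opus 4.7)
The plan is to exploit two successive reductions. First, by consistency of $f$ with $(S,\prec)$, every strict inequality $a'x < t$ or $t < a'x$ in $\beta$ takes a \emph{constant} truth value on the open interval $I = \bigl(f(s_i)/a_i,\, f(s_{i+1})/a_{i+1}\bigr)$: indeed, the pair $(a',t)$ lies in $T$, so by consistency $f(t)/a'$ coincides with some $f(s_j)/a_j$, and the position of this breakpoint with respect to $I$ is dictated by $\prec$ and by $i$, not by $x$. Replacing each such atom by its fixed truth value yields a Boolean combination $\beta^\ast$ of modular atomic formulas $a'x\equiv_k t$ (all taken from $\beta$) that agrees with $\beta$ on every integer in $I$.

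Next, I exploit periodicity. The set of integer witnesses of $\beta^\ast$ is periodic with period $N := \lcm\MOD(\beta)$, so every run of $pN$ consecutive integers contains $p$ copies of a fixed per-period count and hence a multiple of $p$ witnesses. With $L = \lfloor f(s_i)/a_i\rfloor + 1$ and $R+1 = \lceil f(s_{i+1})/a_{i+1}\rceil$, the formula $\beta_{i,r}$ holds under $f$ iff $\bigl|\{L\le x\le R : \beta^\ast(x)\}\bigr|\equiv r\pmod{p}$, and this count modulo $p$ depends only on the residues of $L$ and $R+1$ modulo $Np$, which in turn are determined by $f(s_i)$ modulo $a_iNp$ and $f(s_{i+1})$ modulo $a_{i+1}Np$.

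The formula $\gamma^\prec_{i,r}$ is then assembled as a finite disjunction over all pairs $(u,v)$ of residues for which the associated count, precomputed at construction time, equals $r$ modulo $p$; each disjunct is a conjunction of modular atoms in $s_i$ and in $s_{i+1}$ that pin down these residues. Since $\MOD_p(\beta)$ need not contain $a_iNp$ as a single modulus, I split the constraint $s_i\equiv_{a_iNp}u$ into $\bigwedge_{k\in\MOD(\beta)} s_i \equiv_{a_ikp} u_k$ with $u_k = u\bmod a_ikp$; every modulus $a_ikp$ belongs to $\MOD_p(\beta)$ (take $a_1=a_i$, $a_2=1$, $k_1=k$, $k_2=p$), and by the Chinese Remainder Theorem the joint effect equals $s_i\equiv_{a_iNp}u$ since $\lcm_k a_ikp = a_iNp$. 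The same is done for $s_{i+1}$. Condition~\eqref{eq:bedingung} is then inherited: the only new atoms are modular constraints on $s_i$ and $s_{i+1}$ whose coefficients and constants are already controlled by those of these $x$-free terms.

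The main obstacle is precisely this third step, namely packaging an $Np$-periodic count into atoms whose moduli have the restricted shape $a_1a_2k_1k_2$ allowed by $\MOD_p(\beta)$. One cannot simply use a single modulus $a_iNp$, and one must instead distribute the encoding across the moduli appearing in $\MOD(\beta)$ and invoke the CRT. A secondary technicality is relating the residue of $f(s_i)$ modulo $a_iNp$ to the residue of $L = \lfloor f(s_i)/a_i\rfloor + 1$ modulo $Np$; this is handled by folding the possible offsets $1\le a_iL - f(s_i)\le a_i$ into the finite case distinction on $(u,v)$, and similarly for the upper endpoint.
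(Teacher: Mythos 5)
Your overall plan---use $N$-periodicity of the witnesses inside $I$ and finish with a finite case analysis on residues, splitting the big modulus via the CRT so that each factor lands in $\MOD_p(\beta)$---is sound in spirit, and the CRT step itself is fine. But the first reduction has a genuine gap. You assert that for each atom $a'x<t$ (or $t<a'x$) of $\beta$ the truth value on $I$ is ``dictated by $\prec$ and by $i$'', so that one can read off a single formula $\beta^\ast$ from $(S,\prec,i)$. For $(a',t)\in T\setminus S$, consistency only guarantees that $f(t)/a'$ coincides with \emph{some} $f(s_j)/a_j$; the index $j$, hence whether the breakpoint lies left or right of $I$, can differ between two assignments consistent with $(S,\prec)$. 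For instance with $T=\{(1,0),(1,y),(1,z)\}$, $S=\{(1,0),(1,y)\}$, $(1,0)\prec(1,y)$ and $i=1$, the assignments $f_1(y)=5,f_1(z)=0$ and $f_2(y)=f_2(z)=5$ are both consistent, yet $x<z$ is false on $(0,5)$ under $f_1$ and true under $f_2$. So $\beta^\ast$, and with it your ``precomputed count'', is not a function of $(S,\prec,i)$ alone.

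There is a second, independent over-simplification in the claim that the count modulo $p$ ``depends only on the residues of $L$ and $R+1$ modulo $Np$''. Even after freezing the inequalities, $\beta^\ast$ still contains the modular atoms $a'x\equiv_k t$ of $\beta$, and the witness pattern inside one period of length $N$ depends on $f(t)\bmod k$ for every such atom. For example, for $\beta=(0<x)\land(x<y)\land(x\equiv_2 z)$ with $p=2$, the assignments $f_1(y)=6,f_1(z)=0$ and $f_2(y)=6,f_2(z)=1$ give counts $2$ and $3$ in $(0,6)$ although $f(s_i)=0$ and $f(y)\bmod 4$ agree. Both gaps are repairable: one has to enlarge the disjunction to also branch on which $(a_j,s_j)$ matches each $(a',t)\in T$ (via $x$-free atoms $a_jt=a's_j$) and on the residues $f(t)\bmod k$, all of which stay within Condition~\eqref{eq:bedingung}; but as written the construction misses these dependencies. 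The paper's proof sidesteps both issues by not evaluating anything: it parameterises the integers of a short initial subinterval $I_0$ as $\frac{f(s_i)+d}{a_i}$ for $d$ in a bounded range and substitutes $a_ix=s_i+d$ into $\beta$ via Lemma~\ref{L-neu}; the resulting $x$-free formulas $\beta_{a_i,s_i+d}$ carry the $f$-dependence of the inequalities and of the modular atoms symbolically, so no case analysis on which breakpoint matches or on auxiliary residues is ever needed.
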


\begin{proof}
  Let $f$ be any assignment that is consistent with~$(S,\prec)$ and
  let $W\subseteq\bZ$ be the set of witnesses for $\beta$, i.e.,
    \[
      W=\{w\in\bZ\mid f_{x/w}(\beta)=\true\}\,.
    \]
  Furthermore, we write $I$ for the interval
  $\left(\frac{f(s_i)}{a_i},\frac{f(s_{i+1})}{a_{i+1}}\right)$. Our
  task is to express, by a quantifier-free formula and irrespective of
  the concrete $(S,\prec)$-consistent assignment $f$, that
  $|I\cap W|\equiv_p r$ holds.

  We first split the interval $I$ into an initial segment of length
  $\le pN$ and subsequent subintervals of length $pN$ each. To this
  aim, let $b$ be the unique integer from the set
  $\{1,2,\dots,a_ia_{i+1}pN\}$ with
    \[
      b\equiv_{a_ia_{i+1}pN} a_i f(s_{i+1}) - a_{i+1} f(s_i)\,.
    \]
  Since $N$ is the least common multiple of $\MOD(\beta)$, this is
  equivalent to requiring that the formula
    \[
      \bigwedge_{m\in\MOD(\beta)} b\equiv_{a_ia_{i+1}pm} a_i s_{i+1} - a_{i+1} s_i
    \]
  evaluates to $\true$ under the assignment $f$.
  Note that $a_ia_{i+1}pN$ divides $a_i f(s_{i+1}) - a_{i+1} f(s_i) - b$, hence
    \[
      K:= \frac{a_i f(s_{i+1}) - a_{i+1} f(s_i) - b}{a_ia_{i+1}pN}
    \]
  is an integer. Even more,
  $\frac{f(s_i)}{a_i}<\frac{f(s_{i+1})}{a_{i+1}}$ implies
  $b\le a_i f(s_{i+1}) - a_{i+1} f(s_i)$ and therefore $K\in\bN$.  Now
  we define the following intervals:
  \begin{itemize}
  \item $I_0=\left(\frac{f(s_i)}{a_i},\frac{f(s_i)}{a_i}+\frac{b}{a_i a_{i+1}}
    \right)$
  \item
    $J_k=\left[\frac{f(s_i)}{a_i}+\frac{b}{a_i a_{i+1}}+k\cdot
      pN,\frac{f(s_i)}{a_i}+\frac{b}{a_i a_{i+1}}+(k+1)\cdot
      pN\right)$ for $0\le k<K$
  \end{itemize}
  Note that these intervals form a partition of the interval $I$.

  Let $c\in \bZ$ with
  $\frac{f(s_i)}{a_i}<c<c+N<\frac{f(s_{i+1})}{a_{i+1}}$, i.e.,
  $c,c+N\in I$. Since $f$ is consistent with $(S,\prec)$, for any
  $(a,t)\in T$, we have $\frac{f(t)}{a} \le \frac{f(s_i)}{a_i}<c<c+N$
  or $c<c+N<\frac{f(s_{i+1})}{a_{i+1}}\le\frac{f(t)}{a}$. Hence $c$
  and $c+N$ satisfy the same inequalities from $\beta$. Since $N$ is a
  multiple of all moduli appearing in $\beta$, it follows that $c$ and
  $c+N$ also satisfy the same modulo constraints from $\beta$. Hence
  we get
    \[
      f_{x/c}(\beta)=f_{x/c+N}(\beta)\,.
    \]
  It follows that the set $W$ of witnesses for $\beta$ within the
  interval $I$ is $N$-periodic. Since the interval $J_k\subseteq I$ is
  of length $pN$, it follows that $|J_k\cap W|\equiv_p 0$ for all
  $0\le k<K$. Consequently,
    \begin{align*}
      |I\cap W| &= |I_0\cap W| + \sum_{0\le k<K}|J_k\cap W|\\
      &\equiv_p |I_0\cap W|\,.
    \end{align*}
  It remains to construct a formula expressing that the interval $I_0$
  has, modulo $p$, $r$ witnesses for $\beta$.

  To characterise the elements of $I_0$, let $e\in\bZ$ be
  arbitrary. By the definition of $I_0$, we have $e\in I_0$ iff
  $a_{i+1} f(s_i) < a_i a_{i+1} e < a_{i+1} f(s_i)+b$. This is clearly
  equivalent to $0 < a_i e - f(s_i) <
  \frac{b}{a_{i+1}}$. Equivalently, there exists an integer $d$ with
    \[
      0 < d \le \left\lfloor\frac{b-1}{a_{i+1}}\right\rfloor \text{ and }
      e=\frac{f(s_i)+d}{a_i}\,.
    \]
  Set
  $M=\Bigl\{1,2,\dots,\bigl\lfloor\frac{b-1}{a_{i+1}}\bigr\rfloor\Bigr\}$.
  Then we showed
    \[
      I_0 =\left\{\left.\frac{f(s_i)+d}{a_i}\ \right|
        d\in M, f(s_i)+d\equiv_{a_i}0\right\}\,.
    \]
  Now let $d\in M$ with $f(s_i+d)\equiv_{a_i}0$ be arbitrary and set
  $e=\frac{f(s_i+d)}{a_i}$. Then we have
    \[
      f_{x/e}(a_i x) = a_i e = f_{x/e}(s_i+d)\,.
    \]
  Hence, by Lemma~\ref{L-neu}, we get
    \[
      f_{x/e}(\beta) = f_{x/e}(\beta_{a_i,s_i+d}) = f(\beta_{a_i,s_i+d})
    \]
  where the last equality holds since $\beta_{a_i,s_i+d}$ is
  $x$-free. It follows that $e\in W$ iff
  $f(\beta_{a_i,s_i+d})=\true$. Hence we showed that $I_0\cap W$ is the
  set of fractions $\frac{f(s_i+d)}{a_i}$ for $d\in M$ with
  $f(s_i+d)\equiv_{a_i}0$ and $f(\beta_{a_i,s_i+d})=\true$. We
  consequently get
    \begin{align*}
      |I_0\cap W| &=
        \Biggl| \biggl\{ \frac{f(s_i+d)}{a_i} \colon d\in M,
                                   f(s_i+d)\equiv_{a_i}0, 
                                   f(\beta_{a_i,s_i+d})=\true
                \biggr\}
        \Biggr|\\
      &=
        \biggl| \Bigl\{ d\in M \colon 
                                   f\bigl(s_i+d\equiv_{a_i}0 \land
                                        \beta_{a_i,s_i+d}\bigr)=\true
                \Bigr\}
        \biggr|\,.
    \end{align*}
  It follows that $|I_0\cap W|\equiv_p r$ iff the following formula
  $\gamma_{i,r}^\prec$ holds under the assignment $f$:
    \[
      \bigvee_{1\le b\le a_ia_{i+1}pN}
        \left(
          \begin{array}{ll}
            & \displaystyle
              \bigwedge_{m\in\MOD(\beta)} b\equiv_{a_ia_{i+1}pm} a_i s_{i+1} - a_{i+1} s_i\\
            \land
              & \displaystyle
            \bigvee_{\genfrac{}{}{0pt}{}{W_0\subseteq M}{|W_0|\equiv_p r}}
            \left(
                \begin{array}{ll}
                  &\displaystyle
                  \bigwedge_{d\in W_0}
                    \bigl( s_i+d\equiv_{a_i}0 \land
                           \beta_{a_i,s_i+d}
                    \bigr)\\
                  \land&\displaystyle
                  \bigwedge_{d\in M\setminus W_0}
                    \lnot
                    \bigl( s_i+d\equiv_{a_i}0 \land
                           \beta_{a_i,s_i+d}
                    \bigr)
                \end{array}
                \right)
          \end{array}
        \right)
    \]
  
  We finally verify Condition~\eqref{eq:bedingung} for the triple
  $(\beta,\gamma_{i,r}^\prec,p)$. Note that any element of
  $\COEFF(\gamma_{i,r}^\prec)$ or $\CONST(\gamma_{i,r}^\prec)$ appears
  in a subformula of the form $\beta_{a_i,s_i+d}$ for some integer
  $d\in M$ and therefore
    $1\le d\le \frac{b-1}{a_{i+1}}<a_ipN$. Hence
  $\COEFF(\gamma_{i,r}^\prec)\subseteq\COEFF_p(\beta)$ and
  $\CONST(\gamma_{i,r}^\prec)\subseteq\CONST_p(\beta)$ follow from
  Lemma~\ref{L-neu}.

  Now let $p_1\in\MOD(\gamma_{i,r}^\prec)$. There are three cases to be
  considered:
  \begin{itemize}
  \item $p_1=a_ia_{i+1}pm$ for some $m\in\MOD(\beta)$. Then $p_1\in\MOD_p(\beta)$.
  \item $p_1=a_i$. Then $p_1\in\COEFF(\beta)\subseteq\MOD_p(\beta)$
  \item $p_1\in\MOD(\beta_{a_i,s_i+d})$ for some integer
    $d$ with
      \[
        1\le d \le a_ipN\,.
      \]
    Then, by Lemma~\ref{L-neu}, $p_1\in\MOD_p(\beta)$.  
  \end{itemize}
  Thus, indeed, $\MOD(\gamma_{i,r}^\prec)\subseteq\MOD_p(\beta)$ which
  finishes the proof of Claim~\ref{Cl-i}.
\end{proof}

\begin{clm}\label{Cl-i'}
  Let $1\le j\le n$ and $0\le r<p$. There exists a Boolean combination
  $\delta^\prec_{j,r}$ of $x$-free atomic formulas such that
  $(\beta,\delta^\prec_{j,r},p)$ satisfies
  Condition~\eqref{eq:bedingung} and, for all assignments $f$ (even
  those that are not consistent with~$(S,\prec)$),
  $f(\beta'_{j,r}) = f(\delta^\prec_{j,r})$.
\end{clm}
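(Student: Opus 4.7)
The plan is to exploit the fact that the equation $a_j x = s_j$ has at most one integer solution, namely $f(s_j)/a_j$ when $a_j \mid f(s_j)$. Consequently, for every assignment $f$ the number of witnesses of $a_j x = s_j \land \beta$ is either $0$ or $1$, and the modulo-counting constraint $\exists^{(r,p)} x$ resolves deterministically in terms of $r$: it holds iff either $r = 0$ and the witness count is zero, or $r = 1$ and the witness count is one. For $r \ge 2$ (which requires $p \ge 3$) the formula $\beta'_{j,r}$ is always false. Note that this argument makes no reference to the order $\prec$, which is precisely why the claim is stated for arbitrary assignments.

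First I would apply Lemma~\ref{L-neu} with $a = a_j$, $t = s_j$ and $c = 0$ (the bound $|c| \le a \cdot p \cdot \lcm\MOD(\beta)$ is trivial) to obtain an $x$-free Boolean combination $\beta_{a_j, s_j}$ such that $(\beta, \beta_{a_j, s_j}, p)$ satisfies Condition~\eqref{eq:bedingung} and $g(\beta) = g(\beta_{a_j, s_j})$ for every assignment $g$ with $g(a_j x) = g(s_j)$. This is precisely the tool needed to eliminate $x$ once we know $x$ equals $s_j / a_j$.

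Second, I would define
\[
  \delta^\prec_{j, r} =
  \begin{cases}
    \lnot\bigl(s_j \equiv_{a_j} 0 \land \beta_{a_j, s_j}\bigr) & \text{if } r = 0,\\
    s_j \equiv_{a_j} 0 \land \beta_{a_j, s_j} & \text{if } r = 1,\\
    (0 < 0) & \text{if } r \ge 2,
  \end{cases}
\]
and verify $f(\beta'_{j, r}) = f(\delta^\prec_{j, r})$ by the case analysis above, using that $f(s_j \equiv_{a_j} 0) = \true$ expresses exactly $a_j \mid f(s_j)$, and in that case $x_0 := f(s_j)/a_j$ is the unique candidate witness with $f_{x/x_0}(\beta) = f(\beta_{a_j, s_j})$ by Lemma~\ref{L-neu}.

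Finally, I would verify Condition~\eqref{eq:bedingung} for the triple $(\beta, \delta^\prec_{j, r}, p)$. Beyond the subformula $\beta_{a_j, s_j}$ handled by Lemma~\ref{L-neu}, the only additional atoms are $s_j \equiv_{a_j} 0$ and $0 < 0$. For $s_j \equiv_{a_j} 0$, the coefficients and constant are inherited from $s_j$ (which appears in the atom $a_j x < s_j$ or $s_j < a_j x$ of $\beta$), and the modulus $a_j$ lies in $\COEFF(\beta)$, so $a_j = 1 \cdot a_j \cdot 1 \cdot 1 \in \MOD_p(\beta)$ using $1 \in \MOD(\beta)$. The atom $0 < 0$ is trivial since $\{0, \pm 1, \pm 2\}$ and $\{1\}$ are always contained in the relevant closure sets. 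I do not expect a real obstacle here; this claim is significantly easier than Claim~\ref{Cl-i} precisely because the equation $a_j x = s_j$ forces uniqueness of $x$ and thus bypasses the interval-and-periodicity bookkeeping that dominates that proof.
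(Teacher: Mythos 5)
Your proof is correct and follows essentially the same approach as the paper: both exploit that $a_j x = s_j$ admits at most one integer witness, so $\beta'_{j,r}$ is trivially false for $r\ge 2$, the case $r=1$ reduces to the plain existential $\exists x\colon(a_jx=s_j\land\beta)$, and the case $r=0$ is its negation. The only cosmetic difference is that you inline the elimination of that existential via Lemma~\ref{L-neu} with $c=0$ together with the divisibility test $s_j\equiv_{a_j}0$, whereas the paper obtains $\delta^\prec_{j,1}$ by invoking Lemma~\ref{LB} as a black box; both rest on the same substitution $x\mapsto s_j/a_j$.
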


\begin{proof}
  Since the term $s_j$ is $x$-free, there can be at most one witness
  for the formula $a_jx=s_j\land\beta$ (which is the quantifier-free
  part of the formula $\beta'_{j,r}$). For $r>1$, we therefore set
  $\delta_{j,r}^\prec=(0<0)$.

  For the same reason, we obtain
    \[
      \exists^{(1,p)}x\colon(a_jx=s_j\land\beta)
      \iff \exists x\colon(a_jx=s_j\land\beta)\,.
    \]
  Hence, we obtain the formula $\delta_{j,1}^\prec$ from
  Lemma~\ref{LB}. Since precisely one of the formulas
  $\delta_{j,r}^\prec$ must hold, we can set
  $\delta_{j,0}^\prec=\bigwedge_{0<r<p}\lnot\delta_{j,r}^\prec$ (which
  is equivalent to $\lnot\delta_{j,1}^\prec$).
\end{proof}

Having shown all these claims, we can now use them to finally prove
Lemma~\ref{LB'}.

\begin{proof}[Proof of Lemma~\ref{LB'}]\label{Page-proof-LB'}
  Let $S\subseteq T$ be some non-empty subset of $T$ and let $\prec$
  be a strict linear order on $S$. As above, we let
  $S=\bigl\{(a_1,s_1),\dots,(a_n,s_n)\bigl\}$ with $(a_1,s_1)\prec
  (a_2,s_2)\prec \dots\prec (a_n,s_n)$. Then set
    \begin{align*}
      \gamma^\prec=
      \bigvee
      \left(
       \bigwedge_{0\le i\le n+1}  \gamma_{i,r_i}^\prec
       \land
       \bigwedge_{1\le j\le n}  \delta_{j,r_i'}^\prec
      \right)
    \end{align*}
  where the disjunction extends over all tuples
  $(r_0,r_1,\dots,r_{n+1},r_1',r_2'\dots,r_n')$ of natural numbers
  from $\{0,1,\dots,p-1\}$ with
  $ \sum_{0\le i\le n+1}r_i+\sum_{1\le i\le n}r_i'\equiv_p q$. The
  above claims imply $f(\varphi^\prec)=f(\gamma^\prec)$ for all
  assignments $f$ that are consistent with~$(S,\prec)$. Furthermore,
  $\gamma^\prec$ is a Boolean combination of atomic formulas and the
  triple $(\beta,\gamma^\prec,p)$ satisfies
  Condition~\eqref{eq:bedingung}.

  Next consider the formula
    \[
      \alpha^\prec=
      \bigwedge_{1\le i<n}a_{i+1}s_i<a_is_{i+1}\land
      \bigwedge_{(a,t)\in T}\bigvee_{1\le i\le n}a_it=as_i\,.
    \]
  Then, for any assignment $f$, we have $f(\alpha^\prec)=\true$ if and
  only if $f$ is consistent with~$(S,\prec)$. Since $\alpha^\prec$ is a
  Boolean combination of formulas of the form $a's<at$\footnote{Write
    $\lnot a_it<as_i\land\lnot a_it>as_i$ for $a_it=as_i$.} with
  $(a,s),(a',t)\in T$, the triple $(\beta,\alpha^\prec,p)$ satisfies
  Condition~\eqref{eq:bedingung}.

  Finally, let
  \[
     \gamma=\bigvee_{(*)}(\alpha^\prec\land\gamma^\prec)
   \]
   where the disjunction $(*)$ extends over all strict linear orders
  $\prec$ on some non-empty subset of~$T$.
\end{proof}

Lemmas~\ref{LB} and \ref{LB'} above show how to eliminate a quantifier
in front of a quantifier-free formula and analyses the sets of
coefficients, constants, and moduli appearing in this process. The
following proposition summarises these results and provides bounds on
the maximal coefficients etc. Recall that
$\Prod(\varphi)=\COEFF(\varphi)\cup\MOD(\varphi)$.

\begin{prop}\label{P'}
  Let $x$ be a variable and $\alpha$ a Boolean combination of atomic
  formulas. Let furthermore $E=\exists$ or $E=\exists^{(q,p)}$ for
  some $0\le q<p$ and $2\le p$. Then there exists a Boolean combination
  $\gamma$ of $x$-free atomic formulas such that $(E x\colon\alpha)
  \Longleftrightarrow \gamma$.
  Furthermore, we have the following:
    \begin{align*}
      \max\Prod(\gamma) & \le \max\Prod(Ex\colon\alpha)^4\\
      \max\CONST(\gamma) &
          \le \max\CONST(Ex\colon\alpha)\cdot 16^{\max\Prod(Ex\colon\alpha)}
    \end{align*}
\end{prop}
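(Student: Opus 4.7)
The plan is to combine Lemmas~\ref{LB} and~\ref{LB'} and then turn the qualitative control provided by Condition~\eqref{eq:bedingung} into the quantitative bounds claimed. First I would replace $\alpha$ by an equivalent Boolean combination $\beta$ of $x$-separated atomic formulas; this standard rewriting (moving $x$-terms to one side, flipping signs when necessary) preserves the sets $\COEFF$, $\CONST$, and $\MOD$, since these already contain $\pm a$ for every coefficient $a$ and similarly for constants. Then I would invoke Lemma~\ref{LB} (which implicitly uses $p = 1$) if $E = \exists$, or Lemma~\ref{LB'} with the given modulus $p$ if $E = \exists^{(q,p)}$, obtaining $\gamma$ with $(E x \colon \alpha) \iff \gamma$ and such that the triple $(\beta, \gamma, p)$ satisfies Condition~\eqref{eq:bedingung}.

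With $\gamma$ in hand, the bounds follow from a direct unwinding of the definitions of $\COEFF_p$, $\MOD_p$, $\CONST_p$. Write $M = \max \Prod(E x \colon \alpha)$ and $C = \max \CONST(E x \colon \alpha)$, and note that $p \le M$ in the modulo case and that $\COEFF(\beta) \cup \MOD(\beta) \subseteq \Prod(E x \colon \alpha)$. For the $\Prod$ bound, any element of $\COEFF_p(\beta)$ is of the form $a_1 a_2 - a_3 a_4$ with $|a_i| \le M$, hence has absolute value at most $2 M^2 \le M^4$ (using $M \ge 2$, which holds since $2 \in \COEFF$), and any element of $\MOD_p(\beta)$ is of the form $a_1 a_2 k_1 k_2$ with every factor at most $M$, hence is at most $M^4$. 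Together these yield $\max \Prod(\gamma) \le M^4$.

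The step with actual content is the $\CONST$ bound. An element of $\CONST_p(\beta)$ has the form $a_1 c_1 - a_2 (c_2 + c)$ with $|a_i| \le M$, $|c_i| \le C$, and $|c| \le M \cdot p \cdot \lcm \MOD(\beta) \le M^2 \cdot \lcm \MOD(\beta)$, so its absolute value is at most $2 M C + M^3 \cdot \lcm \MOD(\beta)$. The hard part will be bounding $\lcm \MOD(\beta)$; here I would appeal to the elementary Chebyshev-type estimate $\lcm(1, 2, \ldots, M) \le 4^M$, which applies since $\MOD(\beta) \subseteq \{1, \ldots, M\}$. The desired inequality $2 M C + M^3 \cdot 4^M \le C \cdot 16^M$ then reduces to the two routine estimates $4 M \le 16^M$ and $M^3 \le 4^M$, combined with $C \ge 2$ (which holds since $2 \in \CONST$); all three are valid for every $M \ge 1$. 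Apart from recalling the $\lcm$ estimate, the entire argument is bookkeeping.
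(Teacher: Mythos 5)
Your proposal is correct and follows essentially the same route as the paper: reduce to an $x$-separated $\beta$, apply Lemma~\ref{LB} or \ref{LB'} to get Condition~\eqref{eq:bedingung}, then unwind the definitions of $\COEFF_p$, $\MOD_p$, $\CONST_p$ with the Chebyshev-type bound on $\lcm\{1,\dots,M\}$. The one pleasant difference is at the very end: by splitting $2MC + M^3 4^M \le C\cdot 16^M$ via $4M\le 16^M$, $M^3\le 4^M$, and $C\ge 2$, you get a chain of inequalities that holds uniformly for all $M\ge 2$, whereas the paper's chain uses $\max\Prod(\alpha)^2\le 2^{\max\Prod(\alpha)}$, which fails at $\max\Prod(\alpha)=3$ and forces a separate case there; your version quietly removes that blemish.
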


\begin{proof}
  If $E=\exists$, set $p=1$.
  Without changing the sets of coefficients etc., we can transform
  $\alpha$ into an equivalent Boolean combination~$\beta$ of
  $x$-separated atomic formulas. By Lemma~\ref{LB} or \ref{LB'}, there
  exists a Boolean combination $\gamma$ of  $x$-free atomic formulas with
  $(Ex\colon \alpha)\Longleftrightarrow\gamma$ such that the triple
  $(\alpha,\gamma,p)$ satisfies Condition~\eqref{eq:bedingung}.

  Note that $\max\COEFF(\alpha),\max\MOD(\alpha)\le\max\Prod(\alpha)$.
  From $\COEFF(\gamma)\subseteq\COEFF_p(\alpha)$ and
  $\MOD(\gamma)\subseteq\MOD_p(\alpha)$, we can therefore infer
    \begin{align*}
      \max\COEFF(\gamma) &\le \max\COEFF_p(\alpha)
                          \le 2\cdot\max\COEFF(\alpha)^2 \\
                         & \le \max\COEFF(\alpha)^3\\
                         &\le \max\Prod(Ex\colon\alpha)^4
      \intertext{ and}
      \max\MOD(\gamma)   &\le \max\MOD_p(\alpha)\\
                         &\le \max\COEFF(\alpha)^2
                                 \cdot\max\MOD(Ex\colon\alpha)^2\\
                         &\le \max\Prod(Ex\colon\alpha)^4\,.
    \end{align*}
  Consequently, $\max\Prod(\gamma)\le\max\Prod(Ex\colon\alpha)^4$.
  
  From $\CONST(\gamma)\subseteq\CONST_p(\alpha)$, we can infer
    \begin{align*}
      \max\CONST(\gamma) &\le 
         2\cdot\max\COEFF(\alpha) \cdot \max\CONST(\alpha)
         +\max\COEFF(\alpha)^2\cdot p\cdot\lcm\bigl(\MOD(\alpha)\bigr)\\
        &\le 
          \max\Prod(\alpha)^2 \cdot (\max\CONST(\alpha)
          +p\cdot \lcm\{1,2,\dots,\max\MOD(\alpha)\})\,.\\
      \intertext{Since $\lcm\{1,2,\dots,n\}\le 4^{n-1}$ by \cite{Nai82}, we can continue}
        &\le 
          \max\Prod(\alpha)^2 \cdot (\max\CONST(\alpha)
          +p\cdot 4^{\max\Prod(\alpha)})\\
        &\le 
          2^{\max\Prod(\alpha)} \cdot (\max\CONST(\alpha)
          +2^{p+2\cdot\max\Prod(\alpha)})
      \ (\text{provided }\max\Prod(\alpha)\neq 3)\\
        &\le 
          \max\CONST(\alpha)
          \cdot 2^{p+3\cdot\max\Prod(\alpha)}\\
        &\le 
          \max\CONST(\alpha)
          \cdot 2^{4\cdot\max\Prod(Ex\colon\alpha)}\\
        &=
          \max\CONST(Ex\colon\alpha)\cdot 16^{\max\Prod(Ex\colon\alpha)}\,.
    \end{align*}
    If $\max\Prod(\alpha)=3$, we get
    $\max\Prod(\alpha)^2 \cdot (\max\CONST(\alpha) +p\cdot
    4^{\max\Prod(\alpha)})\le \max\CONST(\alpha) \cdot
    2^{p+3\cdot\max\Prod(\alpha)}$ as well since $2p\le 2^p$ so that
    the desired estimation holds in this case, too.
\end{proof}

Now, by induction on the quantifier depth we can obtain the following theorem.

\begin{thm}\label{thm:modelim}
  Let $\varphi\in\FOMOD$ be a formula of
  quantifier-depth~$d$. There exists an equivalent Boolean
  combination~$\gamma$ of atomic formulas with
    \begin{align*}
      \max\Prod(\gamma) & \le \max\Prod(\varphi)^{4^d}\,,\text{ and }\\
      \max\CONST(\gamma) & 
           \le 2^{(\max\Prod(\varphi))^{4^d}}\cdot\max\CONST(\varphi)\,.
    \end{align*}
\end{thm}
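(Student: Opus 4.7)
The plan is to proceed by induction on the quantifier depth $d=\qd(\varphi)$. The base case $d=0$ is immediate: $\varphi$ is already a Boolean combination of atomic formulas, so I set $\gamma=\varphi$; both bounds hold trivially since $\max\Prod(\varphi)^{4^0}=\max\Prod(\varphi)$ and $2^{\max\Prod(\varphi)^{4^0}}\ge 1$.

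For the inductive step with $d\ge 1$, I consider the maximal quantified subformulas of $\varphi$, that is, subformulas $Ex\colon\alpha$ (with $E$ either $\exists$ or $\exists^{(q,p)}$) that do not lie strictly inside any other quantifier. Each such $\alpha$ satisfies $\qd(\alpha)\le d-1$ and inherits $\Prod(\alpha)\subseteq\Prod(\varphi)$ and $\CONST(\alpha)\subseteq\CONST(\varphi)$. Applying the induction hypothesis to each $\alpha$, I obtain an equivalent quantifier-free $\alpha'$ with $\max\Prod(\alpha')\le\max\Prod(\varphi)^{4^{d-1}}$ and $\max\CONST(\alpha')\le 2^{\max\Prod(\varphi)^{4^{d-1}}}\cdot\max\CONST(\varphi)$. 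I then apply Proposition~\ref{P'} to $Ex\colon\alpha'$ to obtain a quantifier-free $\rho$ equivalent to $Ex\colon\alpha$, and construct $\gamma$ by substituting each such $\rho$ into $\varphi$ in place of the corresponding $Ex\colon\alpha$.

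The bound verification rests on two computations. For $\Prod$, note that the modulus $p$ of any quantifier $\exists^{(q,p)}$ belongs to $\MOD(\varphi)$, so $\max\Prod(Ex\colon\alpha')\le\max\Prod(\varphi)^{4^{d-1}}$. Proposition~\ref{P'} then gives $\max\Prod(\rho)\le\max\Prod(Ex\colon\alpha')^4\le\max\Prod(\varphi)^{4^d}$, which together with the atoms of $\varphi$ outside the replaced subformulas (bounded by $\max\Prod(\varphi)$) yields the desired bound on $\max\Prod(\gamma)$. For $\CONST$, Proposition~\ref{P'} gives
\[
\max\CONST(\rho)\le\max\CONST(\alpha')\cdot 16^{\max\Prod(Ex\colon\alpha')}\le\max\CONST(\varphi)\cdot 2^{5\max\Prod(\varphi)^{4^{d-1}}}\,.
\]
To fit this into $\max\CONST(\varphi)\cdot 2^{\max\Prod(\varphi)^{4^d}}$, it suffices to check $5X\le X^4$ for $X=\max\Prod(\varphi)^{4^{d-1}}$, which holds since $2\in\COEFF(\varphi)$ forces $\max\Prod(\varphi)\ge 2$ and thus $X\ge 2$.

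The main delicacy is the bookkeeping at the replacement step: I must ensure that all atomic formulas of $\gamma$---both those inherited from $\varphi$ and those introduced by the various $\rho$'s---respect the target bounds, and that the quantifier moduli, which need not arise as coefficients elsewhere, are nevertheless controlled through $\Prod(\varphi)$. I expect no deeper obstacle, since the per-quantifier blow-up has already been isolated by Proposition~\ref{P'}.
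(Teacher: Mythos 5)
Your proof is correct and takes essentially the same inductive approach as the paper: induct on quantifier depth and use Proposition~\ref{P'} to eliminate the outermost quantifier(s), with the same bookkeeping that $\max\Prod$ is raised to the fourth power and $\max\CONST$ picks up a factor $16^{\max\Prod}$ per eliminated quantifier. In fact you are a bit more careful than the paper's own sketch, which only treats the case $\varphi = Ex\colon\psi$ and leaves the reduction for Boolean combinations implicit; your passage to maximal quantified subformulas makes that step explicit and closes the small gap.
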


\begin{proof}
  The proof proceeds by induction on $d$. For $d=0$, the claim is
  trivial since then, we can set $\gamma=\varphi$. Now suppose the
  theorem has been shown for formulas of quantifier-depth~$<d$.

  So let $\varphi = Ex\colon\psi$ where $E=\exists$ or
  $E=\exists^{(q,p)}$ for some $0\le q<p$ and the formula $\psi$ has
  quantifier-rank~$<d$. If $E=\exists$, set $p=1$. Then, by the
  induction hypothesis, there exists a Boolean combination~$\alpha$ of
  atomic formulas such that $\psi\Longleftrightarrow\alpha$,
    \begin{align*}
      \max\Prod(\alpha) & \le (\max\Prod(\psi))^{4^{d-1}}\text{ and}\\
      \max\CONST(\alpha)& \le 2^{(\max\Prod(\psi))^{4^{d-1}}}\cdot\max\CONST(\psi)\,.
    \end{align*}
  By Prop.~\ref{P'}, we find a Boolean combination $\gamma$ of atomic
  formulas such that the following hold:
  \begin{itemize}
  \item
    $\gamma \Longleftrightarrow Ex\colon\alpha \Longleftrightarrow
    Ex\colon\psi=\varphi$
  \item $\max\Prod(\gamma)\le\max\Prod(Ex\colon\alpha)^4$
  \item
    $\max\CONST(\gamma) \le \max\CONST(Ex\colon\alpha)\cdot
    16^{\max\Prod(Ex\colon\alpha)}$
  \end{itemize}

  Note that $\max\Prod(Ex\colon\alpha)$ is the maximum of
  $p\le p^{4^{d-1}}$ and
  $\max\Prod(\alpha)\le\max\Prod(\psi)^{4^{d-1}}$.  Similarly, the
  maximum of $p$ and $\max\Prod(\psi)$ is equal to
  $\max\Prod(Ex\colon\psi)$. Therefore we get
  $\max\Prod(Ex\colon\alpha)\le\max\Prod(Ex\colon\psi)^{4^{d-1}}$.
  Hence
    \begin{align*}
      \max\Prod(\gamma) 
      &\le \max\Prod(Ex\colon\alpha)^4\\
      &\le (\max\Prod(Ex\colon\psi)^{4^{d-1}})^4 \\
      &= \max\Prod(\varphi)^{4^d}\,.
    \end{align*}
  Before we prove the desired upper bound for $\max\CONST(\gamma)$,
  note the following for all $n\ge2$ and $d\ge1$:
    \begin{align*}
      \log_2\bigl(16^{n^{4^{d-1}}}\cdot 2^{n^{4^{d-1}}}\bigr)
      &=4\cdot n^{4^{d-1}} + n^{4^{d-1}}\\
      &\le n^3\cdot n^{4^{d-1}}\\
      &= n^{3+ 4^{d-1}}\\
      &\le n^{4^d}\,.
    \end{align*}
  With $n=\max\Prod(\varphi)$, we therefore obtain
    \begin{align*}
      \max\CONST(\gamma)
       &\le 16^{\max\Prod(Ex\colon\alpha)}\cdot\max\CONST(Ex\colon\alpha)\\
       &\le 16^{\max\Prod(\varphi)^{4^{d-1}}}\cdot 
          2^{\max\Prod(\varphi)^{4^{d-1}}}\cdot\max\CONST(\varphi)\\
       &\le2^{\max\Prod(\varphi)^{4^d}}\cdot\max\CONST(\varphi)\,.\qedhere
    \end{align*}
\end{proof}

Using Proposition~\ref{P-FOMOD-to-FOMOD-}, the extension of the above
result to the larger logic $\FOMODmany$ follows immediately.
\begin{cor}
  Let $\varphi\in\FOMODmany$ be a formula of quantifier-depth $d$. There
  exists an equivalent Boolean combination $\gamma$ of atomic formulas
  with
    \begin{align*}
      \max\Prod(\gamma) & \le \max\Prod(\varphi)^{4^d}\text{ and }\\
      \max\CONST(\gamma) & 
           \le 2^{(\max\Prod(\varphi))^{4^d}}\cdot\max\CONST(\varphi)\,.
    \end{align*}
\end{cor}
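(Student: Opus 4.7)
The plan is to reduce the corollary to the corresponding theorem for $\FOMOD$ (Theorem~\ref{thm:modelim}) by first ``purifying'' $\varphi$ into that fragment. More concretely, I would apply Proposition~\ref{P-FOMOD-to-FOMOD-} to the given $\varphi\in\FOMODmany$ to obtain an equivalent formula $\varphi'\in\FOMOD$ satisfying
\[
  \COEFF(\varphi')\subseteq\COEFF(\varphi),\quad
  \CONST(\varphi')\subseteq\CONST(\varphi),\quad
  \MOD(\varphi')\subseteq\MOD(\varphi),\quad
  \qd(\varphi')\le\qd(\varphi)=d\,.
\]
In particular, $\max\Prod(\varphi')\le\max\Prod(\varphi)$ and $\max\CONST(\varphi')\le\max\CONST(\varphi)$, and these are the only facts about $\varphi'$ that will matter.

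Next, since $\varphi'\in\FOMOD$, I would apply Theorem~\ref{thm:modelim} directly to $\varphi'$, obtaining an equivalent Boolean combination $\gamma$ of atomic formulas with
\[
  \max\Prod(\gamma)\le\max\Prod(\varphi')^{4^{\qd(\varphi')}}
  \quad\text{and}\quad
  \max\CONST(\gamma)\le 2^{\max\Prod(\varphi')^{4^{\qd(\varphi')}}}\cdot\max\CONST(\varphi')\,.
\]
Because $\varphi'\iff\varphi$, we also have $\gamma\iff\varphi$, so $\gamma$ is the required Boolean combination.

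It then only remains to absorb the bounds for $\varphi'$ into bounds for $\varphi$. Using $\max\Prod(\varphi')\le\max\Prod(\varphi)$, $\max\CONST(\varphi')\le\max\CONST(\varphi)$, and $\qd(\varphi')\le d$, together with monotonicity of the map $n\mapsto n^{4^k}$ in both $n$ (for $n\ge 1$) and $k$ (for $n\ge 2$, which holds here since $\{\pm 2\}\subseteq\COEFF(\varphi)$ always), I obtain
\[
  \max\Prod(\gamma)\le\max\Prod(\varphi)^{4^d}
  \quad\text{and}\quad
  \max\CONST(\gamma)\le 2^{\max\Prod(\varphi)^{4^d}}\cdot\max\CONST(\varphi)\,.
\]
There is no genuine obstacle: the work has already been done in Proposition~\ref{P-FOMOD-to-FOMOD-} (elimination of non-unary modulo-counting quantifiers without increasing $\COEFF$, $\CONST$, $\MOD$, or $\qd$) and in Theorem~\ref{thm:modelim} (the quantitative quantifier elimination for $\FOMOD$), and the corollary is just their composition, with a small monotonicity check at the end.
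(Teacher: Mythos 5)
Your proposal is correct and matches the paper's (implicit) proof exactly: the paper states immediately before this corollary that it follows from Proposition~\ref{P-FOMOD-to-FOMOD-} together with Theorem~\ref{thm:modelim}, which is precisely the composition you carry out. The small monotonicity check at the end is sound and completes the argument.
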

If we allow the threshold counting quantifiers $\exists^{\ge c}$ and
$\exists^{=c}$, the result gets a bit weaker since we have to replace
the exponent $d$ in the above bounds by a polynomial in the size of
$\varphi$. To see this, let $\varphi\in\fullLogic$. Then, by
Proposition~\ref{P-CMOD-to-FOMOD}, it can be transformed in polynomial
time into an equivalent formula $\varphi'$ from $\FOMODmany$. The
quantifier depth $d'$ of $\varphi'$ is bounded by the size of
$\varphi'$ and therefore polynomial in the size of $\varphi$. Now we
can resort to the above corollary and obtain
\begin{cor}
  Let $\varphi\in\fullLogic$ be a formula. There exists an
  equivalent Boolean combination~$\gamma$ of atomic formulas with
    \begin{align*}
      \max\Prod(\gamma) & \le \max\Prod(\varphi)^{4^{\mathrm{poly}(|\varphi|)}}\text{ and }\\
      \max\CONST(\gamma) & 
           \le 2^{(\max\Prod(\varphi))^{4^{\mathrm{poly}(|\varphi|)}}}\cdot\max\CONST(\varphi)\,.
    \end{align*}
\end{cor}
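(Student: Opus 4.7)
The plan is a direct composition of two already-proved results. First I apply Proposition~\ref{P-CMOD-to-FOMOD} to $\varphi\in\fullLogic$ and obtain, in polynomial time, an equivalent formula $\varphi'\in\FOMODmany$ that satisfies $\COEFF(\varphi')\subseteq\COEFF(\varphi)$, $\CONST(\varphi')\subseteq\CONST(\varphi)$, and $\MOD(\varphi')\subseteq\MOD(\varphi)$. In particular, the sets $\Prod(\varphi')$ and $\CONST(\varphi')$ are contained in their counterparts for $\varphi$, so their maxima can only decrease under the translation.

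Next I bound the quantifier depth of $\varphi'$. Since $|\varphi'|$ is polynomial in $|\varphi|$ and $\qd(\varphi')\le |\varphi'|$, there is a polynomial $\mathrm{poly}$ with $\qd(\varphi')\le\mathrm{poly}(|\varphi|)$. Now I apply the preceding corollary (the $\FOMODmany$-version of Theorem~\ref{thm:modelim}) to $\varphi'$, which yields an equivalent Boolean combination $\gamma$ of atomic formulas satisfying
\[
\max\Prod(\gamma)\le\max\Prod(\varphi')^{4^{\qd(\varphi')}}\quad\text{and}\quad
\max\CONST(\gamma)\le 2^{\max\Prod(\varphi')^{4^{\qd(\varphi')}}}\cdot\max\CONST(\varphi')\,.
\]

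Finally, I substitute the two bounds $\max\Prod(\varphi')\le\max\Prod(\varphi)$, $\max\CONST(\varphi')\le\max\CONST(\varphi)$, and $\qd(\varphi')\le\mathrm{poly}(|\varphi|)$, which immediately gives the claimed estimates. There is essentially no obstacle here: the entire content of the corollary is already packaged in the two earlier results, and the only thing to notice is the trivial but crucial fact that the first translation preserves the $\Prod$- and $\CONST$-sets while controlling the formula size polynomially, so that the doubly-exponential tower of the previous corollary is driven by a polynomial (rather than the quantifier depth) exponent in $|\varphi|$.
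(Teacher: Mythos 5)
Your proposal is correct and follows exactly the same route as the paper: translate $\varphi$ into an equivalent $\varphi'\in\FOMODmany$ via Proposition~\ref{P-CMOD-to-FOMOD} (which preserves the $\COEFF$, $\CONST$, and $\MOD$ sets and increases size only polynomially, so $\qd(\varphi')$ is polynomial in $|\varphi|$), and then apply the preceding corollary (the $\FOMODmany$ form of Theorem~\ref{thm:modelim}). You correctly identify the one non-trivial observation, namely that the first translation keeps $\max\Prod$ and $\max\CONST$ from increasing while the polynomially bounded quantifier depth drives the exponent.
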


\section{An efficient decision procedure}\label{SS-decision}

Let $\varphi(x)$ be a Boolean combination of formulas with a single
free variable. To determine validity of the formula
$\exists x\colon\varphi$, one has to check, for all integers $n\in\bZ$,
whether $\varphi(n)$ holds. The following lemma reduces this infinite
search space to a finite one that is exponential in the coefficients
and moduli as well as linear in the constants from $\varphi$.

\begin{lem}\label{L-AB}
  Let $A\ge6$ and $B\ge0$. Let $x$ be a variable and $\gamma$ a
  Boolean combination of atomic formulas of the form $ax>b$, $ax<b$,
  and $cx\equiv_h d$ with $a,b,c,d\in\bZ$, $h\ge2$, $|a|,h<A$, and
  $|b|<B$. Then $\exists x\colon\gamma$ is equivalent to
  $\exists x\colon \bigl(|x|\le A^{A^5}\cdot B \land \gamma\bigr)$.
\end{lem}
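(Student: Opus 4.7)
The plan is a standard periodicity reduction. Let $N := \lcm\{h : h \text{ is a modulus of some atom } cx\equiv_h d \text{ occurring in } \gamma\}$ (and $N:=1$ if no such atom occurs). Since every such $h$ is strictly smaller than $A$, Nair's bound (already invoked in the proof of Proposition~\ref{P'}) yields $N\le\lcm\{1,2,\dots,A-1\}\le 4^{A-1}$.

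Next I would establish the key periodicity property: if two integers $x$ and $x'$ satisfy $x\equiv x'\pmod{N}$ and lie on the same side of every threshold $b/a$ for which an inequality atom $ax<b$ or $ax>b$ (with $a\neq 0$) occurs in $\gamma$, then $\gamma(x)$ and $\gamma(x')$ have the same truth value. Each modular atom $cx\equiv_h d$ is preserved because $c(x-x')$ is a multiple of $cN$, hence of $h$; each inequality atom is preserved by the ``same side'' hypothesis. Atoms with $a=0$ or $c=0$ reduce to constants and cause no difficulty.

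All inequality thresholds $b/a$ satisfy $|b/a|=|b|/|a|\le|b|<B$, so they lie strictly inside the open interval $(-B,B)$. Given any witness $x_0$ of $\exists x\colon\gamma$, if $|x_0|\le B$ we are done. Otherwise, by symmetry I may assume $x_0>B$; then every threshold lies strictly below $x_0$. Repeatedly subtracting $N$ from $x_0$ keeps the value strictly above every threshold as long as the result remains $>B$, so one reaches a witness $x'$ with $B<x'\le B+N$, hence $|x'|\le B+N\le B+4^{A-1}$. The symmetric argument handles $x_0<-B$ by repeatedly adding $N$.

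Finally, I would verify the arithmetic bound $B+4^{A-1}\le A^{A^5}\cdot B$, which for $A\ge 6$ is immediate once $B\ge 1$ (the natural case, since in every application $\CONST$ contains $\pm 1$), because $A^{A^5}$ dwarfs $1+4^{A-1}$ already at $A=6$. The only real point of delicacy in the write-up is the bookkeeping: confirming that atoms of every permitted shape (including trivialities such as $0x<b$) are preserved under the shift-by-$N$ move, and that the ``side-of-threshold'' condition is genuinely maintained while we subtract $N$ and stay strictly above $B$. These are straightforward but must be spelled out carefully because the argument is asymmetric between the inequality atoms (periodicity in the \emph{position} of $x$) and the modular atoms (periodicity in the \emph{residue class} of $x$).
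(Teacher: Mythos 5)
Your argument is correct (for $B\ge1$) but follows a genuinely different route from the paper's. The paper first puts $\gamma$ into disjunctive normal form, replaces each modular atom $cx\equiv_h d$ by the linear equation $cx-hy=d$ with a fresh variable $y$, and then invokes the von zur Gathen--Sieveking bound on small integral solutions of systems of linear (in)equalities, controlled via determinants of the coefficient matrix; this is where the exponent $A^5$ comes from. You instead use a direct periodicity argument: $W:=\{x\colon\gamma(x)\}$ is invariant under shifts by $N=\lcm$ of the moduli on either side of the band $[-B,B]$ containing all thresholds $b/a$, so any witness can be pulled to within distance $N\le 4^{A-1}$ of that band. Your approach is more elementary, avoids the detour through DNF (which hides an exponential blow-up the paper never needs to account for anyway, since only the \emph{size} of a witness matters, not the time to find it), and yields the much tighter bound $B+4^{A-1}$ rather than $A^{A^5}\cdot B$ --- this tighter bound would propagate to give somewhat better constants downstream in Corollary~\ref{C1}, Lemma~\ref{L2}, and Corollary~\ref{C2}, though it would not change the asymptotic complexity.

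One remark on the edge case you mention. The lemma as stated allows $B=0$, but then it is simply false: with $B=0$ there can be no inequality atoms, yet $\gamma=(x\equiv_2 1)$ is satisfiable while $|x|\le A^{A^5}\cdot 0=0$ forces $x=0$, a non-witness. So the restriction to $B\ge1$ that your arithmetic step needs is not a gap in \emph{your} proof --- it is an unstated hypothesis of the lemma itself, and the paper's own proof implicitly relies on the constant column of the matrix being dominated by $B$ (which only holds when $B\ge A$, as is the case in every application, where $B\ge 2^{\max\Prod(\varphi)^{5^d}}\ge A$). You are right to flag it and right not to lose sleep over it.
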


\begin{proof}
  Since $h<A$, we can assume that $0\le c,d<A$ for all formulas of the
  form $cx\equiv_h d$. We can also assume that $\gamma$ is in
  negation normal form, i.e., only atomic formulas are negated. We
  make the following replacements:
  \begin{center}
    \begin{tabular}{rcl}
      $\neg(ax > b)$ & is replaced by & $ax < b+1$\\
      $\neg(cx \equiv_h d)$ & is replaced by & 
             $\bigvee_{0\le d'<h, d\neq d'} cx\equiv_h d'$\\
      $ax>b$ & is replaced by & $-ax< -b$
    \end{tabular}
  \end{center}
  As a result, $\gamma$ is equivalent to a formula in disjunctive
  normal form, without negations, and with atomic formulas of the form
  $ax<b$ and $cx\equiv_h d$ with $0\le c,d,|a|,h< A$ and $|b|\le
  B$. Hence $\gamma\Longleftrightarrow\bigvee_{1\le i\le n}\delta_i$
  where each of the formulas $\delta_i$ is a conjunction of atomic
  formulas of the allowed form. Consequently, $\exists x\colon\gamma$
  is equivalent to $\bigvee_{1\le i\le n}\exists x\colon\delta_i$.

  Consider one such conjunction~$\delta_i$. Note that it contains at
  most $A^3$ many atomic formulas of the form $cx\equiv_h d$ since
  $0\le c,d,h<A$. For any such atomic formula, introduce a new
  variable $y$ and replace $cx\equiv_h d$ by $cx-hy=d$. Then $\delta_i$
  is equivalent to $\exists \bar y\colon\delta'_i$ where $\delta'_i$ is a
  conjunction of formulas of the form $cx-hy=d$ and $ax<b$ with $0\le
  c,h,d,|a|<A$ and $|b|\le B$ and $\bar y$ is a sequence of at most
  $A^3$ variables.

  Let $M$ be the maximal absolute value of the determinant of an
  $(m\times m)$-matrix with $m\le A^3+2$, where the first $m-1$
  columns contain entries of absolute value at most $A$ and the
  entries in the last column have absolute value at most $B$. Then it
  is not hard to determine that
  \[
    M \le (A^3+2)!\cdot A^{A^3+1}\cdot B\,.
  \]
  Now the main theorem of \cite{VonZurGathenSieveking} implies that
  the formula $\exists x,\bar y\colon\delta'$ is equivalent to the
  existence of a solution $(x,\bar y)$ of $\delta'$ where the absolute
  value of every entry is at most
  \begin{align*}
    (A^3+2)\cdot M &\le A^4\cdot(A^4)!\cdot A^{A^4}\cdot B\\
      &\le A^4\cdot (A^4)^{A^4}\cdot A^{A^4}\cdot B\\
      &\le A^{4+5\cdot A^4}\cdot B \le A^{A^5}\cdot B\,.
  \end{align*}
  In summary, we get
  \begin{align*}
    \exists x\colon\gamma
       &\Longleftrightarrow \bigvee \exists x\colon\delta_i\\
       &\Longleftrightarrow \bigvee \exists x\exists\bar y\colon\delta_i'\\
       &\Longleftrightarrow \bigvee \exists x\colon\bigl(|x|\le A^{A^5}\cdot B\land 
                                     \exists\bar y\colon\delta_i'\bigr)\\
       &\Longleftrightarrow \exists x\colon\bigl(|x|\le A^{A^5}\cdot B\land 
                                     \bigvee\delta_i\bigr)\\
       &\Longleftrightarrow \exists x\colon\bigl(|x|\le A^{A^5}\cdot B\land \gamma\bigr)
  \end{align*}
  where all disjunctions extend over $1\le i\le n$.
\end{proof}

The core of the above lemma is the reduction of the search space for
closed formulas of the form $\exists x\colon\varphi(x)$ with $\varphi$
quantifier-free. The following corollary provides an analogous
reduction for arbitrary formulas $\varphi(x)$. In addition, we allow
the formula $\varphi$ to have further free variables
$y_1,\dots,y_\ell$ that are handled as parameters.

\begin{cor}\label{C1}
  There exists $\kappa\ge2$ with the following property.  Let $d\ge1$
  and consider a formula $\varphi(x,y_1,\dots,y_\ell)$ from $\FOMOD$
  of quantifier-depth at most $d$. Let $n_1,\dots,n_\ell\in\bZ$ with
  $|n_i|\le N$. Then the closed formula
  $\exists x\colon\varphi(x,n_1,\dots,n_\ell)$ holds if and only if
  there exists $n\in\bZ$ such that $\varphi(n,n_1,\dots,n_\ell)$ holds
  with

    \[\displaystyle
      |n|\le 2^{\max\Prod(\varphi)^{\kappa^d}}\cdot
                 \max\CONST(\varphi) \cdot N\cdot\max\{1,\ell\}\,.
    \]  
\end{cor}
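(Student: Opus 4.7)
The plan is to combine the quantifier-elimination bound from Theorem~\ref{thm:modelim} with the search-space reduction of Lemma~\ref{L-AB}, using substitution of the parameters $n_1,\dots,n_\ell$ as the bridge. I first apply Theorem~\ref{thm:modelim} to $\varphi(x,y_1,\dots,y_\ell)$, treating $x$ and the $y_i$ as free variables, to obtain an equivalent Boolean combination $\gamma(x,y_1,\dots,y_\ell)$ of atomic formulas with
\[
  \max\Prod(\gamma)\le \max\Prod(\varphi)^{4^d}\quad\text{and}\quad
  \max\CONST(\gamma)\le 2^{\max\Prod(\varphi)^{4^d}}\cdot\max\CONST(\varphi)\,.
\]
Then I substitute $y_i\mapsto n_i$ inside every atomic formula of $\gamma$ and put the result into $x$-separated normal form. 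Writing $P=\max\Prod(\varphi)$, $C=\max\CONST(\varphi)$, the coefficients of $x$ and the moduli are unaffected by substitution, while every atomic formula now has the form $ax<b$ or $cx\equiv_h d$ with $|a|,|c|,h\le\max\Prod(\gamma)\le P^{4^d}$ and, since each $y_i$ had coefficient bounded by $\max\COEFF(\gamma)\le \max\Prod(\gamma)$ in the original atomic formula,
\[
  |b|,|d|\le \max\CONST(\gamma)+\ell\cdot\max\Prod(\gamma)\cdot N\,.
\]
Call the resulting quantifier-free formula $\gamma'(x)$; then $\exists x\colon\varphi(x,n_1,\dots,n_\ell)\Longleftrightarrow\exists x\colon\gamma'(x)$.

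Next, I invoke Lemma~\ref{L-AB} with $A=\max\{6,\max\Prod(\gamma)+1\}$ and $B=\max\CONST(\gamma)+\ell\cdot\max\Prod(\gamma)\cdot N$, which are valid choices since $|a|,h<A$ and $|b|<B+1$ (with a harmless bump of $B$ by a constant). The lemma yields an $n$ satisfying $\gamma'$ with $|n|\le A^{A^5}\cdot B$. Since $\gamma'\iff\varphi(\cdot,n_1,\dots,n_\ell)$, this $n$ also satisfies $\varphi$.

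The only remaining work is to fold $A^{A^5}\cdot B$ into the claimed shape $2^{P^{\kappa^d}}\cdot C\cdot N\cdot\max\{1,\ell\}$. Using $A\le 7\cdot P^{4^d}$ one gets $A^5\le 7^5\cdot P^{5\cdot 4^d}$ and hence
\[
  A^{A^5}\le 2^{(\log_2 A)\cdot A^5}\le 2^{P^{c\cdot 4^d}}
\]
for some fixed constant $c$ (independent of $d$, $\varphi$, $\ell$, $N$). For $B$, the bound $\max\CONST(\gamma)\le 2^{P^{4^d}}\cdot C$ gives $B\le 2^{P^{4^d}+1}\cdot C\cdot\max\{1,\ell\}\cdot N$. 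Multiplying, $A^{A^5}\cdot B\le 2^{P^{c'\cdot 4^d}}\cdot C\cdot N\cdot\max\{1,\ell\}$ for some universal $c'$, and choosing $\kappa$ large enough that $c'\cdot 4^d\le\kappa^d$ for every $d\ge 1$ (e.g.\ $\kappa=4c'$) finishes the proof. The main obstacle is exactly this last bookkeeping step: the quantifier-elimination already produces a tower of height controlled by $4^d$, and each atomic-formula substitution plus the $A^{A^5}$ factor of Lemma~\ref{L-AB} multiplies the exponent by a bounded constant, so one must verify that this constant-factor inflation of the exponent can be absorbed uniformly into a single base change from $4$ to some fixed $\kappa$.
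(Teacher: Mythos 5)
Your proof is correct, and it uses the same two ingredients as the paper (Theorem~\ref{thm:modelim} and Lemma~\ref{L-AB}) but in reversed order: you first eliminate quantifiers from $\varphi(x,y_1,\dots,y_\ell)$ with the parameters $y_i$ still symbolic and only then substitute $n_i$, whereas the paper forms $\varphi_{\bar n}(x)$ by substituting first and then applies Theorem~\ref{thm:modelim} to it. Both orders are valid because quantifier elimination preserves logical equivalence and therefore commutes with substitution of integers for the free variables, and because the sets $\COEFF$, $\MOD$, and $\CONST$ behave controllably under both operations. Tracking the constants, your ordering yields $\max\CONST(\gamma')\le 2^{P^{4^d}}C+\ell N P^{4^d}$ while the paper's yields $\max\CONST(\gamma_{\bar n})\le 2^{P^{4^d}}(C+\ell N P)$; these are of the same shape (yours is marginally tighter, since the $N\ell$-term misses the extra $2^{P^{4^d}}$ factor), and both get absorbed into the claimed $2^{P^{\kappa^d}}\cdot C\cdot N\cdot\max\{1,\ell\}$ once the $A^{A^5}$ factor from Lemma~\ref{L-AB} is folded in. Your closing bookkeeping --- observing that the combined exponent is $P^{O(4^d)}$ and that picking $\kappa$ a sufficiently large multiple of $4$ makes $O(4^d)\le\kappa^d$ uniformly in $d\ge1$ --- matches the paper's calculation, so the argument is sound.
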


\begin{proof}
  The implication ``$\Leftarrow$'' is trivial since, if there is a
  small $n$ satisfying $\varphi(x,n_1,\dots,n_\ell)$, then $\exists
  x\colon\varphi(x,n_1,\dots,n_\ell)$ holds.

  Conversely suppose $\exists x\colon\varphi(x,n_1,\dots,n_\ell)$
  holds.  Let $\varphi_{\bar n}=\varphi_{\bar n}(x)$ be the formula
  obtained from $\varphi$ by substituting $n_i$ for $y_i$.  For any
  inequality $s<t$ in $\varphi$, the term $s-t$ contains at most
  $\ell$ of the variables $y_i$, each with a coefficient from
  $\COEFF(\varphi)\subseteq\Prod(\varphi)$. Hence these substitutions
  at most eliminate coefficients, do not change moduli, but can
  increase constants by $N\cdot\ell\cdot\max\Prod(\varphi)$. Hence we
  get
    \begin{align*}
      \max\Prod(\varphi_{\bar n}) &\le\max\Prod(\varphi)\ \text{and}\\
      \max\CONST(\varphi_{\bar n}) 
         &\le \max\CONST(\varphi) + N\cdot\ell\cdot\max\Prod(\varphi)\,.
    \end{align*}
  By Theorem~\ref{thm:modelim}, there exists an equivalent Boolean
  combination $\gamma_{\bar n}$ of atomic formulas with
    \begin{align*}
      \max\Prod(\gamma_{\bar n}) 
         &\le \max\Prod(\varphi)^{4^d}=:A\ \text{and }\\
      \max\CONST(\gamma_{\bar n})
         &\le 2^{\max\Prod(\varphi)^{4^d}}\cdot
               (\max\CONST(\varphi) + N\cdot\ell\cdot\max\Prod(\varphi))\\
         &\le 2^{\max\Prod(\varphi)^{5^d}}\cdot
               \max\CONST(\varphi) \cdot N\cdot\max\{1,\ell\}=:B\,.
    \end{align*}
  From Lemma~\ref{L-AB}, we obtain that there is some $n\in\bZ$ with
  $|n|\le A^{A^5}\cdot B$ such that $\gamma_{\bar n}(n)$ holds. Hence,
  for this $n$, also $\varphi(n,n_1,\dots,n_\ell)$ holds. Note that
  we have (with $p=\max\Prod(\varphi)$)
    \[
      A^{A^5}\le\left(p^{4^d}\right)^{\left(p^{4^d}\right)^5}
       = p^{4^d\cdot p^{5\cdot 4^d}}
       \le p^{p^{5\cdot 4^d+2d}}\le 2^{p^{c^d}} 
       = 2^{\max\Prod(\varphi)^{c^d}}
    \]
  for some $c\ge1$ and therefore
    \begin{align*}
      |n| & \le 2^{\max\Prod(\varphi)^{c^d}}\cdot
                2^{\max\Prod(\varphi)^{5^d}}\cdot \max\CONST(\varphi)
                \cdot N\cdot\max\{1,\ell\}\\
         &\le 2^{\max\Prod(\varphi)^{\kappa^d}}\cdot
               \max\CONST(\varphi) \cdot N\cdot\max\{1,\ell\}
    \end{align*}
  for some $\kappa\ge2$. 
\end{proof}

In the following, we want to prove a similar result for the modulo-counting
quantifier. Recall that $\exists^{(q,p)}x\colon\varphi(x)$ can only
be true if $\varphi$ has only finitely many witnesses, i.e., if the
formula $\exists y\forall x\colon\bigl(\varphi(x)\to |x|\le y\bigr)$
is true.  Applying the above corollary, one finds a finite interval
such that $\varphi$ has infinitely many witnesses iff it has at least
one witness in this interval. In case $\varphi$ has only finitely many
witnesses, then all of them are of bounded absolute value. More
precisely, we get the following.

\begin{lem}\label{L2}
  Let $d\ge1$ and $\kappa\ge2$ be the constant from
  Corollary~\ref{C1}.  Furthermore, let 
  $\varphi=\varphi(x,y_1,\dots,y_\ell)\in\FOMOD$ be a formula of
  quantifier-depth at most~$d$, let $n_1,\dots,n_\ell\in\bZ$ with
  $|n_i|\le N$. Suppose there exist only finitely many $n\in\bZ$ such
  that $\varphi(n,n_1,\dots,n_\ell)$ holds. Then all $n\in\bZ$ such
  that $\varphi(n,n_1,\dots,n_\ell)$ holds satisfy
    \[
    |n|\le 2^{\max\Prod(\varphi)^{\kappa^{d+1}}}
                             \cdot\max\CONST(\varphi)\cdot N\cdot\max\{1,\ell\}\,.
    \]
\end{lem}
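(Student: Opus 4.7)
The plan is to reduce the statement to Corollary~\ref{C1} by introducing an auxiliary ``bounding'' variable. Define the $\FOMOD$-formula
\[
  \varphi'(z, y_1, \ldots, y_\ell) \;:=\; \forall x\colon \bigl(\varphi(x, y_1, \ldots, y_\ell) \to ({-z} \le x \land x \le z)\bigr),
\]
whose quantifier-depth is at most $d + 1$. The additional atomic formulas $-z \le x$ and $x \le z$ use only coefficients in $\{0,\pm 1\}$ and constant $0$, all of which already belong to $\COEFF(\varphi)$ and $\CONST(\varphi)$ by definition; hence $\max\Prod(\varphi') \le \max\Prod(\varphi)$ and $\max\CONST(\varphi') \le \max\CONST(\varphi)$.

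The key observation is that $\varphi'(z, n_1, \ldots, n_\ell)$ holds precisely when $z$ is an upper bound on the absolute values of all $n$ satisfying $\varphi(n, n_1, \ldots, n_\ell)$. By hypothesis the witness set is finite and therefore bounded, so the closed formula $\exists z\colon \varphi'(z, n_1, \ldots, n_\ell)$ is true. Applying Corollary~\ref{C1} to $\varphi'$ (with $z$ playing the role of the quantified variable and $y_1, \ldots, y_\ell$ the parameters, still bounded by $N$), we obtain some $z \in \bZ$ satisfying $\varphi'(z, n_1, \ldots, n_\ell)$ with
\[
  |z| \;\le\; 2^{\max\Prod(\varphi')^{\kappa^{d+1}}} \cdot \max\CONST(\varphi') \cdot N \cdot \max\{1, \ell\}
  \;\le\; 2^{\max\Prod(\varphi)^{\kappa^{d+1}}} \cdot \max\CONST(\varphi) \cdot N \cdot \max\{1, \ell\}.
\]

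For any $n \in \bZ$ with $\varphi(n, n_1, \ldots, n_\ell)$ true, the truth of $\varphi'$ at this particular $z$ yields $|n| \le z$, which is exactly the claimed bound. There is no real obstacle here: the subtle point is just to check that wrapping $\varphi$ inside $\forall x\colon(\varphi\to -z\le x\le z)$ does not enlarge $\max\Prod$ or $\max\CONST$ and increases the quantifier-depth by exactly one, so that Corollary~\ref{C1} applied at depth $d{+}1$ delivers precisely the exponent $\kappa^{d+1}$ required by the statement.
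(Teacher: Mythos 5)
Your proof is correct and follows essentially the same route as the paper's: you wrap $\varphi$ in a universally quantified implication bounding all witnesses, observe that this raises the quantifier depth to $d+1$ while leaving $\max\Prod$ and $\max\CONST$ unchanged (the new atomic formulas only involve coefficients in $\{0,\pm1\}$ and constant $0$, which are always present by definition), and then invoke Corollary~\ref{C1} to obtain a small bounding integer $z$. The paper does the same thing, writing $|x|\le y$ where you write $-z\le x\land x\le z$, and concludes identically.
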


\begin{proof}
  Since there are only finitely many $n\in\bZ$ such that
  $\varphi(n,n_1,\dots,n_\ell)$ holds, the closed formula
  \begin{equation}
    \exists y\forall x\colon \bigl(\varphi(x,n_1,\dots,n_\ell)\to |x|\le y\bigr)
    \label{eq:in_L2}
  \end{equation}
  holds. Let $\varphi'$ denote the subformula starting with $\forall
  x$. Note that its quantifier-depth equals $d+1$,
  $\Prod(\varphi)=\Prod(\varphi')$, and
  $\CONST(\varphi)=\CONST(\varphi')$. Hence, by Corollary~\ref{C1},
  the above formula \eqref{eq:in_L2} is equivalent to
    \[
      \exists y\colon\bigl(|y|\le 2^{\max\Prod(\varphi)^{\kappa^{d+1}}}
                             \cdot\max\CONST(\varphi)\cdot N\cdot \max\{1,\ell\}
                      \land \varphi'\bigr)
    \]
  and therefore to
    \[
       \forall x\colon \bigl(\varphi(x,n_1,\dots,n_\ell)\to 
          |x|\le 2^{\max\Prod(\varphi)^{\kappa^{d+1}}}
                             \cdot\max\CONST(\varphi)\cdot N\cdot\max\{1,\ell\}\bigr)\,.
    \]
  Now the claim follows since the formula \eqref{eq:in_L2} and
  therefore this formula holds.  
\end{proof}

\begin{cor}\label{C2}
  Let $d\ge1$ and $\kappa$ be the constant from Corollary~\ref{C1} and
  \[
      C=2^{\max\Prod(\varphi)^{\kappa^{d+1}}}\cdot\max\CONST(\varphi)\cdot
    N\cdot\max\{1,\ell\}\,.
  \]
  Let $\varphi=\varphi(x,y_1,\dots,y_\ell)\in\FOMOD$ be a formula of
  quantifier-depth at most~$d$, let $n_1,\dots,n_\ell\in\bZ$ with
  $|n_i|\le N$. Then
  $\exists^{(q,p)}x\colon\varphi(x,n_1,\dots,n_\ell)$ is true if and
  only if the following hold:
  \begin{enumerate}[(a)]
  \item no integer $n$ with $C<|n|\le C^2$ makes
    $\varphi(n,n_1,\dots,n_\ell)$ true and
  \item $\bigl|\{n\in\bZ\colon |n|\le C\text{ and
    }\varphi(n,n_1,\dots,n_\ell)\text{ is true}\}\bigr|\equiv_p q$\,.
  \end{enumerate}
\end{cor}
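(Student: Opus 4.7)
The plan is to prove the two directions of the stated equivalence separately. The forward direction $(\Rightarrow)$ uses Lemma~\ref{L2} and is essentially immediate, while the backward direction $(\Leftarrow)$ requires a more careful application of Corollary~\ref{C1} to turn an unbounded search for ``large'' witnesses into a bounded one.

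For $(\Rightarrow)$, suppose $\exists^{(q,p)}x\colon\varphi(x,n_1,\dots,n_\ell)$ is true. By definition this asserts that the set $W$ of integer witnesses of $\varphi(x,n_1,\dots,n_\ell)$ is finite with $|W|\equiv_p q$. Lemma~\ref{L2}, applied with the given parameters, then forces $|n|\le C$ for every $n\in W$. Consequently no witness has $|n|>C$ at all, so~(a) holds vacuously; and the set counted in~(b) coincides with~$W$, whose cardinality is $\equiv_p q$.

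For $(\Leftarrow)$, assume (a) and (b); the crux is to show that $W$ is finite, for then $W\subseteq[-C,C]$ and (b) immediately gives $|W|\equiv_p q$. Suppose towards a contradiction that $W$ is infinite, and pick $n_0\in W$ with $|n_0|>C$. Consider the auxiliary formula $\psi(x,y_1,\dots,y_\ell):=\varphi\land(x<-C\lor x>C)$, which lies in $\FOMOD$ with quantifier-depth at most~$d$, satisfies $\max\Prod(\psi)=\max\Prod(\varphi)$, and has $\max\CONST(\psi)\le C+2$. Since $n_0$ witnesses $\exists x\colon\psi(x,n_1,\dots,n_\ell)$, Corollary~\ref{C1} supplies some witness $n$ with
\[
  |n|\le 2^{\max\Prod(\varphi)^{\kappa^{d}}}\cdot(C+2)\cdot N\cdot\max\{1,\ell\}\,.
\]
Any such $n$ satisfies $\varphi(n,n_1,\dots,n_\ell)$ together with $|n|>C$, so a contradiction with~(a) will follow once we verify $|n|\le C^2$.

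The main obstacle is exactly this arithmetic comparison. Using the definition of $C$ together with the general bounds $\max\Prod(\varphi)\ge2$, $\max\CONST(\varphi)\ge2$, and $\kappa\ge2$, the inequality $|n|\le C^2$ reduces to an estimate of the form $2^{\max\Prod(\varphi)^{\kappa^d}+O(1)}\le 2^{\max\Prod(\varphi)^{\kappa^{d+1}}}\cdot\max\CONST(\varphi)$, which holds with room to spare because $\kappa^{d+1}\ge 2\kappa^d$. This is precisely where the choice of exponent $\kappa^{d+1}$ in the definition of~$C$ (rather than $\kappa^d$) is used: it provides exactly the slack needed to absorb the extra factor introduced when Corollary~\ref{C1} is applied to~$\psi$ in place of~$\varphi$.
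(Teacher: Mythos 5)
Your proof is correct and follows essentially the same route as the paper: both directions ultimately reduce to Lemma~\ref{L2} (for bounding a finite witness set within $[-C,C]$) and to Corollary~\ref{C1} applied to the auxiliary formula $\varphi\land|x|>C$ (to trade ``some witness exceeds $C$'' for ``some witness lies in $(C,C^2]$''). The paper phrases this via the intermediate condition (a$'$) being $\forall x\colon(\varphi\to|x|\le C)$ and then proving (a$'$)~$\Leftrightarrow$~(a), whereas you prove the backward direction directly by contradiction, but the auxiliary formula, the key inequality $\kappa^{d+1}\ge\kappa^d+\text{slack}$, and the role of the exponent $\kappa^{d+1}$ in the definition of $C$ are identical.
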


\begin{proof}
  We first show that
  $\exists^{(q,p)}x\colon\varphi(x,n_1,\dots,n_\ell)$ is true if and
  only if
  \begin{enumerate}[(a')]
  \item[(a$'$)]
    $\forall x\colon\bigl(\varphi(x,n_1,\dots,n_\ell)\to |x|\le
    C\bigr)$ is true and
  \item[  (b)] $\bigl|\{n\in\bZ\colon |n|\le C\text{ and
    }\varphi(n,n_1,\dots,n_\ell)\text{ is true}\}\bigr|\equiv_p q$\,.
  \end{enumerate}

  Suppose there are infinitely many integers $n$ such that
  $\varphi(n,n_1,\dots,n_\ell)$ holds. Then the formula
  $\exists^{(q,p)}x\colon\varphi$ does not hold. Furthermore,
  statement (a$'$) is false since there are only finitely many
  integers $x$ with $|x|\le C$. Hence, in this case, the equivalence
  holds.

  So it remains to consider the case that there are only finitely many
  integers $n$ such that $\varphi(n,n_1,\dots,n_\ell)$ holds. Then, by
  Lemma~\ref{L2}, all these integers satisfy $|n|\le C$. Consequently,
  statement (a$'$) is true and
    \[
      \bigl\{n\in\bZ\bigm| \varphi(n,n_1,\dots,n_\ell)\text{ holds}\bigr\}
      =
      \bigl\{n\in\bZ\colon |n|\le C\text{ and }
                    \varphi(n,n_1,\dots,n_\ell)\text{ holds}\bigr\}\,.
    \]
  Hence, in this case, $\exists^{(q,p)}x\colon\varphi$ is equivalent
  to statement (b). Since (a$'$) is true in this case,
  we have the equivalence.

  We complete the proof of this corollary by showing that (a)
  and (a$'$) are equivalent. Consider the formula
    \[
       \varphi'=\bigl(\varphi(x,x_1,\dots,x_\ell)\land |x|> C\bigr)\,.
    \]
  Then $\Prod(\varphi')=\Prod(\varphi)$ and
  $\CONST(\varphi')=\CONST(\varphi)\cup\{\pm C\}$ implying
  $\max\CONST(\varphi')=C$. Hence, by Corollary~\ref{C1},
  $\exists x\colon\varphi'$ is equivalent to the existence of
  $n\in\bZ$ satisfying $\varphi(x,n_1,\dots,n_\ell)$ with $C<|n|$ and 
     \begin{align*}
    |n|&\le
        2^{\max\Prod(\varphi')^{\kappa^d}}
        \cdot\max\CONST(\varphi')\cdot N\cdot\max\{1,\ell\}\\
       &=
        2^{\max\Prod(\varphi)^{\kappa^d}}
        \cdot C \cdot N\cdot\max\{1,\ell\}\\
       &\le C^2\,.
    \end{align*}
   Hence, statement (a$'$), i.e., $\neg\exists x\colon\varphi'$,
  is equivalent to statement (a).
\end{proof}

Corollaries~\ref{C1} and \ref{C2} allow to compute the truth value of
a closed formula $\varphi$ from $\FOMOD$ by, recursively, computing
the truth value of subformulas $\psi$ of $\varphi$ with arguments of
bounded size. More precisely, let $d$ be the quantifier depth of
$\varphi$ and set
\begin{equation}
  \label{eq:definition-D}
  D=2^{\max\Prod(\varphi)^{\kappa^{d+2}}}\cdot\max\CONST(\varphi)\,.
\end{equation}
Note that $\max\Prod(\varphi),\kappa\ge2$ implying $D\ge d$.

Now suppose $\exists x\colon\psi(x,y_1,\dots,y_\ell)$ is a subformula
of $\varphi$, $d'$ is the quantifier depth of $\psi$, and
$n_1,\dots,n_\ell$ are integers. Then to determine the truth of
$\exists x\colon\psi(x,n_1,\dots,n_\ell)$, it suffices by
Corollary~\ref{C1} to verify the truth of $\psi(n,n_1,\dots,n_\ell)$
for all integers $n$ with
  \begin{align*}
    |n| &\le 2^{\max\Prod(\psi)^{\kappa^{d'}}}\cdot\max\CONST(\psi)\cdot\max\{|n_1|,\dots,|n_\ell|\}\cdot\max\{1,\ell\}\\
        &\le 2^{\max\Prod(\varphi)^{\kappa^{d}}}\cdot\max\CONST(\varphi)\cdot\max\{|n_1|,\dots,|n_\ell|\}\cdot d\\
        &\le D\cdot\max\{|n_1|,\dots,|n_\ell|\}\,,
          \intertext{which, for later purposes, can be bounded by}
        &\le D^2\cdot\max\{|n_1|,\dots,|n_\ell|\}^2\,.
  \end{align*}
Similarly, suppose $\exists^{(q,p)} x\colon\psi(x,y_1,\dots,y_\ell)$
is a subformula of $\varphi$, $d'$ is the quantifier depth of $\psi$,
and $n_1,\dots,n_\ell$ are integers. Then to determine the truth of
$\exists^{(q,p)}x\colon\psi(x,n_1,\dots,n_\ell)$, it suffices by
Corollary~\ref{C2} to verify the truth of $\psi(n,n_1,\dots,n_\ell)$
for all integers $n$ with
  \begin{align*}
    |n| &\le \bigl(2^{\max\Prod(\psi)^{\kappa^{d'+1}}}\cdot\max\CONST(\psi)\cdot\max\{|n_1|,\dots,|n_\ell|\}\cdot\max\{1,\ell\}\bigr)^2\\
        &\le \bigl(2^{\max\Prod(\varphi)^{\kappa^{d+1}}}\cdot\max\CONST(\varphi)\cdot\max\{|n_1|,\dots,|n_\ell|\}\cdot d\bigr)^2\\
        &\le D^2\cdot\max\{|n_1|,\dots,|n_\ell|\}^2\,.
  \end{align*}
By induction, we obtain that all recursive calls of the evaluation
procedure use integers of size at most 
  \begin{align*}
    D^{4^d}
    &=\left(2^{\max\Prod(\varphi)^{\kappa^{d+2}}}\cdot\max\CONST(\varphi)\right)^{4^d}\\
    &\le
    2^{\max\Prod(\varphi)^{\kappa^{cd}}}\cdot\max\CONST(\varphi)^{4^d}\,,
  \end{align*}
where $c$ is some constant. To store any such integer, one needs space
$4^d\log D$. When evaluating a closed formula of quantifier depth $d$,
one has to store at most $d$ variables at once. Therefore we get the following.

\begin{prop}\label{P-FOMOD-}
  Satisfaction of a closed formula $\varphi\in\FOMOD$ of
  quantifier-depth $d$ can be decided in space $O(4^d\cdot \log D)$
  with $D$ given by Equation~\eqref{eq:definition-D}.
\end{prop}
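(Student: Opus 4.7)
The plan is to implement the recursive evaluation scheme already sketched in the paragraph preceding the proposition, and then carefully sum up the memory consumption. Concretely, I would define a recursive procedure $\mathrm{Eval}(\psi,(n_1,\dots,n_\ell))$ that, given a subformula $\psi(y_1,\dots,y_\ell)$ of $\varphi$ together with integer values $n_1,\dots,n_\ell$ for its free variables, returns the truth value of $\psi$ under the corresponding assignment. For atomic $\psi$, evaluation is direct. For Boolean combinations, the procedure recurses on each subformula and combines results, reusing the same workspace. For $\psi=\exists y\colon\psi'(y,\bar y)$, the procedure iterates $y$ through all integers $n$ of absolute value bounded by the quantity supplied by Corollary~\ref{C1} and returns $\true$ as soon as one of the recursive calls $\mathrm{Eval}(\psi',(n,n_1,\dots,n_\ell))$ does. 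For $\psi=\exists^{(q,p)}y\colon\psi'$, I would invoke Corollary~\ref{C2}: first scan the range $C<|n|\le C^2$ and reject if any $n$ satisfies $\psi'$ (condition~(a)), then scan the range $|n|\le C$ while maintaining a counter modulo~$p$ of witnesses, and accept iff the counter equals $q$ modulo $p$.

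Correctness of the procedure is immediate from Corollaries~\ref{C1} and~\ref{C2}. For the space analysis, let $N_k$ denote the maximal absolute value of any integer occurring as an argument at recursive depth~$k$, counted from the outermost call (where $N_0\le 1$, since $\varphi$ is closed). The displayed bounds in the paragraph preceding the proposition give the recurrence $N_{k+1}\le D^2\cdot N_k^2$, which yields by induction $N_k\le D^{2^{k+1}-2}\le D^{4^d}$. Consequently, each integer stored by $\mathrm{Eval}$ fits into $O(4^d\cdot\log D)$ bits. At any moment during execution the work memory consists of at most $d$ active stack frames, each holding a constant number of integers bounded by $N_d$ plus a modulo-$p$ counter of size $O(\log p)=O(\log D)$. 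Summing yields the space bound $O(d\cdot 4^d\cdot\log D)$, which is absorbed into $O(4^d\cdot\log D)$ as stated.

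The subtle point, and the step I would write out most carefully, is the treatment of the $\exists^{(q,p)}$ case: both scans over the ranges $\{n\colon |n|\le C\}$ and $\{n\colon C<|n|\le C^2\}$ must be performed \emph{sequentially}, overwriting each candidate $n$ before trying the next, so that the space used at this level remains of order $\log(C^2)=O(4^d\log D)$; otherwise the size blow-up would be catastrophic. The recursive calls performed inside these scans occur at quantifier depth $d'<d$, so the inductive space bound applies to them with a smaller exponent, and the $d$ nested frames are the only multiplicative source in the final expression. Once this bookkeeping is in place, the proposition follows.
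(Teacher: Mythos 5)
Your proposal matches the paper's intended argument step for step: both implement the recursive evaluation procedure governed by Corollaries~\ref{C1} and~\ref{C2}, derive the recurrence $N_{k+1}\le D^2 N_k^2$ giving $N_k\le D^{2^{k+1}-2}$, and then account for the memory occupied by the (at most $d$) live stack frames. Your remark that the two scans in the $\exists^{(q,p)}$ case must be performed sequentially, reusing the candidate slot and a single modulo-$p$ counter, makes explicit a point the paper leaves implicit and is exactly right.

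The one place you should be more careful is the final absorption step. As written, the claim that $O(d\cdot 4^d\cdot\log D)$ ``is absorbed into'' $O(4^d\cdot\log D)$ is false: the ratio of the two is $d$, which is unbounded. The repair is to use the tighter bound you already computed rather than the uniform relaxation $N_k\le D^{4^d}$. An integer held at recursion depth $k$ fits into $(2^{k+1}-2)\log D$ bits, so the simultaneously live integers occupy at most $\sum_{k=1}^{d}(2^{k+1}-2)\log D < 2^{d+2}\log D$ bits; equivalently, one can note that $d\cdot(2^{d+1}-2)\le 4^d$ for all $d\ge1$ (since $2d\le 2^d$). Either way the total space is $O(4^d\log D)$ — indeed $O(2^d\log D)$ — and the proposition follows. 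Bounding every frame by the worst-case $4^d\log D$ bits and then multiplying by the frame count $d$ overcounts by precisely the factor you were trying to make disappear.
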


Let $\varphi\in\FOMOD$. Then the quantifier depth $d$ is at most
$|\varphi|$. Since coefficients etc.\ are written in binary,
$\max\Prod(\varphi)$ and $\max\CONST(\varphi)$ are bounded by
$2^{|\varphi|}$. Consequently, the proposition shows that satisfaction
of closed formulas $\varphi\in\FOMOD$ can be decided in space doubly
exponential in $|\varphi|$.

Recall that for formulas from $\FOMOD$ we require modulo-counting
quantifiers of the form $\exists^{(t,p)}(y_1,\dots,y_\ell)$ to satisfy
$t\in\bN$ and $\ell=1$. We now show that also without this
restriction, the doubly exponential space bound remains true.

\begin{thm}
  Satisfaction of a closed formula $\varphi\in\FOMODmany$ can be decided
  in space doubly exponential in $|\varphi|$.
\end{thm}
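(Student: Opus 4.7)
The plan is to reduce to the bound for $\FOMOD$ established in Proposition~\ref{P-FOMOD-}, using the translation of Proposition~\ref{P-FOMOD-to-FOMOD-} to eliminate non-unary modulo-counting quantifiers (and term residues) in a way that controls the parameters that the decision procedure actually depends on.

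First, given a closed formula $\varphi\in\FOMODmany$, I would apply Proposition~\ref{P-FOMOD-to-FOMOD-} to construct an equivalent closed formula $\gamma\in\FOMOD$. That construction runs in time doubly exponential in $|\varphi|$, so $|\gamma|$ is at most doubly exponential in $|\varphi|$ and both the computation and the storage of $\gamma$ fit into space doubly exponential in $|\varphi|$. The decisive point is the additional guarantees furnished by Proposition~\ref{P-FOMOD-to-FOMOD-}: $\qd(\gamma)\le\qd(\varphi)$, $\max\Prod(\gamma)\le\max\Prod(\varphi)$, and $\max\CONST(\gamma)\le\max\CONST(\varphi)$. Since the constants occurring in $\varphi$ are written in binary, this gives $\qd(\gamma)\le|\varphi|$ and $\max\Prod(\gamma),\max\CONST(\gamma)\le 2^{|\varphi|}$.

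Next I would run the decision procedure of Proposition~\ref{P-FOMOD-} on $\gamma$. Its space consumption is $O(4^{\qd(\gamma)}\cdot\log D)$ where $D=2^{\max\Prod(\gamma)^{\kappa^{\qd(\gamma)+2}}}\cdot\max\CONST(\gamma)$. Substituting the above bounds yields
\[
  \log D \le \bigl(2^{|\varphi|}\bigr)^{\kappa^{|\varphi|+2}}+|\varphi|
         = 2^{|\varphi|\cdot\kappa^{|\varphi|+2}}+|\varphi|\,,
\]
which, since $\kappa$ is a fixed constant, is doubly exponential in $|\varphi|$. Multiplying by $4^{\qd(\gamma)}\le 4^{|\varphi|}$ preserves the doubly exponential bound, and this dominates the doubly exponential space already used to compute and store $\gamma$.

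The one subtlety — and the reason this short argument suffices — is that the space bound of Proposition~\ref{P-FOMOD-} is stated in terms of $\qd$, $\max\Prod$, and $\max\CONST$, not in terms of formula size. Because Proposition~\ref{P-FOMOD-to-FOMOD-} preserves precisely these three parameters, the potentially doubly exponential blow-up from $|\varphi|$ to $|\gamma|$ does not leak into the final estimate: applying Proposition~\ref{P-FOMOD-} to $\gamma$ still yields a space bound doubly exponential in $|\varphi|$ rather than in $|\gamma|$, which is exactly what the theorem asks for.
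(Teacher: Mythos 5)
Your proof is correct and follows essentially the same route as the paper: translate to $\FOMOD$ via Proposition~\ref{P-FOMOD-to-FOMOD-}, then invoke the space bound of Proposition~\ref{P-FOMOD-}, observing that the relevant parameters ($\qd$, $\max\Prod$, $\max\CONST$) are preserved by the translation so that the bound is doubly exponential in $|\varphi|$ rather than in $|\gamma|$. You make this last point (which the paper states rather tersely) more explicit, but the argument is the same.
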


\begin{proof}
  Let $\varphi\in\FOMODmany$ be a closed formula. By
  Proposition~\ref{P-FOMOD-to-FOMOD-}, we can compute in doubly exponential
  time an equivalent closed formula $\gamma\in\FOMOD$ without changing the
  sets of coefficients, moduli, or constants and without increasing
  the quantifier depth. Because of the time bound, this construction
  requires at most doubly exponential space and $|\gamma|$ is at most
  doubly exponential in $|\varphi|$.

  By Proposition~\ref{P-FOMOD-}, validity of $\gamma$ can be decided
  using space $O(4^{\qd(\gamma)}\cdot\log D)$ with $D$ given by
  Equation~\eqref{eq:definition-D}. Since $\gamma$ and $\varphi$ agree
  on the sets of coefficients etc.\ and on the quantifier depth, this
  value is doubly exponential in the size of $\varphi$.
\end{proof}

Since Proposition~\ref{P-CMOD-to-FOMOD} allows to translate formulas
from $\fullLogic$ into equivalent formulas from $\FOMODmany$ in
polynomial time, we also get the corresponding result for the logic
$\fullLogic$.

\begin{cor}
  Satisfaction of a closed formula $\varphi\in\fullLogic$ can be
  decided in space doubly exponential in $|\varphi|$.
\end{cor}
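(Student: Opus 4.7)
The plan is to chain together the two results already established: Proposition~\ref{P-CMOD-to-FOMOD} translates a formula from $\fullLogic$ into an equivalent one from $\FOMODmany$ in polynomial time, and the preceding theorem decides satisfaction of closed $\FOMODmany$-formulas in doubly exponential space. Composing the two should immediately yield the claim, provided the blow-ups compose in the right way.

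Concretely, given a closed formula $\varphi \in \fullLogic$, I would first apply Proposition~\ref{P-CMOD-to-FOMOD} to obtain an equivalent closed formula $\psi \in \FOMODmany$. Since the construction runs in polynomial time, we have $|\psi| \le \mathrm{poly}(|\varphi|)$, and in particular $|\psi|$ can be written down in polynomial space. Satisfaction of $\varphi$ is equivalent to satisfaction of $\psi$, so it suffices to decide the latter.

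Next I would invoke the preceding theorem on $\psi$: its satisfaction can be decided in space $2^{2^{\mathrm{poly}(|\psi|)}}$. Substituting $|\psi| \le \mathrm{poly}(|\varphi|)$ gives a bound of $2^{2^{\mathrm{poly}(|\varphi|)}}$, i.e., doubly exponential in $|\varphi|$, as required. The translation step from $\varphi$ to $\psi$ itself uses only polynomial space and time, so it is dominated by the subsequent decision procedure.

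The only point that requires mild care is the composition of bounds: a polynomial inside a doubly exponential function is still doubly exponential, so no real obstacle arises. There is no hard step in this corollary beyond invoking the two previous results in the correct order; all the substantive work has already been done in Proposition~\ref{P-CMOD-to-FOMOD} (threshold/exact quantifier elimination) and in the preceding theorem (which itself bundles the non-unary modulo-quantifier elimination from Proposition~\ref{P-FOMOD-to-FOMOD-} together with Proposition~\ref{P-FOMOD-}).
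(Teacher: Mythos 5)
Your proposal is correct and follows the same route as the paper: the paper's own proof of this corollary is precisely the two-step composition you describe, invoking Proposition~\ref{P-CMOD-to-FOMOD} to reduce to $\FOMODmany$ in polynomial time and then appealing to the preceding theorem for that logic. Your remark on the benign composition of a polynomial blow-up with a doubly exponential space bound is the only substantive point, and you handled it correctly.
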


Note that this complexity matches the best known upper space bound for
Presburger arithmetic without modulo-counting quantifiers
from~\cite{FerR79}. From \cite{Ber80}, we know that Presburger
arithmetic can be decided in alternating doubly exponential time with
linearly many alternations (and our Corollary~\ref{C-complexity-of-C}
extends this to the logic $\C$). Our handling of the modulo-counting
quantifier requires us to count witnesses of bounded size. As the
number of potential witnesses is triply exponential, we do not see how
to do this using an alternating Turing machine in only doubly
exponential time.

\nocite{Pre91}

\bibliographystyle{alphaurl}

\bibliography{journal}
\end{document}